\definecolor{Myblue}{rgb}{0,0,0.6}
\tikzset{
    string/.style={draw=#1, postaction={decorate}, decoration={markings,mark=at position .51 with {\arrow[draw=#1]{>}}}},
    costring/.style={draw=#1, postaction={decorate}, decoration={markings,mark=at position .51 with {\arrow[draw=#1]{<}}}},
    ostring/.style={draw=#1, postaction={decorate}, decoration={markings,mark=at position .47 with {\arrow[draw=#1]{>}}}},
    ustring/.style={draw=#1, postaction={decorate}, decoration={markings,mark=at position .56 with {\arrow[draw=#1]{>}}}},
    oostring/.style={draw=#1, postaction={decorate}, decoration={markings,mark=at position .43 with {\arrow[draw=#1]{>}}}},
    uustring/.style={draw=#1, postaction={decorate}, decoration={markings,mark=at position .59 with {\arrow[draw=#1]{>}}}},
    directed/.style={string=blue!50!black}, 
    odirected/.style={ostring=blue!50!black}, 
    udirected/.style={ustring=blue!50!black}, 
    oodirected/.style={oostring=blue!50!black}, 
    uudirected/.style={uustring=blue!50!black},     
    redirected/.style={costring= blue!50!black},
    redirectedgreen/.style={costring= green!50!black},
    directedgreen/.style={string= green!50!black},
    redirectedlightgreen/.style={costring= green!65!black},
    directedlightgreen/.style={string= green!65!black},
}
\tikzset{-dot-/.style={decoration={
  markings,
  mark=at position 0.5 with {\fill circle (1.875pt);}},postaction={decorate}}}
\tikzset{
	Fdot/.style={circle, draw, fill, inner sep=0pt}, 
	Odot/.style={circle, draw, inner sep=0.1pt, minimum size=0.1cm}
	}
\def\nicehalfpalecolourscheme{\shadedraw[top color=blue!22, bottom color=blue!22, draw=white]}
\def\nicenotpalecolourscheme{\shadedraw[top color=blue!32, bottom color=blue!32, draw=white]}
\def\nicecolourscheme{\shadedraw[top color=blue!22, bottom color=blue!22, draw=blue!22]}
\def\nicepalecolourscheme{\shadedraw[top color=blue!12, bottom color=blue!12, draw=white]}
\newcommand{\raisemath}[1]{\mathpalette{\raisem@th{#1}}}
\newcommand{\raisem@th}[3]{\raisebox{#1}{$#2#3$}}
\newcommand{\E}{\text{e}}
\newcommand{\I}{\text{i}}
\newcommand{\B}{\mathcal{B}}
\newcommand{\Beq}{\B_{\mathrm{eq}}}
\newcommand{\C}{\mathds{C}}
\newcommand{\Z}{\mathds{Z}}
\newcommand{\IP}{\mathds{P}}
\def\1{\ifmmode\mathrm{1\!l}\else\mbox{\(\mathrm{1\!l}\)}\fi}
\newcommand{\be}{\begin{equation}}
\newcommand{\ee}{\end{equation}}
\newcommand{\bes}{\begin{equation*}}
\newcommand{\ees}{\end{equation*}}
\newcommand{\Hom}{\operatorname{Hom}}
\newcommand{\End}{\operatorname{End}}
\newcommand{\modu}{\operatorname{mod}}
\def\LG{\mathcal{LG}}
\newcommand{\hmf}{\operatorname{hmf}}
\newcommand{\ev}{\operatorname{ev}}
\newcommand{\tev}{\widetilde{\operatorname{ev}}}
\newcommand{\coev}{\operatorname{coev}}
\newcommand{\tcoev}{\widetilde{\operatorname{coev}}}
\def\lra{\longrightarrow}
\def\lmt{\longmapsto}
\DeclareMathOperator{\tr}{tr}
\DeclareMathOperator{\str}{str}
\DeclareMathOperator{\Res}{Res}
\newcommand{\HIA}{\Hom(I,A)}
\newcommand{\ZA}{Z_A(\Hom(I,A))}
\newcommand{\gZA}{\!Z_A^\gamma(\Hom(I,A))}
\newcommand{\gA}{{}_{\gamma_A}A}
\newcommand{\Aginv}{A_{\gamma_A^{-1}}}
\newcommand{\picc}{\pi^{(\text{c,c})}_A}
\newcommand{\pirr}{\pi^{\text{RR}}_A}
\newcommand{\im}{\operatorname{im}}
\DeclareMathOperator*{\eq}{=}
\DeclareMathOperator*{\congscript}{\cong}
\newcommand{\specflow}{\mathcal U_{-\frac{1}{2},-\frac{1}{2}}}
\newcommand{\Hil}{\mathcal{H}}
\newcommand{\Hcc}{\mathcal{H}_{\text{(c,c)}}^A}
\newcommand{\Hrr}{\mathcal{H}_{\text{RR}}^A}
\newcommand{\HccnoA}{\mathcal{H}_{\text{(c,c)}}}
\newcommand{\HrrnoA}{\mathcal{H}_{\text{RR}}}
\def\Cong{C_g}
\def\Centg{N_g}
\def\alphaKK{\alpha^{\{K\}}}
\def\alphagKg{\alpha_g^{K(g)}}
\newcommand{\dX}{{}^\dagger\hspace{-1.8pt}X}
\newcommand{\dY}{{}^\dagger\hspace{-0.3pt}Y}
\newcommand{\dPhi}{{}^\dagger\hspace{-0.9pt}\Phi}
\newcommand\arxiv[2]      {\href{http://arXiv.org/abs/#1}{#2}}
\newcommand\doi[2]        {\href{http://dx.doi.org/#1}{#2}}
\newcommand\httpurl[2]    {\href{http://#1}{#2}}
\newcommand\bhref[3]   {\href{#1}{#2} \href{#1}{#3}}
\newcommand\bdoi[3]   {\doi{#1}{#2} \doi{#1}{#3}}
\newcommand\bhttpurl[3]   {\httpurl{#1}{#2} \httpurl{#1}{#3}}
\theoremstyle{definition}
\newtheorem{definition}{Definition}
\newtheorem{proposition}[definition]{Proposition}
\newtheorem{lemma}[definition]{Lemma}
\newtheorem{corollary}[definition]{Corollary}
\newtheorem{remark}[definition]{Remark}
\newtheorem{example}[definition]{Example}
\numberwithin{equation}{section}
\numberwithin{definition}{section}
\numberwithin{figure}{section}
\newcommand\void[1]{}
\begin{document}

\title{Orbifolds and topological defects}

\author{Ilka Brunner$^*$ \quad Nils Carqueville$^\dagger$ \quad Daniel Plencner$^*$
\\[0.5cm]
 \normalsize{\tt \href{mailto:ilka.brunner@physik.uni-muenchen.de}{ilka.brunner@physik.uni-muenchen.de}} \\
  \normalsize{\tt \href{mailto:nils.carqueville@scgp.stonybrook.edu}{nils.carqueville@scgp.stonybrook.edu}} \\
  \normalsize{\tt \href{mailto:daniel.plencner@physik.uni-muenchen.de}{daniel.plencner@physik.uni-muenchen.de}}\\[0.1cm]
  {\normalsize\slshape $^*$Arnold Sommerfeld Center for Theoretical Physics, LMU M\"unchen}\\[-0.1cm]
  {\normalsize\slshape $^*$Excellence Cluster Universe, Garching}\\[-0.0cm]
  {\normalsize\slshape $^\dagger$Simons Center for Geometry and Physics, Stony Brook}\\[-0.1cm]
}
\date{}
\maketitle

\vspace{-11.8cm}
\hfill {\scriptsize LMU-ASC 49/13}

\vspace{11cm}

\begin{abstract}
We study orbifolds of two-dimensional topological field theories using defects. If the TFT arises as the twist of a superconformal field theory, 
we recover results on the Neveu-Schwarz and Ramond sectors of the orbifold theory as well as bulk-boundary correlators from a novel, universal perspective. 
This entails a structure somewhat weaker than ordinary TFT, which however still describes a sector of the underlying conformal theory.
The case of B-twisted Landau-Ginzburg models is discussed in detail, where we compute charge vectors and superpotential terms for B-type branes. 

Our construction also works in the absence of supersymmetry and for generalised ``orbifolds'' that need not arise from symmetry groups. In general this involves a natural appearance of Hochschild (co)homology in a 2-categorical setting, in which among other things we provide simple presentations of Serre functors and a further generalisation of the Cardy condition. 
\end{abstract}

\thispagestyle{empty}
\newpage

\tableofcontents

\section{Introduction}\label{sec:introduction}

Orbifolds of two-dimensional quantum field theories can naturally be understood in terms of defect operators. More precisely, for every orbifold group~$G$ there is an associated symmetry defect~$A_G$, and correlators in a $G$-orbifolded theory are computed in the unorbifolded theory by covering every worldsheet with a trivalent network of $A_G$-defects. This construction is under good control for rational conformal field theories and topological field theories, carried out in detail in~\cite{ffrs0909.5013} and~\cite{cr1210.6363}, respectively. Furthermore, these works show that orbifolds have a natural generalisation: one may abstract from the symmetry group~$G$ and replace the defect~$A_G$ by any other defect~$A$ such that the correlators do not depend on the choice of triangulation by the $A$-network. 

One of the purposes of the present paper is to apply these ideas in the context of theories with $\mathcal N=(2,2)$ supersymmetry. 
For this we develop the theory of equivariant completion of~\cite{cr1210.6363} further (without assuming any prior familiarity with it), augmenting it to encode information about both Neveu-Schwarz and Ramond sectors in the associated CFT, and initiating a study of its relation to fully extended TFT. Our motivation are Landau-Ginzburg models, whose conventional orbifold theory we recover from the universal defect perspective, leading to new and efficient techniques to compute orbifold bulk/boundary correlators and hence RR brane charges and superpotential terms. Furthermore, we explain how to compute defect actions on twisted bulk fields. On the other hand, our constructions are not limited to Landau-Ginzburg models but are naturally described in a general 2-categorical setting which applies equally well to A- or B-twisted sigma models, among others. At this level our results are best phrased as statements about Hochschild (co)homology, Serre functors, and related algebraic objects. 

In the bulk of the paper we separate the more concrete constructions of Landau-Ginzburg models in Section~\ref{sec:ordinaryLGorbs} from the development of the general theory in Section~\ref{sec:GOs}. The former also explains some of the intuition and applications and may appeal more to physicists, while the latter might be received more favourably by mathematicians. With very few exceptions, both sections can be read independently of one another. However, also in the case of orbifold theory a symbiosis provides more than the sum of two parts.

\medskip

As argued in~\cite{IntriligatorVafa1990} the $\mathcal N = (2,2)$ superconformal field theories associated via renormalisation group flow to affine Landau-Ginzburg models contain spectral flow as a field of the theory, leading to an isomorphism between Neveu-Schwarz and Ramond sectors.\footnote{Spectral flow is always an isomorphism on the level of representations of the $\mathcal N = (2,2)$ Virasoro algebra, but a choice of modular invariant for a CFT may render this isomorphism void.} 
In particular the state space of the topologically twisted Landau-Ginzburg model which describes the associated (c,c) fields may equivalently be viewed as describing RR ground states. 
This is no longer true for arbitrary $G$-orbifolds, and the question arises what information about the Ramond sector is still encoded in the unorbifolded topological theory with defects. 

To answer this question we recall that before orbifold projection the $g$-twisted bulk sector can be identified with the space of defect junction fields between the invisible defect~$I$ and a certain defect~${}_g I$ for every $g \in G$. The defect~${}_g I$ is obtained from~$I$ by ``twisting'' the latter by~$g$, so the action of~$g$ on a bulk field is implemented by wrapping~${}_g I$ around its insertion. The defect encoding the whole group~$G$ is simply $A_G = \bigoplus_{g\in G} {}_g I$, and a superposition of fields in the twisted sectors is an element~$\alpha$ of $\Hom(I,A_G)$, i.\,e.~a field living at the endpoint of the symmetry defect~$A_G$. 

We will see (in~\eqref{eq:piccAG} and~\eqref{eq:piRRAG} below) that there are essentially two natural ways of wrapping the defect~$A_G$ around a twist field~$\alpha$. More importantly we shall find that these two potentially different actions of~$G$ on $\Hom(I,A_G)$ precisely project to (c,c) fields and RR ground states, in complete agreement with~\cite{IntriligatorVafa1990}. In other words, these two sectors are obtained by two actions of one and the same defect~$A_G$. We also give a simple criterion for when these two spaces are isomorphic, extend the construction to the boundary and defect sectors, and apply our results to compute RR brane charges, complementing the work of~\cite{w0412274}. All this is done using defect constructions in the \textsl{unorbifolded} theory, where we have conceptual clarity and full computational control. 

As already mentioned our construction does not depend on the fact that~$A_G$ arises via a group~$G$. In fact any defect~$A$ that comes with the structure of a separable Frobenius algebra in a suitable 2-category can be used as the only input to our construction -- such defects take over and extend the role of symmetry groups. At this level of generality we will realise our generalised (c,c) fields and RR ground states as (new presentations of) Hochschild cohomology and homology of~$A$, respectively, and the physical picture will translate into various relations between them. We also investigate in detail to what extent such orbifolds give rise to the full structure of open/closed TFTs, in particular providing a simple proof of yet another generalisation of the Cardy condition. 

\medskip

The rest of this paper is organised as follows. We begin with a concise review of the standard approach to orbifold Landau-Ginzburg models in Section~\ref{subsec:conventionalorbi}. Section~\ref{subsec:adjointdefects} collects the necessary background on orientation reversal of defects in such theories; the associated adjunctions are one of the main ingredients in our construction, and we illustrate how to use them in practice. The centre of this part of the paper is Section~\ref{subsec:defectorbi} where we explain how to describe Neveu-Schwarz and Ramond sectors via defect actions and compute brane charges. Some technical details are divested to Appendix~\ref{TheAppendix}, while Section~\ref{subsec:CFTandEffectiveFieldTheory} offers a discussion of our construction from the perspective of $\mathcal N = (2,2)$ CFT. 

Section~\ref{sec:GOs} develops the theory on an abstract 2-categorical level, for which we gather the relevant background in Section~\ref{subsec:BicategoricalAlgebra}. This is followed by an extention of our earlier results in Section~\ref{subsec:gentwist}, where we construct generalised Neveu-Schwarz and Ramond sectors, discuss their relation and various characterisations, including Hochschild (co)homology. The algebra involved is fairly simple, and we find this point of view conceptually pleasing and very clear. Section~\ref{subsec:ocTFT} explains the relation to open/closed TFT which in particular entails a construction of Serre functors as simple twists and a generalisation of the Cardy condition. We also sketch a relation to extended TFT in the sense of Lurie, offering an interpretation of our construction in the absence of supersymmetry. Finally in Section~\ref{subsec:DefectsFunctoriality} we discuss the compatibility of defect fusion (or tensor products) and action on bulk fields (or Hochschild 
homology) in our generalised orbifolds.

\section{Ordinary Landau-Ginzburg orbifolds via defects}\label{sec:ordinaryLGorbs}
 
In this section we discuss Landau-Ginzburg orbifolds by finite symmetry groups. We start in Section~\ref{subsec:conventionalorbi} with a concise review of the conventional approach, following~\cite{IntriligatorVafa1990, add0401, br0712.0188, cr1006.5609}. In Section~\ref{subsec:adjointdefects} we collect relevant facts about orientation reversal and adjunction for defects, and we offer a first taste of their natural diagrammatic calculus. This allows us to recast conventional orbifolds solely in universal defect language, as we explain in Section~\ref{subsec:defectorbi} (and in much greater generality in Section~\ref{sec:GOs}). As a byproduct we derive a general expression for RR brane charges, and in Section~\ref{subsec:CFTandEffectiveFieldTheory} we discuss further relations to conformal field theory and effective four-dimensional field theory.

\subsection{Review of conventional description}\label{subsec:conventionalorbi} 

Landau-Ginzburg models are two-dimensional $\mathcal N=(2,2)$ supersymmetric quantum field theories that admit a Lagrangian description. A lot of the properties of the conformal fixed point of their renormalisation group flow are believed to be directly encoded in a single ingredient, to wit the F-term potential which we take to be a polynomial $W\in R = \C[x_1,\ldots,x_n]$. For example, if~$W$ is quasi-homogeneous then the Jacobian 
\be\label{eq:Jacobian}
H_e = R/(\partial W)
\ee 
is believed to be isomorphic to the ring of chiral primary (c,c) fields of the associated $\mathcal N=(2,2)$ conformal field theory. 
In this subsection we will assume some familiarity with basic facts about Landau-Ginzburg models and superconformal field theory as contained e.\,g.~in~\cite{lvw1989, mirror}, 
and we shall sometimes follow the custom in parts of the literature of blurring the lines between Landau-Ginzburg models and their conformal fixed points. 

One of the central notions in the study of $\mathcal N=(2,2)$ CFT is that of spectral flow $\mathcal U_{\theta,\bar\theta}$, which is a family of isomorphisms between representations of the superconformal algebra in the Neveu-Schwarz and Ramond sectors. In particular, $\specflow$ maps (c,c) fields to RR ground states. Hence these two spaces are isomorphic if $\specflow$ is a local operator, i.\,e.~if it corresponds to a vector that is actually present in the state space of the CFT under consideration. This is always true for unorbifolded Landau-Ginzburg models, and as a consequence in this case (c,c) fields and RR ground states can be respectively represented as linear combinations of
$$
\prod_{i=1}^n x_i^{l_i} | 0 \rangle 
\, , \quad
\prod_{i=1}^n x_i^{l_i} | 0 \rangle_{\text{RR}} 
\quad \text{mod } \partial W \, , 
$$
where $| 0 \rangle$ is the NS vacuum and $| 0 \rangle_{\text{RR}} = \specflow | 0 \rangle$.

\subsubsection*{Bulk sector} 

We now turn to the discussion of orbifolds by a finite symmetry group~$G$. For any two-dimensional quantum field theory its orbifold bulk space decomposes into a direct sum of the $G$-invariant (or ``untwisted'') part of the original bulk space and ``$g$-twisted sectors'', i.\,e.~$g$-invariant subspaces of additional representations of the symmetry algebra quotiented by~$G$. On the level of vertex operators~$\phi$, $g$-twisted fields are those with the property $\phi(\E^{2\pi\I} \sigma) = g \phi(\sigma)$. Since every $h\in G$ acts as an isomorphism and $h\phi(\E^{2\pi\I} \sigma) = hg \phi(\sigma) = hg h^{-1}h \phi(\sigma)$ we find that the $g$-twisted sector is isomorphic to the $hg h^{-1}$-twisted sector. Hence in order to avoid redundancies one may choose a set of representatives $\{ g \}$ for the conjugacy classes~$C_g$ in~$G$ and define the orbifold bulk space to be 
\be\label{eq:orbiprojPrime}
\Hil' = \bigoplus_{\{ g \}} P'_{g} H_{g} 
\quad\text{with}\quad 
P'_g = \frac{1}{|N_g|} \sum_{h\in N_g} h \, , 
\ee
where $H_g$ is the $g$-twisted sector, $N_g = \{ h\in G \, | \, gh=hg \}$ is the normaliser, and~$P'_g$ projects to $g$-invariant states. 

Alternatively, one may prefer not to choose representatives for conjugacy classes and consider the bigger space 
\be\label{eq:allTwistedSectorsBeforeOrbifoldProjection}
H = \bigoplus_{g\in G} H_g
\ee
instead. Applying the projector
\be\label{eq:orbifoldProjectorP}
P = \sum_{g\in G} P_g 
\, , \quad 
P_g = \frac{1}{|G|} \sum_{h\in G} h \big|_{H_g}
\ee
we obtain the orbifold bulk space 
\be\label{eq:orbiproj}
\Hil = PH 
\ee
which is isomorphic to~$\Hil'$ in~\eqref{eq:orbiprojPrime} if~$G$ acts unitarily. For convenience we provide the details of this isomorphism in Appendix~\ref{app:HHPrime}. 

\medskip

In the case of Landau-Ginzburg models the above construction was first carried out in~\cite{v1989, IntriligatorVafa1990}; we shall briefly review the results relevant for our purposes. For a given potential $W \in R = \C[x_1,\ldots,x_n]$ let~$G$ be a finite symmetry, i.\,e.~a subgroup of $\{ g\in \operatorname{Aut}(R) \, | \, g(W) = W \}$. Following~\cite{IntriligatorVafa1990} we further assume that the action of~$G$ is diagonal in the sense that there are rational numbers~$\Theta_i^g$ such that $g\in G$ sends~$x_i$ to $\E^{2\pi\I \Theta_i^g} x_i$. (In later sections we will drop this assumption.) Then the $g$-twisted sectors of (c,c) fields and RR ground states are respectively spanned by
\be\label{eq:gtwistedLG}
\prod_{\Theta_i^g \in \Z} x_i^{l_i} | 0 \rangle_{\text{(c,c)}}^g
\, , \quad
\prod_{\Theta_i^g \in \Z} x_i^{l_i} | 0 \rangle_{\text{RR}}^g
\quad \text{mod } \partial \overline W \, , 
\ee
where $| 0 \rangle_{\text{(c,c)}}^g$ is a state of minimal conformal weight in the $g$-twisted NS sector, $| 0 \rangle_{\text{RR}}^g = \specflow | 0 \rangle_{\text{(c,c)}}^g$, and~$\overline W$ is obtained from~$W$ be setting all variables to zero which are not $g$-invariant, i.\,e.~$\overline W$ depends only on those~$x_i$ with $\Theta_i^g \in \Z$. To determine the orbifold projected spaces $\HccnoA$ and $\HrrnoA$ of~\eqref{eq:orbiproj} one has to work out the actions $\rho_{\text{(c,c)}}$ and $\rho_{\text{RR}}$ of group elements on the states~\eqref{eq:gtwistedLG}. Using standard supersymmetry arguments and modular invariance of the Ramond sector partition function together with spectral flow, these actions were found to be 
\begin{align}
\rho_{\text{(c,c)}} (h) \prod_{\Theta_i^g \in \Z} x_i^{l_i} | 0 \rangle_{\text{(c,c)}}^g 
& = \det (h)^{-1} \det (h|_g) \, \E^{2\pi\I \sum_{\Theta_i^g \in \Z} \Theta_i^h l_i} \cdot \prod_{\Theta_i^g \in \Z} x_i^{l_i} | 0 \rangle_{\text{(c,c)}}^g \, , \label{eq:hongcc} \\ 
\rho_{\text{RR}} (h) \prod_{\Theta_i^g \in \Z} x_i^{l_i} | 0 \rangle_{\text{RR}}^g
& = \det (h|_g) \, \E^{2\pi\I \sum_{\Theta_i^g \in \Z} \Theta_i^h l_i} \cdot \prod_{\Theta_i^g \in \Z} x_i^{l_i} | 0 \rangle_{\text{RR}}^g \label{eq:hongRR} 
\end{align}
where $\det (h)$ (respectively $\det (h|_g)$) is the determinant of the matrix representing the action of~$h$ on all variables~$x_i$ (respectively only on $g$-invariant variables).\footnote{Note that it was found in~\cite{IntriligatorVafa1990} that a general $G$-orbifold involves a choice of discrete torsion $\varepsilon(h,g)$ and the data~$K_g$ of a $(-1)^{F_{\text{s}}}$-orbifold. If these are nontrivial then~\eqref{eq:hongcc}, \eqref{eq:hongRR} must be augmented to~\cite[(3.20), (3.21)]{IntriligatorVafa1990}. We shall discuss this along with the general theory of discrete torsion for defects in Landau-Ginzburg models in a separate paper~\cite{BCPdiscretetorsion}.}

Using the above actions the orbifold projectors~$P_g$ of~\eqref{eq:orbifoldProjectorP} can be straightforwardly applied to obtain the spaces $\HccnoA$, $\HrrnoA$ as the images of the projectors
\be\label{eq:ccAndRRprojectors}
\sum_{g\in G} \frac{1}{|G|} \sum_{h\in G} \rho_{\text{(c,c)}}(h)
\, , \quad
\sum_{g\in G} \frac{1}{|G|} \sum_{h\in G} \rho_{\text{RR}}(h) \, , 
\ee
respectively. In contrast to the case of the unorbifolded theory there is however no general reason for them to be isomorphic. Indeed, this is precisely the case if $| 0 \rangle_{\text{RR}} = \specflow | 0 \rangle$ survives the orbifold projection, and one finds
\be\label{eq:HccVersusHRR}
\HccnoA \cong \HrrnoA 
\quad\text{if}\quad 
\det (g) = 1 \quad\text{for all } g \in G \, . 
\ee
This condition is necessary for the conformal field theory to be a string vacuum, which is often presented as a Calabi-Yau compactification.

\subsubsection*{Boundary and defect sector} 

Again we start with a brief review of the unorbifolded case. Boundary conditions in B-twisted Landau-Ginzburg models are described by matrix factorisations~\cite{kl0210, bhls0305, l0312}. By a matrix factorisation of a potential $W \in R = \C[x_1,\ldots,x_n]$ we mean a finitely generated $\Z_2$-graded free $R$-module~$Q$ together with an odd $R$-linear operator~$d_Q$ such that $d_Q^2 = W \cdot 1_Q$. These form the objects of a category $\hmf(R,W)$ whose morphisms describe boundary operators. Given two matrix factoriations $Q,P$ a map in $\hmf(R,W)$ is a class in the BRST cohomology of the differential that sends $\Phi \in \Hom_R(Q,P)$ to $d_P \Phi - (-1)^{|\Phi|} \Phi d_Q$. 

Similarly, defects between Landau-Ginzburg models with potentials $W\in R$ and $V \in S = \C[z_1,\ldots,z_m]$ are described by matrix factorisations of $V-W$~\cite{br0707.0922}. The fusion of defects $X\in \hmf(S\otimes_\C R, V-W)$ and $Y \in \hmf(T\otimes_\C S, U-V)$ is given by
\be\label{eq:MFfusion}
Y \otimes X \equiv Y \otimes_S X 
\, , \quad
d_{Y\otimes X} = d_Y \otimes_S 1_X + 1_Y \otimes_S d_X \, . 
\ee
Note that while $(d_{Y\otimes X})^2 = (U-W) \cdot 1_{Y\otimes X}$, the module $Y\otimes X$ is of infinite rank over $T \otimes_\C R$ unless $S=\C$. However, $Y\otimes X$ is isomorphic to a finite-rank matrix factorisation in $\hmf(T\otimes_\C R, U-W)$ as explained in~\cite{dm1102.2957} (see~\cite{khovhompaper} for a computer implementation), so that fusion is well-defined after idempotent completion. 

The unit for the fusion product is the invisible defect. For a Landau-Ginzburg potential $W\in R = \C[x]$ it is given by the Koszul matrix factorisation
\be\label{eq:IW}
I_W = \bigwedge \Big( \bigoplus_{i=1}^n \C[x,x'] \theta_i \Big) 
\, , \quad 
d_{I_W} = \sum_{i=1}^n \left( (x_i - x'_i) \cdot \theta_i^* + \partial_{[i]}^{x,x'} W \cdot \theta_i \wedge (-) \right)
\ee
where by definition 
$$
\partial_{[i]}^{x,x'} W = \frac{W(x'_1,\ldots,x'_{i-1}, x_i, \ldots, x_n) - W(x'_1,\ldots,x'_{i}, x_{i+1}, \ldots, x_n)}{x_i - x'_i}
$$
is the $i$-th difference quotient, and the anticommuting variables~$\theta_i$ are generators of the exterior algebra~$I_W$. As expected the endomorphisms of the invisible defect in $\hmf(\C[x,x'], W(x)-W(x'))$ coincide with the space of bulk fields~\eqref{eq:Jacobian}. 

That the invisible defect is the fusion unit means that for every defect $X\in \hmf(S\otimes_\C R, V-W)$ there are natural isomorphisms 
\be\label{eq:lambdarho}
\lambda_X : I_V \otimes X \lra X 
\, , \quad
\rho_X : X\otimes I_W \lra X \, . 
\ee
Explicitly they are given by first projecting to the $\theta$-degree zero component, followed by the multiplication in~$S$ and~$R$, respectively. Of course as $(S\otimes_\C R)$-linear maps~$\lambda_X$ and~$\rho_X$ are not isomorphisms, but they are up to homotopy, i.\,e.~in $\hmf(S\otimes_\C R, V-W)$ where their inverses read~\cite{cm1208.1481}
\begin{align}
\lambda^{-1}_X(e_i) 
& = \sum_{l \geqslant 0} \sum_{a_1 < \cdots < a_l} \sum_j
\theta_{a_1} \ldots \theta_{a_l} 
\left\{ \partial^{z,z'}_{[a_l]}d_X \ldots \partial^{z,z'}_{[a_1]}d_X \right\}_{ji} \otimes e_{j}
\, , \label{eq:lambdainverse}
\\
\rho^{-1}_X(e_i) 
& = \sum_{l \geqslant 0} \sum_{a_1 < \cdots < a_l} \sum_j (-1)^{\binom{l}{2} + l|e_i|} e_j \otimes \left\{ \partial^{x,x'}_{[a_1]}d_X \ldots \partial^{x,x'}_{[a_l]}d_X \right\}_{ji} \theta_{a_1} \ldots \theta_{a_l} \label{eq:rhoinverse}
\end{align}
with $\{e_i\}$ a basis for~$X$. 

\medskip

Boundary conditions and defects of $G$-orbifolds are given by $G$-equivariant matrix factorisations~\cite{add0401, br0712.0188}. To describe them, let us write~${}_g (-)$ for the functor that twists the $R$-action on a left $R$-module~$X$ by twisting with the map $g^{-1} \in \operatorname{Aut}(R)$, i.\,e.~the action becomes $(r,m) \mapsto g^{-1}(r) \ldotp m$ for $r\in R$, $m\in X$; for example, applied to the unit~\eqref{eq:IW} it produces a matrix factorisation with the same underlying module and twisted differential
\be\label{eq:gIdefect}
d_{{}_g (I_W)} = \sum_{i=1}^n \left( (g(x_i) - x'_i) \cdot \theta_i^* + \partial_{[i]}^{x,x'} W(g(x),x') \cdot \theta_i \wedge (-) \right) . 
\ee
With this notation objects in the $G$-equivariant category $\hmf(R,W)^G$ are matrix factorisations $(Q,d_Q)$ of~$W$ together with a set of isomorphisms $\{ \gamma_g : {}_g Q \rightarrow Q \}_{g\in G}$ such that $\gamma_e = 1_Q$ and 
$$
\begin{tikzpicture}[
			     baseline=(current bounding box.base), 
			     >=stealth,
			     descr/.style={fill=white,inner sep=2.5pt}, 
			     normal line/.style={->}
			     ] 
\matrix (m) [matrix of math nodes, row sep=3em, column sep=2.5em, text height=1.5ex, text depth=0.25ex] {%
{}_{gh} Q && {}_g Q && Q \\
};
\path[font=\scriptsize] (m-1-1) edge[->] node[auto] {$ {}_g (\gamma_h) $} (m-1-3)
				  (m-1-3) edge[->] node[auto] {$ \gamma_g $} (m-1-5); 
\path[font=\scriptsize] 
				 (m-1-1) edge[->, bend right=35] node[auto] {$ \gamma_{gh} $} (m-1-5); 
\end{tikzpicture}
$$
commutes.\footnote{This is simply another way of expressing the familiar condition $\gamma(g) \, d_Q(g(x)) \, \gamma^{-1}(g) = d_Q(x)$ where~$\gamma$ is a representation of~$G$ on~$Q$.} 
Morphisms $\Phi:Q\rightarrow P$ in $\hmf(R,W)^G$ are maps of matrix factorisations that satisfy 
\be\label{eq:equivmorphs}
\Phi = \gamma_g^{(P)} \circ {}_g \Phi \circ (\gamma_g^{(Q)})^{-1} \, .
\ee

Defects between orbifold Landau-Ginzburg models with potentials~$W$ and~$V$ are defined similarly, namely as matrix factorisations of $W-V$ that are equivariant with respect to the group actions on either side. We do not need the details (for which we refer to~\cite{br0712.0188}), later in Remark~\ref{rem:AGmodules} we shall however offer a concise equivalent description. For the moment we only recall the fact that
\be\label{eq:earlyAG}
A_G = \bigoplus_{g\in G} {}_g (I_W)
\ee
is the invisible defect in the $G$-orbifold of a Landau-Ginzburg model with potential~$W$.

\subsection{Adjunctions between defects}\label{subsec:adjointdefects}

All the defect lines we consider come with an orientation. This orientation also induces a direction of defects as maps between theories. We will view a defect~$X$ of the form 
\be\label{eq:XfromWtoV}
\begin{tikzpicture}[very thick,scale=0.7,color=blue!50!black, baseline=0cm]
\nicepalecolourscheme (0,-1.25) rectangle (2.0,1.25);
\nicehalfpalecolourscheme (-2.0,-1.25) rectangle (0,1.25);
\draw[line width=0] 
(0,1.25) node[line width=0pt] (A) {}
(0,-1.25) node[line width=0pt] (A2) {}; 
\draw[
	decoration={markings, mark=at position 0.55 with {\arrow{>}}}, postaction={decorate}
	]
 (0,-1.25) -- (0,1.25); 
 \draw[line width=0] 
(0,-1) node[line width=0pt, right] (Xbottom) {{\small $X$}}
(-1.4,0) node[line width=0pt, right] (Xbottom) {{\small $V$}}
(0.5,0) node[line width=0pt, right] (Xbottom) {{\small $W$}};
\end{tikzpicture}
\ee
as mapping \textsl{from} theory~$W$ \textsl{to} theory~$V$. In particular this means that we read diagrams as the above from right to left, which also coincides with the direction of the fusion product: in the setting of the previous subsection we have 
$$
\begin{tikzpicture}[very thick,scale=0.7,color=blue!50!black, baseline=0cm]
\nicenotpalecolourscheme (-4,-1.25) rectangle (-2.0,1.25);
\nicepalecolourscheme (0,-1.25) rectangle (2.0,1.25);
\nicehalfpalecolourscheme (-2.0,-1.25) rectangle (0,1.25);
\draw[line width=0] 
(0,1.25) node[line width=0pt] (A) {}
(0,-1.25) node[line width=0pt] (A2) {}; 
\draw[
	decoration={markings, mark=at position 0.55 with {\arrow{>}}}, postaction={decorate}
	]
 (0,-1.25) -- (0,1.25); 
\draw[
	decoration={markings, mark=at position 0.55 with {\arrow{>}}}, postaction={decorate}
	]
 (-2,-1.25) -- (-2,1.25); 
 \draw[line width=0] 
(0,-1) node[line width=0pt, right] (Xbottom) {{\small $X$}}
(-2,-1) node[line width=0pt, right] (Xbottom) {{\small $Y$}}
(-3.4,0) node[line width=0pt, right] (Xbottom) {{\small $U$}}
(-1.4,0) node[line width=0pt, right] (Xbottom) {{\small $V$}}
(0.5,0) node[line width=0pt, right] (Xbottom) {{\small $W$}};
\end{tikzpicture}
= 
\begin{tikzpicture}[very thick,scale=0.7,color=blue!50!black, baseline=0cm]
\nicepalecolourscheme (0,-1.25) rectangle (2.0,1.25);
\nicenotpalecolourscheme (-2.0,-1.25) rectangle (0,1.25);
\draw[line width=0] 
(0,1.25) node[line width=0pt] (A) {}
(0,-1.25) node[line width=0pt] (A2) {}; 
\draw[
	decoration={markings, mark=at position 0.55 with {\arrow{>}}}, postaction={decorate}
	]
 (0,-1.25) -- (0,1.25); 
 \draw[line width=0] 
(0,-1) node[line width=0pt, right] (Xbottom) {{\small $Y \otimes X$}}
(-1.4,0) node[line width=0pt, right] (Xbottom) {{\small $U$}}
(0.5,0) node[line width=0pt, right] (Xbottom) {{\small $W$}};
\end{tikzpicture}
\, . 
$$
This identity holds locally on any worldsheet when we compute correlators. Of course when computing correlators we also want to consider field insertions. In general they are placed at defect junctions, and we write their operator product vertically, read from bottom to top. For example, the special fields~\eqref{eq:lambdarho} and their inverses describing the unit action are depicted as
$$
\begin{tikzpicture}[very thick,scale=1.0,color=blue!50!black, baseline=0cm]

\fill (0,0) circle (2.0pt) node[right] {{\small $\lambda_X$}};

\draw[dashed] (-0.4,-0.8) .. controls +(0,0.5) and +(-0.25,-0.5) .. (0,0);

\draw[line width=0] 
(0,-0.8) node[line width=0pt, right] (Xbottom) {{\small $X$}}
(0,0.8) node[line width=0pt, right] (Xtop) {{\small $X$}}
(-0.4,-0.8) node[line width=0pt, left] (I) {{\small $I_V$}};
\draw (0,-0.8) -- (0,0.8); 

\end{tikzpicture}
, \quad
\begin{tikzpicture}[very thick,scale=1.0,color=blue!50!black, baseline=0cm]

\fill (0,0) circle (2.0pt) node[right] {{\small $\lambda_X^{-1}$}};

\draw[dashed] (-0.4,0.8) .. controls +(0,-0.5) and +(-0.25,0.5) .. (0,0);

\draw[line width=0] 
(0,-0.8) node[line width=0pt, right] (Xbottom) {{\small $X$}}
(0,0.8) node[line width=0pt, right] (Xtop) {{\small $X$}}
(-0.4,0.8) node[line width=0pt, left] (I) {{\small $I_V$}};
\draw (0,-0.8) -- (0,0.8); 

\end{tikzpicture}
, \qquad 
\begin{tikzpicture}[very thick,scale=1.0,color=blue!50!black, baseline=0cm]

\fill (0,0) circle (2.0pt) node[left] {{\small $\rho_X$}};

\draw[dashed] (0.4,-0.8) .. controls +(0,0.5) and +(0.25,-0.5) .. (0,0);

\draw[line width=0] 
(0,-0.8) node[line width=0pt, left] (Xbottom) {{\small $X$}}
(0,0.8) node[line width=0pt, left] (Xtop) {{\small $X$}}
(0.4,-0.8) node[line width=0pt, right] (I) {{\small $I_W$}};
\draw (0,-0.8) -- (0,0.8); 

\end{tikzpicture}
, \quad 
\begin{tikzpicture}[very thick,scale=1.0,color=blue!50!black, baseline=0cm]

\fill (0,0) circle (2.0pt) node[left] {{\small $\rho_X^{-1}$}};

\draw[dashed] (0.4,0.8) .. controls +(0,-0.5) and +(0.25,0.5) .. (0,0);

\draw[line width=0] 
(0,-0.8) node[line width=0pt, left] (Xbottom) {{\small $X$}}
(0,0.8) node[line width=0pt, left] (Xtop) {{\small $X$}}
(0.4,0.8) node[line width=0pt, right] (I) {{\small $I_W$}};
\draw (0,-0.8) -- (0,0.8); 

\end{tikzpicture}
$$
and for the product of $\Phi: X \rightarrow X'$ and $\Psi: X' \rightarrow X''$ we have 
$$
\begin{tikzpicture}[very thick,scale=1.0,color=blue!50!black, baseline=0cm]
\fill (0,-0.5) circle (2.0pt) node[left] {{\small $\Phi$}};
\fill (0,0.5) circle (2.0pt) node[left] {{\small $\Psi$}};
\draw[line width=0] 
(0,-1.1) node[line width=0pt, right] (X) {{\small $X$}}
(0,0) node[line width=0pt, right] (Xs) {{\small $X'$}}
(0,1.1) node[line width=0pt, right] (Xs) {{\small $X''$}};
\draw (0,-1.2) -- (0,1.2); 
\end{tikzpicture}
=
\begin{tikzpicture}[very thick,scale=1.0,color=blue!50!black, baseline=0cm]
\fill (0,0) circle (2.0pt) node[left] {{\small $\Psi\Phi$}};
\draw[line width=0] 
(0,-1.1) node[line width=0pt, right] (X) {{\small $X$}}
(0,1.1) node[line width=0pt, right] (Xs) {{\small $X''$}};
\draw (0,-1.2) -- (0,1.2); 
\end{tikzpicture}
\, . 
$$

In Landau-Ginzburg models an upwards-oriented defect~$X$ as in~\eqref{eq:XfromWtoV} is described by a matrix factorisation of $V-W$, and one may ask what the operation of reversing its orientation means algebraically. This must be a matrix factorisation of $W-V$ which turns out to be the (right) adjoint 
$$
X^\dagger \cong X^\vee[n] 
\, , \quad 
d_{X^\vee} : \nu \lmt (-1)^{|\nu|+1} \nu \circ d_X
$$
where $\nu \in X^\vee = \Hom_{S\otimes_\C R}(X,S\otimes_\C R)$, $n$ is the number of variables in $R\ni W$, and~$[1]$ denotes the shift functor. 

A defect line may take a U-turn, 
\be\label{eq:Uturns1}
\begin{tikzpicture}[very thick,scale=1.0,color=blue!50!black, baseline=.4cm]
\draw[line width=0pt] 
(3,0) node[line width=0pt] (D) {{\small$X^\dagger$}}
(2,0) node[line width=0pt] (s) {{\small$X\vphantom{X^\dagger}$}}; 
\draw[redirected] (D) .. controls +(0,1) and +(0,1) .. (s);
\end{tikzpicture}
, \quad
\begin{tikzpicture}[very thick,scale=1.0,color=blue!50!black, baseline=-.4cm,rotate=180]
\draw[line width=0pt] 
(3,0) node[line width=0pt] (D) {{\small$X^\dagger$}}
(2,0) node[line width=0pt] (s) {{\small$X\vphantom{X^\dagger}$}}; 
\draw[directed] (D) .. controls +(0,1) and +(0,1) .. (s);
\end{tikzpicture}
. 
\ee
In the spirit of our earlier diagrams we interpret these pictures as special junction fields
$$
\tev_X: X \otimes X^\dagger \lra I_V
\, , \quad
\tcoev_X: I_W \lra X^\dagger \otimes X
$$
mapping to and from the (now truly) invisible defect. In Section~\ref{subsec:BicategoricalAlgebra} we will put such maps into a broader context, for now we only recall that they were constructed in~\cite{cr1006.5609, cm1208.1481} and have the following explicit presentations: 
\begin{align}
\widetilde\ev_X( e_j p \otimes e_i^* ) 
& = \sum_{l \geqslant 0} \sum_{a_1 < \cdots < a_l} (-1)^{l + (n+1)|e_j| + n} \, \theta_{a_1} \ldots \theta_{a_l} \nonumber \\
& \qquad \cdot \Res \left[ \frac{ p\, \big\{ \partial^{z,z'}_{[a_l]} d_X  \ldots \partial^{z,z'}_{[a_1]} d_X \, \partial_{x_1}d_X\ldots \partial_{x_n}d_X \big\}_{ij} \, \operatorname{d}\!x }{\partial_{x_1}W, \ldots, \partial_{x_n} W} \right] , \label{eq:evaltilde} \\
\widetilde\coev_X( \bar\gamma ) 
& = \sum_{i,j} (-1)^{(\bar r+1)|e_j| + s_n} \left\{ \partial^{x,x'}_{[\bar b_{\bar r}]}d_X \ldots \partial^{x,x'}_{[\bar b_1]}d_X \right\}_{ji}  e_i^* \otimes e_j \label{eq:coevtilde}
\end{align}
where the integers~$\bar b_i$ are such that $\bar b_1 < \cdots < \bar b_{\bar r}$, and $\bar{\gamma} \theta_{\bar b_1} \ldots \theta_{\bar b_{\bar r}} = (-1)^{s_n} \theta_1 \ldots \theta_n$, and $p\in R = \C[x]$. 

Of course one may equally well consider the orientation-reversed variant of~\eqref{eq:Uturns1}, 
\be\label{eq:Uturns2}
\ev_X = \!
\begin{tikzpicture}[very thick,scale=1.0,color=blue!50!black, baseline=.4cm]
\draw[line width=0pt] 
(3,0) node[line width=0pt] (D) {{\small$X\vphantom{X^\dagger}$}}
(2,0) node[line width=0pt] (s) {{\small$\dX$}}; 
\draw[directed] (D) .. controls +(0,1) and +(0,1) .. (s);
\end{tikzpicture}
, \quad
\coev_X =  \!
\begin{tikzpicture}[very thick,scale=1.0,color=blue!50!black, baseline=-.4cm,rotate=180]
\draw[line width=0pt] 
(3,0) node[line width=0pt] (D) {{\small$X\vphantom{X^\dagger}$}}
(2,0) node[line width=0pt] (s) {{\small$\dX$}}; 
\draw[redirected] (D) .. controls +(0,1) and +(0,1) .. (s);
\end{tikzpicture}
.
\ee
As is implicit in the notation, there is a slight subtlety here as one has to use the (left) adjoint $\dX \cong X^\vee[m]$ which is isomorphic to~$X^\dagger$ only if the number~$m$ of variables in $S\ni V$ has the same parity as~$n$. Explicitly, the maps~\eqref{eq:Uturns2} are given by (with $q\in S = \C[z]$): 
\begin{align}
\ev_X( e_i^* \otimes q e_j )
&  = \sum_{l \geqslant 0} \sum_{a_1 < \cdots < a_l} (-1)^{\binom{l}{2}+l|e_j|} \, \theta_{a_1} \ldots \theta_{a_l}  \nonumber \\
& \qquad \cdot \Res \left[ \frac{ q\, \big\{ \partial_{z_1}d_X\ldots \partial_{z_m}d_X \, \partial^{x,x'}_{[a_1]} d_X  \ldots \partial^{x,x'}_{[a_l]} d_X \big\}_{ij}  \, \operatorname{d}\!z }{\partial_{z_1}V, \ldots, \partial_{z_m} V} \right] , \label{eq:eval} \\
\coev_X(\gamma) 
& = \sum_{i,j} (-1)^{\binom{r+1}{2} + mr + s_m} \left\{ \partial^{z,z'}_{[b_1]}d_X \ldots \partial^{z,z'}_{[b_r]}d_X \right\}_{ij} e_{i} \otimes e_j^* \, . \label{eq:coev}
\end{align}

\medskip

As an application of the above formulas we use them to compute the disc correlator. For this let $Q\in \hmf(R,W)$ be a boundary condition, $\Psi: Q\rightarrow Q$ a boundary operator, and $\phi\in R/(\partial W)$ a bulk field (viewed as a defect field $I_W \rightarrow I_W$). In this case $\tcoev_Q$ is simply a map from the complex numbers to $Q^\dagger \otimes Q$, and $\ev_Q$ maps back to~$\C$. Thus from the general rule of reading all diagrams from bottom to top and from right to left we immediately find 
\be\label{eq:discCorrelator}
\begin{tikzpicture}[very thick,scale=0.75,color=blue!50!black, baseline]
\nicepalecolourscheme (0,0) circle (1.5);
\fill (-0.8,-0.8) circle (0pt) node[white] {{\small$W$}};
\draw (0,0) circle (1.5);
\fill (-45:1.55) circle (0pt) node[right] {{\small$Q$}};
\draw[<-, very thick] (0.100,-1.5) -- (-0.101,-1.5) node[above] {}; 
\draw[<-, very thick] (-0.100,1.5) -- (0.101,1.5) node[below] {}; 
\fill (135:0) circle (2.5pt) node[left] {{\small$\phi$}};
\fill (0:1.5) circle (2.5pt) node[left] {{\small$\Psi$}};
\end{tikzpicture} 
=
\Res \left[ \frac{\phi(x) \str\big( \Psi \, \partial_{x_1} d_{Q}\ldots \partial_{x_n} d_{Q} \big) \operatorname{d}\! x}{\partial_{x_1} W, \ldots, \partial_{x_n} W} \right]
\ee
which is indeed the Kapustin-Li disc correlator of~\cite{kl0305, hl0404}. In the next subsection a slight variant of this argument will allow us to compute correlators and RR brane charges in Landau-Ginzburg orbifolds. 

\medskip

The expression~\eqref{eq:discCorrelator} is a special case of the (left) bulk action of the boundary condition~$Q$, viewed as a special defect between the trivial theory and~$W$. In general for a defect $X\in \hmf(S\otimes_\C R, V-W)$ its left and right bulk actions 
\be\label{eq:MotherDefectActionOnBulkFields}
\mathcal D_{\text{l}}^\Phi(X) (\psi) = 
\begin{tikzpicture}[very thick,scale=0.75,color=blue!50!black, baseline]
\nicepalecolourscheme (0,0) circle (2.0);
\fill (-1.175,-1.175) circle (0pt) node[white] {{\small$W$}};
\nicecolourscheme (0,0) circle (1.25);
\fill (-0.65,-0.65) circle (0pt) node[white] {{\small$V$}};
\draw (0,0) circle (1.25);
\fill (45:1.3) circle (0pt) node[right] {{\small$X$}};
\draw[<-, very thick] (0.100,-1.25) -- (-0.101,-1.25) node[above] {}; 
\draw[<-, very thick] (-0.100,1.25) -- (0.101,1.25) node[below] {}; 
\fill (135:0) circle (2.5pt) node[left] {{\small$\psi$}};
\fill (0:1.25) circle (2.5pt) node[left] {{\small$\Phi$}};
\end{tikzpicture} 
\, , \quad
\mathcal D_{\text{r}}^\Phi (X) (\phi) = 
\begin{tikzpicture}[very thick,scale=0.75,color=blue!50!black, baseline]
\nicecolourscheme (0,0) circle (2.0);
\fill (1.175,-1.175) circle (0pt) node[white] {{\small$V$}};
\nicepalecolourscheme (0,0) circle (1.25);
\fill (0.65,-0.65) circle (0pt) node[white] {{\small$W$}};
\draw (0,0) circle (1.25);
\fill (135:1.3) circle (0pt) node[left] {{\small$X$}};
\draw[->, very thick] (0.100,-1.25) -- (-0.101,-1.25) node[above] {}; 
\draw[->, very thick] (-0.100,1.25) -- (0.101,1.25) node[below] {}; 
\fill (135:0) circle (2.5pt) node[right] {{\small$\phi$}};
\fill (180:1.25) circle (2.5pt) node[right] {{\small$\Phi$}};
\end{tikzpicture} 
\ee 
are given by~\cite[Prop.\,8.2]{cm1208.1481}
\begin{align*}
\mathcal D_{\text{l}}^\Phi(X) (\psi)
& = 
(-1)^{\binom{n+1}{2}}\Res \left[ \frac{\psi(z) \str\big( \Phi \, \partial_{x_1} d_{X}\ldots \partial_{x_n} d_{X} \,  \partial_{z_1} d_{X}\ldots \partial_{z_m} d_{X}\big) \operatorname{d}\! z}{\partial_{z_1} V, \ldots, \partial_{z_m} V} \right] , 
\\
\mathcal D_{\text{r}}^\Phi (X) (\phi)
& =
(-1)^{\binom{m+1}{2}}\Res \left[ \frac{\phi(x) \str\big( \Phi \, \partial_{x_1} d_{X}\ldots \partial_{x_n} d_{X} \,  \partial_{z_1} d_{X}\ldots \partial_{z_m} d_{X}\big) \operatorname{d}\! x}{\partial_{x_1} W, \ldots, \partial_{x_n} W} \right] .
\end{align*}
One checks that defect actions have the expected properties~\cite[Prop.\,8.5]{cm1208.1481}. In particular they are compatible with fusion, e.\,g.~$\mathcal D^\Psi_{\text{r}}(Y) \circ \mathcal D^\Phi_{\text{r}}(X) = \mathcal D^{\Psi \otimes \Phi}_{\text{r}}(Y\otimes X)$, and the invisible defect acts as the identity on bulk fields.

\subsection{Defect description of orbifolds}\label{subsec:defectorbi}

Let $W\in R= \C[x_1,\ldots,x_n]$ be a Landau-Ginzburg potential with invisible defect $I \equiv I_W$ and finite symmetry group~$G$. In this section we will begin to study $G$-orbifolds from a perspective that emphasises the role of the natural symmetry defect
\be\label{eq:AG}
A_G = \bigoplus_{g\in G} {}_g I
\ee
which we already encountered in~\eqref{eq:earlyAG}. We will see how the $g$-twisted invisible defect~${}_g I$ naturally implements the action of~$g$ in the Landau-Ginzburg model. Everything we reviewed in Sections~\ref{subsec:conventionalorbi} and~\ref{subsec:adjointdefects} (and more) can be recovered by merely thinking this basic idea through in a rather pedestrian way. Later in Section~\ref{sec:GOs} we shall explain the more conceptual aspects of this construction, along with a generalisation to ``orbifolds'' that do not need a group as input.

\subsubsection*{Bulk sector} 

We claim that in defect language the bulk space before orbifold projection~\eqref{eq:allTwistedSectorsBeforeOrbifoldProjection} is given by
\be\label{eq:HomIAG}
H = \Hom(I, A_G) = \bigoplus_{g\in G} \Hom(I,{}_g I)
\ee
where the Hom spaces are those in the category $\hmf(\C[x,x'], W(x)-W(x'))$. This means that we identify bulk fields in the $g$-twisted sector with defect fields~$\alpha$ between the invisible defect~$I$ and its $g$-twist ${}_g I$: 
$$
\Hom(I,{}_g I) \ni \alpha = 
\begin{tikzpicture}[very thick,scale=0.75,color=green!65!black, baseline]
\fill (0,0) circle (2.5pt) node[right] (D) {{\small $\alpha$}};
\fill (0,0.7) circle (0) node[left] (u) {{\small ${}_g I$}};
\fill (0,-0.7) circle (0) node[left] (d) {{\small $I$}};
\draw (0,0) -- (0,0.7); 
\draw[dashed] (0,0) -- (0,-0.7); 
\end{tikzpicture} 
\equiv  
\begin{tikzpicture}[very thick,scale=0.75,color=green!65!black, baseline]
\fill (0,-0.5) circle (2.5pt) node[right] (D) {{\small $\alpha$}};
\draw (0,-0.5) -- (0,0.6); 
\end{tikzpicture} 
. 
$$
That the component spaces $\Hom(I,{}_g I)$ in~\eqref{eq:HomIAG} are indeed isomorphic to the $g$-twisted sectors before orbifold projection spanned by~\eqref{eq:gtwistedLG} in the conventional description was shown in~\cite[App.\,A]{br0707.0922}. 

To obtain the bulk space after orbifold projection we have to understand the projector~\eqref{eq:orbifoldProjectorP} in terms of the defects~${}_h I$ and their action on $\alpha \in \Hom(I,{}_g I)$. In the previous section we saw that in the unorbifolded theory a defect acts on a bulk field simply by forming a loop around its insertion point, see~\eqref{eq:MotherDefectActionOnBulkFields}. Since~$\alpha$ has an outgoing nontrivial defect this is no longer an option in our current situation. However, there are two natural candidates for the process of wrapping the defect~$A_G$ around a field $\alpha \in \Hom(I,{}_g I) \subset \Hom(I,A_G)$, giving rise to the following two projectors: 
\begin{align}
\pi^{\text{(c,c)}} : 
\begin{tikzpicture}[very thick,scale=0.75,color=green!50!black, baseline]
\fill (0,-0.5) circle (2.5pt) node[left] (D) {{\small $\alpha$}};
\draw (0,-0.5) -- (0,0.6); 
\end{tikzpicture} 
&
\lmt 
\begin{tikzpicture}[very thick,scale=0.75,color=green!50!black, baseline]
\draw (0,0) -- (0,1.25);
\fill (0,0) circle (2.5pt) node[left] {{\small $\alpha$}};
\draw (0,0.8) .. controls +(-0.9,-0.3) and +(-0.9,0) .. (0,-0.8);
\draw (0,-0.8) .. controls +(0.9,0) and +(0.7,-0.1) .. (0,0.4);
\fill (0,-0.8) circle (2.5pt) node {};
\fill (0,0.4) circle (2.5pt) node {};
\fill (0,0.8) circle (2.5pt) node {};
\draw (0,-1.2) node[Odot] (unit) {};
\draw (0,-0.8) -- (unit);
\end{tikzpicture}
=
\frac{1}{|G|} 
\sum_{h\in G}
\begin{tikzpicture}[very thick,scale=0.95,color=green!65!black, baseline]
\draw (0,-0.3) -- (0,1.25);
\fill (0,-0.3) circle (2.25pt) node[left] {{\small $\alpha$}};
\draw (0,0.8) .. controls +(-0.9,-0.3) and +(-0.9,0) .. (0,-0.8);
\draw (0,-0.8) .. controls +(0.9,0) and +(0.7,-0.1) .. (0,0.4);
\fill (0,-0.8) circle (2.25pt) node {};
\fill (0,0.4) circle (2.25pt) node {};
\fill (0,0.8) circle (2.25pt) node {};
\draw (0,-1.2) node[Odot] (unit) {};
\draw (0,-0.8) -- (unit);
\fill (-0.55,0.1) circle (0pt) node[left] {{\tiny $h\vphantom{h^{-1}}$}};
\fill (0.1,0.1) circle (0pt) node[left] {{\tiny $g\vphantom{h^{-1}}$}};
\fill (1.3,0.1) circle (0pt) node[left] {{\tiny $h^{-1}$}};
\fill (0.9,0.65) circle (0pt) node[left] {{\tiny $gh^{-1}$}};
\fill (1.05,1.1) circle (0pt) node[left] {{\tiny $hgh^{-1}$}};
\end{tikzpicture}
\, , \label{eq:piccAG} \\ 
\pi^{\text{RR}} : 
\begin{tikzpicture}[very thick,scale=0.75,color=green!50!black, baseline]
\fill (0,-0.5) circle (2.5pt) node[left] (D) {{\small $\alpha$}};
\draw (0,-0.5) -- (0,0.6); 
\end{tikzpicture} 
& 
\lmt 
\begin{tikzpicture}[very thick,scale=0.75,color=green!50!black, baseline=0.2cm]
\draw (0,0) -- (0,1.3);
\fill (0,0) circle (2.5pt) node[left] {{\small$\alpha$}};
\draw (0,0.8) .. controls +(-0.9,-0.3) and +(-0.9,0) .. (0,-0.8);
\draw[
	decoration={markings, mark=at position 0.83 with {\arrow{<}}}, postaction={decorate}
	]
	 (0,-0.8) .. controls +(0.9,0) and +(0.9,0.8) .. (0,0.4);
\draw[->] (0.01,-0.8) -- (-0.01,-0.8);
\fill (0,0.4) circle (2.5pt) node {};
\fill (0,0.8) circle (2.5pt) node {};
\end{tikzpicture}
=
\frac{1}{|G|} 
\sum_{h\in G}
\begin{tikzpicture}[very thick,scale=0.95,color=green!65!black, baseline=0.2cm]
\draw (0,-0.3) -- (0,1.3);
\fill (0,-0.3) circle (2.25pt) node[left] {{\small$\alpha$}};
\draw (0,0.8) .. controls +(-0.9,-0.3) and +(-0.9,0) .. (0,-0.8);
\draw[
	decoration={markings, mark=at position 0.8 with {\arrow{<}}}, postaction={decorate}
	]
	 (0,-0.8) .. controls +(0.9,0) and +(0.9,0.8) .. (0,0.1);
\draw[->] (0.01,-0.8) -- (-0.01,-0.8);
\fill (0,0.1) circle (2.25pt) node {};
\fill (0,0.8) circle (2.25pt) node {};
\fill (-0.55,-0.05) circle (0pt) node[left] {{\tiny $h\vphantom{h^{-1}}$}};
\fill (0.1,-0.05) circle (0pt) node[left] {{\tiny $g\vphantom{h^{-1}}$}};
\fill (1.4,-0.05) circle (0pt) node[left] {};
\fill (0.9,0.55) circle (0pt) node[left] {{\tiny $gh^{-1}$}};
\fill (1.05,1.1) circle (0pt) node[left] {{\tiny $hgh^{-1}$}};
\end{tikzpicture}
. \label{eq:piRRAG}
\end{align}
In these diagrams 
$
\begin{tikzpicture}[very thick,scale=0.4,color=green!50!black, baseline=-0.12cm]
\draw (0,-0.5) node[Odot] (D) {}; 
\draw (D) -- (0,0.6); 
\end{tikzpicture} 
$ 
is the embedding $I = {}_e I \rightarrow A_G$, the trivalent vertices are given by the twisted left action of the invisible defect and its inverse~\eqref{eq:lambdainverse}, 
$$
\begin{tikzpicture}[very thick,scale=0.75,color=green!50!black, baseline=0.4cm]
\draw[-dot-] (3,0) .. controls +(0,1) and +(0,1) .. (2,0);
\draw (2.5,0.75) -- (2.5,1.5); 
\fill (2,0) circle (0pt) node[left] (D) {{\small $g\vphantom{gh}$}};
\fill (3,0) circle (0pt) node[right] (D) {{\small $h\vphantom{gh}$}};
\fill (2.5,1.5) circle (0pt) node[right] (D) {{\small $gh$}};
\end{tikzpicture} 
= 
{}_g (\lambda_{{}_h I}) 
\, , \quad
\begin{tikzpicture}[very thick,scale=0.75,color=green!50!black, baseline=-0.6cm, rotate=180]
\draw[-dot-] (3,0) .. controls +(0,1) and +(0,1) .. (2,0);
\draw (2.5,0.75) -- (2.5,1.5); 
\fill (2,0) circle (0pt) node[right] (D) {{\small $g\vphantom{gh}$}};
\fill (3,0) circle (0pt) node[left] (D) {{\small $h\vphantom{gh}$}};
\fill (2.5,1.5) circle (0pt) node[right] (D) {{\small $hg$}};
\end{tikzpicture} 
= 
{}_h (\lambda^{-1}_{{}_g I})
\, , 
$$
while
$
\begin{tikzpicture}[very thick,scale=0.5,color=green!50!black, baseline=0.16cm]
\draw[line width=0pt] 
(3,0) node[line width=0pt] (D) {}
(2,0) node[line width=0pt] (s) {}; 
\draw[redirectedgreen] (D) .. controls +(0,1) and +(0,1) .. (s);
\end{tikzpicture} 
$ 
and
$
\begin{tikzpicture}[very thick,scale=0.5,color=green!50!black, baseline=-0.38cm]
\draw[line width=0pt] 
(3,0) node[line width=0pt] (D) {}
(2,0) node[line width=0pt] (s) {}; 
\draw[directedgreen] (D) .. controls +(0,-1) and +(0,-1) .. (s);
\end{tikzpicture} 
$ 
are the adjunction maps~\eqref{eq:evaltilde}, \eqref{eq:coev}. 

The operators~\eqref{eq:piccAG} and~\eqref{eq:piRRAG} are our proposals for the projectors~\eqref{eq:ccAndRRprojectors} to (c,c) fields and RR ground states in the $G$-orbifold. That they are indeed projectors is a direct consequence of our more general discussion in Section~\ref{subsec:gentwist} where we will also recover the relation~\eqref{eq:HccVersusHRR} as a special case. 

To show that the images of $\pi^{\text{(c,c)}}$ and $\pi^{\text{RR}}$ are indeed isomorphic to the spaces $\HccnoA$ and $\HrrnoA$ in the conventional description of Section~\ref{subsec:conventionalorbi} one has to check that the above defect actions reproduce the formulas~\eqref{eq:hongcc} and~\eqref{eq:hongRR}. This is a straightforward exercise which we carry out for Landau-Ginzburg models of Fermat type in Appendix~\ref{app:RRprojDetails}.

\subsubsection*{Boundary and defect sector} 

In Section~\ref{subsec:conventionalorbi} we reviewed that boundary and defect conditions in the orbifold theory are described by $G$-equivariant matrix factorisations. In terms of the symmetry defect~\eqref{eq:AG} this translates into these two sectors being given by modules and bimodules over~$A_G$, viewed as a certain algebra~\cite[Sect.\,7.1]{cr1210.6363}. We shall explain this construction in detail in Section~\ref{sec:GOs}; see in particular Remark~\ref{rem:twistedpivotal}. 

For now we focus on the implications of our identification of $g$-twisted bulk fields with elements in $\Hom(I, {}_g I)$. This perspective in particular provides us with a general method to compute disc correlators in the orbifold theory. Indeed, for a $G$-equivariant matrix factorisation $(Q,d_Q,\{ \gamma_g \})$ of~$W$, the disc correlator of a bulk field $\alpha \in \Hom(I,{}_g I)$ and a boundary field $\Psi \in \hmf(R,W)^G$ is
\be\label{eq:orbiDisc}
\begin{tikzpicture}[very thick,scale=0.75,color=blue!50!black, baseline]
\nicepalecolourscheme (0,0) circle (1.5);
\fill (-0.8,-0.8) circle (0pt) node[white] {{\small$W$}};
\draw (0,0) circle (1.5);
\fill (-45:1.55) circle (0pt) node[right] {{\small$Q$}};
\draw[<-, very thick] (0.100,-1.5) -- (-0.101,-1.5) node[above] {}; 
\draw[<-, very thick] (-0.100,1.5) -- (0.101,1.5) node[below] {}; 
\fill[color=green!65!black] (135:0) circle (2.5pt) node[left] {{\small$\alpha$}};
\fill (0:1.5) circle (2.5pt) node[left] {{\small$\Psi$}};
\draw[color=green!65!black] (0,0) .. controls +(0,0.6) and +(-0.4,-0.4) .. (45:1.5);
\fill[color=green!65!black] (45:1.5) circle (2.5pt) node[right] {{\small$\gamma_g\circ {}_g(\lambda_Q)$}};
\end{tikzpicture} 
\ee
with the evaluation and coevaluation maps as in Section~\ref{subsec:adjointdefects}. The expression~\eqref{eq:orbiDisc} is the natural generalisation of the disc correlator in the unorbifolded theory~\eqref{eq:discCorrelator}, which is the special case $\alpha = \phi \in \Hom(I,I)$. Note that we are forced to include all the canonical maps that come with the data of an equivariant boundary condition: first the map ${}_g (\lambda_Q)$ fuses the defect~${}_g I$ with the boundary~$Q$ to produce~${}_g Q$, and then $\gamma_g : {}_g Q \cong Q$ has to be applied so that the boundary can be consistently labelled by~$Q$. We also emphasise that the correlator~\eqref{eq:orbiDisc} can straightforwardly be evaluated in any given model, thanks to the explicit expressions~\eqref{eq:lambdainverse}, \eqref{eq:coevtilde}, and \eqref{eq:eval} for its constituents. 

In our disc correlator~\eqref{eq:orbiDisc}~$\alpha$ may be either a (c,c) field or an RR ground state. In particular, we can consider the special case where $\alpha = \phi_g$ is in the $g$-twisted RR sector and~$\Psi$ is the identity field. Then~\eqref{eq:orbiDisc} is the one-point correlator $\langle \phi_g \rangle_Q$ which computes the $g$-th RR charge of the brane~$Q$. We emphasise that in the defect approach the expression~\eqref{eq:orbiDisc} for correlators and brane charges follows most naturally and can be considered a derivation from first principles. 

In~\cite[Sect.\,5]{w0412274} it was proposed that 
\be\label{eq:WalcherProposal}
\langle \phi_g \rangle_Q = 
\Res \left[ \frac{\varphi_g \str\big( \gamma \, \partial_{\overline x_1} d_{\overline Q} \ldots \partial_{\overline x_r} d_{\overline Q} \big) \operatorname{d}\! \overline x}{\partial_{\overline x_1} \overline W, \ldots, \partial_{\overline x_r} \overline W} \right]
\ee
where $\{ \overline x_i \}_{i\in \{1,\ldots, r\}}$ are the $g$-invariant variables, $\overline W, d_{\overline Q}$ are obtained from $W, d_Q$ by setting the remaining variables to zero, and $\varphi_g \in \C[\overline x_1, \ldots, \overline x_r]/(\partial_{\overline x_i} \overline W)$ represents~$\phi_g$ as in~\eqref{eq:gtwistedLG}. This proposal was checked by comparing it with known values of RR charges in the case of minimal models and their tensor products. Furthermore in the special case of no untwisted variables ($r=0$) it was shown to be compatible with the orbifold Cardy condition. 

Below in Section~\ref{subsec:ocTFT} we will prove the Cardy condition in the general case (extending earlier results of~\cite{pv1002.2116} and~\cite{cr1210.6363}), and we will see that our proposal is inherently compatible with it. In addition in Appendix~\ref{app:RRchargeComputations} we explicitly compute $\langle \phi_g \rangle_Q$ in the case of tensor products of minimal model branes, generalised permutation branes~\cite{bg0503207, cfg0511078}, and linear matrix factorisations~\cite{err0508}. We find our results to be consistent with the proposal~\eqref{eq:WalcherProposal} in all these cases. 

\medskip

We close this section with a discussion of the defect sector. Since a ``defect description of the defect sector'' is rather self-referencing we can be brief. Furthermore, all the facts about defects in ordinary orbifold theories are special cases of the general results to be discussed in the more conceptual context of Section~\ref{subsec:DefectsFunctoriality}, where we will also have developed a vocabulary that makes the proofs much simpler than the direct computations we would have to resort to at the present stage. 

As mentioned at the end of Section~\ref{subsec:conventionalorbi}, a defect between two orbifold Landau-Ginzburg models is a suitably equivariant matrix factorisation of the difference of the potentials, in particular the invisible defect in the $G$-orbifold is the symmetry defect~$A_G$ of~\eqref{eq:earlyAG}. For details we refer to~\cite[Sect.\,4.1]{br0712.0188} or the equivalent description of Section~\ref{sec:GOs}. There we will also explain fusion in the orbifold theory as well as the defect actions~\eqref{eq:leftdefectorbiaction}, \eqref{eq:rightdefectorbiaction} on bulk fields $\alpha \in \Hom(I,A_G)$ (corresponding to the case~\eqref{eq:MotherDefectActionOnBulkFields} in the unorbifolded theory). Furthermore we will again find compatibility between defect actions and both fusion and bulk correlators.

\subsection{Relation to conformal and effective field theory}\label{subsec:CFTandEffectiveFieldTheory}

Many two-dimensional $\mathcal N=(2,2)$ supersymmetric field theories, including all such superconformal theories, can be topologically twisted to obtain TFTs, see e.\,g.~\cite{mirror}. There are two possible twists, the topological A- or B-twist, and superconformal A- or B-boundary conditions and defects are compatible with the respective twists. After twisting they become the relevant boundary conditions and defects of the topological theory. In the bulk sector, the fields surviving the topological twist are the (a,c) and (c,c) chiral rings for the A- and B-twist, respectively. These fields have nonsingular operator product expansions already on the level of the full CFT, as well as regular behaviour when brought to a compatible boundary or defect line. Quite generally, it is of interest to do computations in the simpler topological theory, if one can argue that the results still have an interpretation on the level of the conformal field theory.

In this paper we used the B-twisted Landau-Ginzburg model with defects to compute correlators in superconformal orbifold theories. Our approach in particular has the advantage that it is not necessary to formulate a consistent topological \textsl{orbifold} theory (in the sense of \cite{l0010269,ms0609042}). Instead a somewhat weaker structure suffices, where only a consistent unorbifolded parent theory is required. Also, we can compute disc one-point functions that are not part of the topological structure.

Let us review the necessary background in somewhat more detail. Starting with an $\mathcal N=(2,2)$ CFT, the topologically twisted theory does not always satisfy all the axioms of a TFT; in particular, two-point correlators of the topological theory can be degenerate. The reason for this is simple: on the one hand, we can always perform a topological A- or B-twist on the level of the symmetry algebra. On the other hand, topological correlators are obtained from CFT correlators by inserting a background charge. The two-point correlator of two elements of the chiral ring in the topological theory translates to a two-point function of RR ground states in the CFT. The space of ground states in a consistent conformal field theory is always equipped with a nondegenerate pairing, and this pairing translates to a nondegenerate pairing in the topological theory -- provided that the spectral flow operator relating chiral ring elements to RR ground states is a symmetry of the spectrum of the CFT. 

Viewed from a slightly different perspective, nondegeneracy of the pairing would follow from a type of operator-state correspondence, meaning a one-to-one correspondence between ground states and chiral fields. Such a correspondence does not always exist starting from arbitrary $(2,2)$ theories. Relevant for us is the fact that the correspondence always exists for B-twisted unorbifolded Landau-Ginzburg theories, but may not be present for their orbifolds. Within the class of all Landau-Ginzburg orbifolds, we can single out the special case where spectral flow survives the orbifold projection. This subclass contains all Landau-Ginzburg models corresponding to Calabi-Yau compactifications. In other cases, there is generically no consistent topological orbifold theory, and the nondegeneracy of two-point correlators will be violated both in the bulk and in the boundary or defect sector.

However, as we will describe in Section~\ref{subsec:ocTFT} there is still a somewhat weaker structure which takes into account that on the level of the twisted orbifold theory, there is no operator-state correspondence in the strict sense. 
In this weaker structure, the two-point correlator is again nondegenerate. A key fact underlying the analysis in Section~\ref{subsec:ocTFT} is that on the level of the full CFT two-point functions between RR ground states are nondegenerate; hence we can achieve nondegeneracy also on the level of the twisted theory by identifying the full set of RR ground states. As we will explain in detail, the resulting structure contains nondegenerate two-point correlators in an RR-like sector of the topological theory, whereas consistent products between operators are formulated in an NSNS-like sector. Unlike for ordinary open/closed TFTs, there is no operator-state correspondence that relates the two.

On the other hand, orbifolds on the level of the \textsl{superconformal} field theory can be consistent even if the spectral flow operator does not survive the orbifold projection.
The natural question is therefore whether our constructions can indeed yield correlation functions in a subsector of a consistent conformal orbifold theory.

This can be argued on the level of the unorbifolded parent theory. Of course, defects can also be used to formulate orbifold theories on the level of the full CFT, as described in~\cite{ffrs0909.5013} for rational conformal field theory. The superposition~$A_G$ of symmetry defects~${}_g I$ preserves the full symmetry algebra realised in the bulk, which in our case is the $\mathcal N=(2,2)$ Virasoro algebra. $A_G$ is topological already on the level of the full CFT, so there is a direct relation to the topologically twisted theory. In particular, twisted sector fields arise as defect changing operators between symmetry defects and the identity defect. If the spectrum of the original theory respects a certain spectral flow symmetry, this is inherited by the spectrum between symmetry defects. If now the unorbifolded parent CFT is B-twistable, we can map any correlation function of chiral ring elements, also those involving defect changing operators between symmetry defects, to a correlator in the associated 
topological theory and compute it there.
Our procedure to obtain correlation functions of an orbifold CFT is therefore to first interpret them as correlators in the unorbifolded CFT using defects, then do the topological twist and carry out the actual computation on the level of TFT with defects.

The simplest example where this procedure can be made concrete are $\mathcal N=(2,2)$ minimal models at level~$k$, where it is well-known that the diagonal and charge-conjugated modular invariants are related by a $\Z_{k+2}$-orbifold \cite{Gepner:1986hr}. The diagonal minimal model can consistently be B-twisted, giving rise to a TFT with nondegenerate pairing. The mirror theory with a charge conjugation modular invariant can be consistently A-twisted, but not B-twisted. Nonetheless, we can compute certain correlators for the charge-conjugated theory as correlators of defect changing operators in the diagonal minimal model.

\subsubsection*{RR charges}

Of particular interest in  string theory are D-brane charges. In CFT, they correspond to disc one-point functions of RR fields. B-type boundary conditions require that the disc one-point function of any bulk field is  nonvanishing only if the $U(1)$ charges satisfy $q_L=-q_R$. In the original unorbifolded CFT corresponding to a Landau-Ginzburg model, the modular invariant is diagonal and imposes $q_L=q_R$. As a consequence, D-branes can only be charged if there are RR ground states  with $q_L=q_R=0$. Under symmetric spectral flow, these RR ground states are mapped to chiral primary states in the NSNS sector with $q_L=q_R=c/6$. These states are elements of the chiral ring, represented as polynomials in the Landau-Ginzburg model. To be more precise, the relevant monomials must have half the charge of the unique element of maximal charge in the chiral ring. Hence the selection rule is quite restrictive. For example, B-type branes in minimal models at odd level can never carry RR charge. 

This is different in the orbifold theory, where we are interested in D-brane charges in the twisted sector. As described above, we compute them as one-point correlators of defect changing fields between the identity and symmetry defects. On the level of the full CFT, we consider RR states satisfying $q_L+q_R=0$, the charge selection rule for B-type boundary conditions. Since the initial theory was spectral flow symmetric, we can apply a half-unit flow for any such RR state to find the corresponding defect changing field in the NSNS sector. It has charge $(q_L+c/6, -q_L+c/6)$, where $q_L$ is the charge of the left movers of the initial RR ground state. This state is now realised as a defect changing field in the associated B-twisted Landau-Ginzburg model. It automatically satisfies the charge selection rule for the topological theory: on the disc the total charge of operator insertions has to be $c/3$. We can now use our formalism to compute the disc one-point correlator of a defect changing operator. In CFT 
it corresponds to the disc one-point function of a RR field, hence in string theory to the RR charge.

Note that the Landau-Ginzburg twist field whose one-point function we compute does not necessarily survive the orbifold projection. This is in particular not the case if the spectral flow operator does not survive the orbifold projection. However, the corresponding RR ground state does survive, and this is the state that we are eventually interested in.

Consider  as an example again the case of supersymmetric minimal models with diagonal modular invariant, Landau-Ginzburg potential $W=x^{k+2}$ and orbifold group $\Z_{k+2}$. In every twisted sector of the final projected orbifold theory, there is a single RR ground state whose total charge vanishes. In addition, there is an (a,c) field to which this state flows by asymmetric left-right spectral flow. The twisted sector (c,c) fields, to which the RR ground states flow by symmetric spectral flow, are removed by the orbifold projection implemented on twisted sector states. On the other hand, before projection they are defect changing fields between appropriate symmetry defects, and we can compute the disc one-point function in the Landau-Ginzburg framework, see Appendix~\ref{app:RRchargeComputations}.

\subsubsection*{Superpotential terms}

A further quantity of interest for string compactifications with D-branes is the effective superpotential. In Calabi-Yau compactifications the behaviour of B-type branes under complex structure deformations is encoded in the superpotential of the B-twisted topological theory. Its first order terms can be computed as the three-point function of three chiral boundary superfields or one (c,c) bulk field and one boundary field. The results of this paper can be used to explicitly compute such terms in case the bulk insertion is a twist field. As before, the twist field is realised as a defect changing operator, and our formalism yields a concrete formula for such terms.

As an example, consider hypersurfaces of degree~8 in $\IP_{(1,1,2,2,2)}$. For these Calabi-Yau spaces we have $h^{2,1}=86$, and three of the complex structure deformations are not realised by polynomials. At the Fermat point of the hypersurface, there exists a family of D2-branes, wrapping the exceptional $\IP_1$s introduced by the resolution of the $\Z_2$-orbifold singularity of the ambient weighted projective space. The open string moduli space is one-dimensional, corresponding to different points on the curve
\begin{equation}\label{sgcurve}
x_3^4+x_4^4+x_5^4=0 \, .
\end{equation}
Geometrically, one expects that obstructions arise when perturbing with the nonpolynomial deformations, such that the family of D2-branes reduces to a finite number of holomorphically embedded $\IP_1$s, corresponding to  supersymmetric D2-branes. 

Physically, the obstruction is encoded in a superpotential, where the first order term consists of a two-point function of the marginal operator generating the family, and the obstructing bulk field. This superpotential has been discussed from a geometric point of view in \cite{Kachru:2000an}. In the Landau-Ginzburg framework,  it was determined in \cite{Baumgartl:2012uh}, where it was computed to first order in the bulk and all orders in the boundary couplings. These computations made use of the image of the twist fields under the bulk-boundary map, motivated by charge conservation, and then used the Kapustin-Li correlator for boundary fields. 

We can give an alternative derivation of the correlation functions using defects. One starts with the Landau-Ginzburg potential
$$
W=x_1^8+x_2^8+x_3^4+x_4^4+x_5^4 
$$
and orbifold group $G=\Z_8$. The nonpolynomial complex structure deformations correspond to three states in the 4-th twisted sector. We realise them as defect changing operators between the identity defect~$I$ and~${}_4 I$. This symmetry defect can be obtained by a tensor product construction of the defects discussed in Appendix~\ref{app:RRchargeComputations} for single minimal models. The defect changing field consists of two fermionic pieces (see the appendix for concrete formulas)  in the first two minimal models of charge $q=3/4$, multiplied by $x_3$, $x_4$ or $x_5$. The total charge of the twist field is then $(1,1)$, such that these fields are good marginal operators. The D2-branes correspond to Koszul-type matrix factorisations given by
$$
J_1=x_1\, , \quad J_2=x_2 \, , \quad J_3=a x_4- bx_3 \, , \quad J_4=cx_3-a x_5 
$$
where $a,b,c\in\C$ parametrise the open string moduli space and have to satisfy~\eqref{sgcurve}. At specific points in the moduli space, these branes reduce to the tensor product and minimal model branes described in the appendix. One can then use the formulas given there to verify the computation of
 \cite{Baumgartl:2012uh}. The expected result is that after perturbation the one-dimensional family of D2-branes is reduced to four supersymmetric vacua.

\section{Generalised orbifolds}\label{sec:GOs}

We now wish to formalise and generalise our discussion of Section~\ref{sec:ordinaryLGorbs}. For this an algebro-diagrammatic language is particularly suitable as we shall review in Section~\ref{subsec:BicategoricalAlgebra}. This allows us to develop the theory for \textsl{any} two-dimensional TFT, not only Landau-Ginzburg models; in fact the only input we require is a suitable bicategory which may e.\,g.~also arise from sigma models or WZW models. 

In Section~\ref{subsec:gentwist} we introduce ``generalised'' chiral primary fields and Ramond ground states for any defect~$A$ that has the structure of a separable Frobenius algebra (a notion that we will explain below).\footnote{In particular, we do not assume~$A$ to be symmetric, hence the associated orbifolds have a weaker structure than that of a TFT with defects, see~\cite[Sect.\,3]{cr1210.6363} and Remark~\ref{rem:extendedTFT}. We continue to somewhat vaguely refer to this as ``generalised orbifolds''.} 
In Section~\ref{subsec:ocTFT} we prove several general properties of this construction and explore to what extent generalised orbifolds give rise to new open/closed TFTs, and in Section~\ref{subsec:DefectsFunctoriality} we discuss the defect sector. 

A special case of our central player~$A$ is the symmetry defect~$A_G$ of~\eqref{eq:AG}. Already in this case some of our results go beyond what was established in the literature, but the more exciting questions concern examples that do not originate from classical orbifold groups. Both classes are treated on the same footing in our approach.

\subsection{Bicategorical algebra}\label{subsec:BicategoricalAlgebra}

\subsubsection*{Bicategories with adjoints}

We start with the observation that the properties of Landau-Ginzburg models collected in Sections~\ref{subsec:conventionalorbi} and~\ref{subsec:adjointdefects} give rise to the structure of a bicategory with adjoints. Recall that a \textsl{bicategory} (or \textsl{weak 2-category})~$\B$  is made of a class of \textsl{objects} $a\in\B$, and for all pairs $a,b\in\B$ there are categories $\B(a,b)$ whose objects and arrows are called \textsl{1-morphisms} and \textsl{2-morphisms}, respectively. There are \textsl{tensor product} functors $\otimes: \B(b,c) \times \B(a,b) \rightarrow \B(a,c)$ which are associative in the sense that there exist natural 2-isomorphisms $(X\otimes Y)\otimes Z \cong X \otimes (Y\otimes Z)$ for all composable 1-morphisms $X,Y,Z$. Furthermore, for every $a\in\B$ we have the \textsl{unit 1-morphism} $I_a \in \B(a,a)$ which comes with natural left and right actions 
$$
\lambda_X : 
I_b \otimes X \stackrel{\cong}{\longrightarrow} X 
\, , \quad 
\rho_X : 
X \otimes I_a \stackrel{\cong}{\longrightarrow} X 
$$
for all $X\in\B(a,b)$. These data have to satisfy certain properties which are e.\,g.~explained in~\cite[(7.18),\,(7.19)]{bor94}, but we will not have to directly refer to them in the following. 

We think of objects, 1- and 2-morphisms of a bicategory as the (bulk) theories, defects and local operators of a two-dimensional TFT. For example, topological Landau-Ginzburg models form a bicategory~$\LG$ with the data of Section~\ref{sec:ordinaryLGorbs} as explained in~\cite{cr0909.4381, cm1208.1481}: objects $a=W$ are potentials, the tensor product describes fusion and~$I_a$ is interpreted as the invisible defect. 

The diagrammatic language of Section~\ref{subsec:adjointdefects} is also applicable for general bicategories: for a 2-morphism we have 
$$
\Hom(X_1\otimes \ldots \otimes X_r, Y_1 \otimes \ldots \otimes Y_s) \ni \Phi
\equiv
\begin{tikzpicture}[very thick, scale=1.0, color=blue!50!black, baseline=-1.3]
\draw[
	decoration={markings, mark=at position 0.5 with {\arrow{>}}}, postaction={decorate}
	]
	(0,0) -- (120:1);
\draw[
	decoration={markings, mark=at position 0.5 with {\arrow{>}}}, postaction={decorate}
	]
	(0,0) -- (60:1);
\draw[dotted] (110:0.8) arc (110:70:0.8);
\fill (120:1) circle (0pt) node[left] {{\small $Y_1$}};
\fill (60:1) circle (0pt) node[right] {{\small $Y_s$}};
\draw[
	decoration={markings, mark=at position 0.5 with {\arrow{>}}}, postaction={decorate}
	]
	(-50:1) -- (0,0);
\draw[
	decoration={markings, mark=at position 0.5 with {\arrow{>}}}, postaction={decorate}
	]
	(-130:1) -- (0,0);
\fill (-130:1) circle (0pt) node[left] {{\small $X_1$}};
\fill (-50:1) circle (0pt) node[right] {{\small $X_r$}};
\draw[dotted] (-120:0.8) arc (-120:-60:0.8);
\fill (0,0) circle (2.5pt) node[left] {{\small $\Phi$}};
\end{tikzpicture}
\, , 
$$
ordinary concatenation is depicted vertically and the tensor product is understood horizontally. Hence any composition of 2-morphisms can be computed as the associated \textsl{string diagram} by reading the latter from bottom to top (operator product) and from right to left (fusion). The ``value'' of a string diagram depends only on its planar isotopy class, thus providing us with a convenient, rigorous graphical calculus in any bicategory; for details we refer to the original papers~\cite{JSGoTCI,JSGoTCII} and the review~\cite{ladia}. 

\medskip

A bicategory~$\B$ \textsl{has right adjoints} if for every $X\in\B(a,b)$ there is a $X^\dagger \in \B(b,a)$ together with maps
$$
\tev_X = \!
\begin{tikzpicture}[very thick,scale=1.0,color=blue!50!black, baseline=.4cm]
\draw[line width=0pt] 
(3,0) node[line width=0pt] (D) {{\small$X^\dagger$}}
(2,0) node[line width=0pt] (s) {{\small$X\vphantom{X^\dagger}$}}; 
\draw[redirected] (D) .. controls +(0,1) and +(0,1) .. (s);
\end{tikzpicture}
\!\!\! : X \otimes X^\dagger \lra I_b 
\, , \quad 
\tcoev_X = \!
\begin{tikzpicture}[very thick,scale=1.0,color=blue!50!black, baseline=-.4cm,rotate=180]
\draw[line width=0pt] 
(3,0) node[line width=0pt] (D) {{\small$X^\dagger$}}
(2,0) node[line width=0pt] (s) {{\small$X\vphantom{X^\dagger}$}}; 
\draw[directed] (D) .. controls +(0,1) and +(0,1) .. (s);
\end{tikzpicture}
\!\! : I_a \lra X^\dagger \otimes X
$$
that must satisfy the \textsl{Zorro moves}
$$
\begin{tikzpicture}[very thick,scale=0.85,color=blue!50!black, baseline=0cm]
\draw[line width=0] 
(1,1.25) node[line width=0pt] (A) {{\small $X$}}
(-1,-1.25) node[line width=0pt] (A2) {{\small $X$}}; 
\draw[redirected] (0,0) .. controls +(0,1) and +(0,1) .. (-1,0);
\draw[redirected] (1,0) .. controls +(0,-1) and +(0,-1) .. (0,0);
\draw (-1,0) -- (A2); 
\draw (1,0) -- (A); 
\end{tikzpicture}
=
\begin{tikzpicture}[very thick,scale=0.85,color=blue!50!black, baseline=0cm]
\draw[line width=0] 
(0,1.25) node[line width=0pt] (A) {{\small $X$}}
(0,-1.25) node[line width=0pt] (A2) {{\small $X$}}; 
\draw[
	decoration={markings, mark=at position 0.5 with {\arrow{<}}}, postaction={decorate}
	]
(A) -- (A2); 
\end{tikzpicture}
\, , \qquad
\begin{tikzpicture}[very thick,scale=0.85,color=blue!50!black, baseline=0cm]
\draw[line width=0] 
(-1,1.25) node[line width=0pt] (A) {{\small $X^\dagger$}}
(1,-1.25) node[line width=0pt] (A2) {{\small $X^\dagger$}}; 
\draw[redirected] (0,0) .. controls +(0,-1) and +(0,-1) .. (-1,0);
\draw[redirected] (1,0) .. controls +(0,1) and +(0,1) .. (0,0);
\draw (-1,0) -- (A); 
\draw (1,0) -- (A2); 
\end{tikzpicture}
=
\begin{tikzpicture}[very thick,scale=0.85,color=blue!50!black, baseline=0cm]
\draw[line width=0] 
(0,1.25) node[line width=0pt] (A) {{\small $X^\dagger$}}
(0,-1.25) node[line width=0pt] (A2) {{\small $X^\dagger$}}; 
\draw[
	decoration={markings, mark=at position 0.5 with {\arrow{<}}}, postaction={decorate}
	]
 (A2) -- (A); 
\end{tikzpicture}
. 
$$
Here we label upwards-oriented lines with~$X$ and downwards-oriented lines with~$X^\dagger$ (but soon we will only imagine such labels if they are clear from the context), and we do no longer show the lines pertaining to the units $I_a, I_b$ -- after all, they describe invisible defects. For a 2-morphism $\Phi \in \Hom(X,Y)$ we have
\be\label{eq:rightPhi}
\begin{tikzpicture}[very thick,scale=0.75,color=blue!50!black, baseline=.4cm]
\fill (3,0.5) circle (2.5pt) node[right] {{\small $\Phi^\dagger$}};
\draw (3,0) -- (3,1.5)
node[above] {{{\small$X^\dagger$}}};
\draw (3,0) -- (3,-0.5)
node[below] {{{\small$Y^\dagger$}}};
\end{tikzpicture} 
=
\begin{tikzpicture}[very thick,scale=0.75,color=blue!50!black, baseline=.4cm]
\draw[line width=0pt] 
(3,0.5) node[line width=0pt] (D) {}
(2,0.5) node[line width=0pt] (s) {}; 
\draw[redirected] (D) .. controls +(0,1) and +(0,1) .. (s);
\fill (2,0.5) circle (2.5pt) node[left] {{\small $\Phi$}};
\draw (2,0.32) -- (2,0.68);
\draw[line width=0pt] 
(2,0.5) node[line width=0pt] (D) {}
(1,0.5) node[line width=0pt] (s) {}; 
\draw[redirected] (D) .. controls +(0,-1) and +(0,-1) .. (s);
\draw (1,0.32) -- (1,1.5)
node[above] {{{\small$X^\dagger$}}};
\draw (3,0.68) -- (3,-0.5)
node[below] {{{\small$Y^\dagger$}}};
\end{tikzpicture} 
\, , \quad
\begin{tikzpicture}[very thick,scale=0.75,color=blue!50!black, baseline=.4cm]
\draw[line width=0pt] 
(3,0.5) node[line width=0pt] (D) {}
(2,0.5) node[line width=0pt] (s) {}; 
\draw[redirected] (D) .. controls +(0,-1) and +(0,-1) .. (s);
\fill (3,0.5) circle (2.5pt) node[left] {{\small $\Phi$}};
\draw (3,0.32) -- (3,1)
node[above] {{{\small$Y$}}};
\draw (2,0.32) -- (2,1)
node[above] {{{\small$X^\dagger$}}};
\end{tikzpicture} 
=
\begin{tikzpicture}[very thick,scale=0.75,color=blue!50!black, baseline=.4cm]
\draw[line width=0pt] 
(3,0.5) node[line width=0pt] (D) {}
(2,0.5) node[line width=0pt] (s) {}; 
\draw[redirected] (D) .. controls +(0,-1) and +(0,-1) .. (s);
\fill (2,0.5) circle (2.5pt) node[right] {{\small $\Phi^\dagger$}};
\draw (3,0.32) -- (3,1)
node[above] {{{\small$Y$}}};
\draw (2,0.32) -- (2,1)
node[above] {{{\small$X^\dagger$}}};
\end{tikzpicture} 
, \quad
\begin{tikzpicture}[very thick,scale=0.75,color=blue!50!black, baseline=.4cm]
\draw[line width=0pt] 
(3,0.5) node[line width=0pt] (D) {}
(2,0.5) node[line width=0pt] (s) {}; 
\draw[redirected] (D) .. controls +(0,1) and +(0,1) .. (s);
\fill (2,0.5) circle (2.5pt) node[right] {{\small $\Phi$}};
\draw (3,0.68) -- (3,0)
node[below] {{{\small$Y^\dagger$}}};
\draw (2,0.68) -- (2,0)
node[below] {{{\small$X\vphantom{\dY}$}}};
\end{tikzpicture} 
=
\begin{tikzpicture}[very thick,scale=0.75,color=blue!50!black, baseline=.4cm]
\draw[line width=0pt] 
(3,0.5) node[line width=0pt] (D) {}
(2,0.5) node[line width=0pt] (s) {}; 
\draw[redirected] (D) .. controls +(0,1) and +(0,1) .. (s);
\fill (3,0.5) circle (2.5pt) node[left] {{\small $\Phi^\dagger$}};
\draw (3,0.68) -- (3,0)
node[below] {{{\small$Y^\dagger$}}};
\draw (2,0.68) -- (2,0)
node[below] {{{\small$X\vphantom{\dY}$}}};
\end{tikzpicture} 
\ee
where the first identity is the definition of $\Phi^\dagger \in \Hom(Y^\dagger, X^\dagger)$, from which the other two identities follow by applying the Zorro moves. 

Similarly, $\B$ \textsl{has left adjoints} if for every $X\in\B(a,b)$ there is a $\dX\in\B(b,a)$ together with maps
$$
\ev_X = \!
\begin{tikzpicture}[very thick,scale=1.0,color=blue!50!black, baseline=.4cm]
\draw[line width=0pt] 
(3,0) node[line width=0pt] (D) {{\small$X\vphantom{X^\dagger}$}}
(2,0) node[line width=0pt] (s) {{\small$\dX$}}; 
\draw[directed] (D) .. controls +(0,1) and +(0,1) .. (s);
\end{tikzpicture}
\!\! : \dX \otimes X \lra I_a
\, , \quad
\coev_X = \!
\begin{tikzpicture}[very thick,scale=1.0,color=blue!50!black, baseline=-.4cm,rotate=180]
\draw[line width=0pt] 
(3,0) node[line width=0pt] (D) {{\small$X\vphantom{X^\dagger}$}}
(2,0) node[line width=0pt] (s) {{\small$\dX$}}; 
\draw[redirected] (D) .. controls +(0,1) and +(0,1) .. (s);
\end{tikzpicture}
\!\! : I_b \lra X \otimes \dX
$$
satisfying their versions of Zorro moves. (The details as well as the analogue of~\eqref{eq:rightPhi} we leave as a mystery to the reader.) Under the assumption that $\dX = X^\dagger$ and $\dPhi = \Phi^\dagger$ for all 1- and 2-morphisms we call the bicategory \textsl{pivotal} if 
\be\label{eq:pivotal}
\begin{tikzpicture}[very thick,scale=0.65,color=blue!50!black, baseline=-0.2cm, rotate=180]
\draw[line width=1pt] 
(2,-1.5) node[line width=0pt] (Y) {{\small $Y^\dagger$}}
(3,-1.5) node[line width=0pt] (X) {{\small $X^\dagger$}}
(-1,2) node[line width=0pt] (XY) {{\small $(Y\otimes X)^\dagger$}}; 
\draw[redirected] (1,0) .. controls +(0,1) and +(0,1) .. (2,0);
\draw[redirected] (0,0) .. controls +(0,2) and +(0,2) .. (3,0);
\draw[redirected, ultra thick] (-1,0) .. controls +(0,-1) and +(0,-1) .. (0.5,0);
\draw (2,0) -- (Y);
\draw (3,0) -- (X);
\draw[dotted] (0,0) -- (1,0);
\draw[ultra thick] (-1,0) -- (XY);
\end{tikzpicture}
\! =  \!
\begin{tikzpicture}[very thick,scale=0.65,color=blue!50!black, baseline=-0.2cm, rotate=180]
\draw[line width=1pt] 
(-2,-1.5) node[line width=0pt] (X) {{\small $Y^\dagger$}}
(-1,-1.5) node[line width=0pt] (Y) {{\small $X^\dagger$}}
(2,2) node[line width=0pt] (XY) {{\small $(Y\otimes X)^\dagger$}}; 
\draw[redirected] (0,0) .. controls +(0,1) and +(0,1) .. (-1,0);
\draw[redirected] (1,0) .. controls +(0,2) and +(0,2) .. (-2,0);
\draw[redirected, ultra thick] (2,0) .. controls +(0,-1) and +(0,-1) .. (0.5,0);
\draw (-1,0) -- (Y);
\draw (-2,0) -- (X);
\draw[dotted] (0,0) -- (1,0);
\draw[ultra thick] (2,0) -- (XY);
\end{tikzpicture}
\ee
for all composable $X,Y$. In a pivotal bicategory the \textsl{left} and \textsl{right quantum dimensions} of a 1-morphism~$X$ are defined as 
\be\label{eq:qdims}
\dim_{\text{l}}(X) = 
\begin{tikzpicture}[very thick,scale=0.45,color=blue!50!black, baseline]
\draw (0,0) circle (2);
\draw[<-, very thick] (0.100,-2) -- (-0.101,-2) node[above] {}; 
\draw[<-, very thick] (-0.100,2) -- (0.101,2) node[below] {}; 
\fill (0:2) circle (0pt) node[left] {{\small$X$}};
\end{tikzpicture} 
\, , \quad
\dim_{\text{r}}(X) = 
\begin{tikzpicture}[very thick,scale=0.45,color=blue!50!black, baseline]
\draw (0,0) circle (2);
\draw[->, very thick] (0.100,-2) -- (-0.101,-2) node[above] {}; 
\draw[->, very thick] (0.100,2) -- (0.101,2) node[below] {}; 
\fill (180:2) circle (0pt) node[right] {{\small$X$}};
\end{tikzpicture} 
\, . 
\ee

\medskip

As shown in~\cite{cr1006.5609, bfk1105.3177, cm1208.1481} the bicategory $\LG$ of Landau-Ginzburg models does have left and right adjoints, see Section~\ref{subsec:adjointdefects} for the details. However, $\dX$ and~$X^\dagger$ are the same only up to a possible shift, so~\eqref{eq:pivotal} cannot be strictly true for $\LG$ in general. This issue was carefully addressed in~\cite[Sect.\,7]{cm1208.1481}, see also Remark~\ref{rem:twistedpivotal}. The upshot is that for our purposes we may nevertheless assume pivotality~\eqref{eq:pivotal} to hold generally for Landau-Ginzburg models as the signs introduced by shifts will cancel out in all our applications (such as~\eqref{eq:qdims} above). The same is true for other examples such as A- and B-models.

\subsubsection*{Algebras and modules}

As we have seen every defect in topological Landau-Ginzburg models has adjoints; we will now consider properties that only certain special defects have -- the kind that will play a central role in generalising the discussion of Section~\ref{sec:ordinaryLGorbs} based on the symmetry defect~$A_G$. This can be done while continuing to work within the general framework of a pivotal bicategory~$\B$. 

An \textsl{algebra} is a 1-morphism $A\in \B(a,a)$ together with maps 
$$
\begin{tikzpicture}[very thick,scale=0.75,color=green!50!black, baseline=0.4cm]
\draw[-dot-] (3,0) .. controls +(0,1) and +(0,1) .. (2,0);
\draw (2.5,0.75) -- (2.5,1.5); 
\end{tikzpicture} 
: A\otimes A \lra A
\, , \quad 
\begin{tikzpicture}[very thick,scale=0.75,color=green!50!black, baseline]
\draw (0,-0.5) node[Odot] (D) {}; 
\draw (D) -- (0,0.6); 
\end{tikzpicture} 
: I_a \lra A
$$
with 
\be\label{eq:associativeAlgebra}
\begin{tikzpicture}[very thick,scale=0.75,color=green!50!black, baseline=0.6cm]
\draw[-dot-] (3,0) .. controls +(0,1) and +(0,1) .. (2,0);
\draw[-dot-] (2.5,0.75) .. controls +(0,1) and +(0,1) .. (3.5,0.75);
\draw (3.5,0.75) -- (3.5,0); 
\draw (3,1.5) -- (3,2.25); 
\end{tikzpicture} 
=
\begin{tikzpicture}[very thick,scale=0.75,color=green!50!black, baseline=0.6cm]
\draw[-dot-] (3,0) .. controls +(0,1) and +(0,1) .. (2,0);
\draw[-dot-] (2.5,0.75) .. controls +(0,1) and +(0,1) .. (1.5,0.75);
\draw (1.5,0.75) -- (1.5,0); 
\draw (2,1.5) -- (2,2.25); 
\end{tikzpicture} 
\, , \quad
\begin{tikzpicture}[very thick,scale=0.75,color=green!50!black, baseline=-0.2cm]
\draw (-0.5,-0.5) node[Odot] (unit) {}; 
\fill (0,0.6) circle (2.5pt) node (meet) {};
\draw (unit) .. controls +(0,0.5) and +(-0.5,-0.5) .. (0,0.6);
\draw (0,-1) -- (0,1); 
\end{tikzpicture} 
=
\begin{tikzpicture}[very thick,scale=0.75,color=green!50!black, baseline=-0.2cm]
\draw (0,-1) -- (0,1); 
\end{tikzpicture} 
=
\begin{tikzpicture}[very thick,scale=0.75,color=green!50!black, baseline=-0.2cm]
\draw (0.5,-0.5) node[Odot] (unit) {}; 
\fill (0,0.6) circle (2.5pt) node (meet) {};
\draw (unit) .. controls +(0,0.5) and +(0.5,-0.5) .. (0,0.6);
\draw (0,-1) -- (0,1); 
\end{tikzpicture} 
\, . 
\ee
This is the obvious generalisation of an ordinary associative unital algebra in the category of vector spaces. Similarly, we call~$A$ a \textsl{coalgebra} if it has maps 
$$
\begin{tikzpicture}[very thick,scale=0.75,color=green!50!black, baseline=-0.6cm, rotate=180]
\draw[-dot-] (3,0) .. controls +(0,1) and +(0,1) .. (2,0);
\draw (2.5,0.75) -- (2.5,1.5); 
\end{tikzpicture} 
: A \lra A \otimes A
\, , \quad
\begin{tikzpicture}[very thick,scale=0.75,color=green!50!black, baseline, rotate=180]
\draw (0,-0.5) node[Odot] (D) {}; 
\draw (D) -- (0,0.6); 
\end{tikzpicture} 
: A \lra I_a
$$
with
\be\label{eq:coassociativeCoalgebra}
\begin{tikzpicture}[very thick,scale=0.75,color=green!50!black, baseline=-1.0cm, rotate=180]
\draw[-dot-] (3,0) .. controls +(0,1) and +(0,1) .. (2,0);
\draw[-dot-] (2.5,0.75) .. controls +(0,1) and +(0,1) .. (1.5,0.75);
\draw (1.5,0.75) -- (1.5,0); 
\draw (2,1.5) -- (2,2.25); 
\end{tikzpicture} 
=
\begin{tikzpicture}[very thick,scale=0.75,color=green!50!black, baseline=-1.0cm, rotate=180]
\draw[-dot-] (3,0) .. controls +(0,1) and +(0,1) .. (2,0);
\draw[-dot-] (2.5,0.75) .. controls +(0,1) and +(0,1) .. (3.5,0.75);
\draw (3.5,0.75) -- (3.5,0); 
\draw (3,1.5) -- (3,2.25); 
\end{tikzpicture} 
\, , \quad
\begin{tikzpicture}[very thick,scale=0.75,color=green!50!black, baseline=-0.1cm, rotate=180]
\draw (0.5,-0.5) node[Odot] (unit) {}; 
\fill (0,0.6) circle (2.5pt) node (meet) {};
\draw (unit) .. controls +(0,0.5) and +(0.5,-0.5) .. (0,0.6);
\draw (0,-1) -- (0,1); 
\end{tikzpicture} 
=
\begin{tikzpicture}[very thick,scale=0.75,color=green!50!black, baseline=-0.1cm, rotate=180]
\draw (0,-1) -- (0,1); 
\end{tikzpicture} 
=
\begin{tikzpicture}[very thick,scale=0.75,color=green!50!black, baseline=-0.1cm, rotate=180]
\draw (-0.5,-0.5) node[Odot] (unit) {}; 
\fill (0,0.6) circle (2.5pt) node (meet) {};
\draw (unit) .. controls +(0,0.5) and +(-0.5,-0.5) .. (0,0.6);
\draw (0,-1) -- (0,1); 
\end{tikzpicture} 
\, . 
\ee

Now let~$A$ be both an algebra and a coalgebra. It is \textsl{Frobenius} if the identities 
\be\label{eq:FrobeniusProperty}
\begin{tikzpicture}[very thick,scale=0.75,color=green!50!black, baseline=0cm]
\draw[-dot-] (0,0) .. controls +(0,-1) and +(0,-1) .. (-1,0);
\draw[-dot-] (1,0) .. controls +(0,1) and +(0,1) .. (0,0);
\draw (-1,0) -- (-1,1.5); 
\draw (1,0) -- (1,-1.5); 
\draw (0.5,0.8) -- (0.5,1.5); 
\draw (-0.5,-0.8) -- (-0.5,-1.5); 
\end{tikzpicture}
=
\begin{tikzpicture}[very thick,scale=0.75,color=green!50!black, baseline=0cm]
\draw[-dot-] (0,1.5) .. controls +(0,-1) and +(0,-1) .. (1,1.5);
\draw[-dot-] (0,-1.5) .. controls +(0,1) and +(0,1) .. (1,-1.5);
\draw (0.5,-0.8) -- (0.5,0.8); 
\end{tikzpicture}
=
\begin{tikzpicture}[very thick,scale=0.75,color=green!50!black, baseline=0cm]
\draw[-dot-] (0,0) .. controls +(0,1) and +(0,1) .. (-1,0);
\draw[-dot-] (1,0) .. controls +(0,-1) and +(0,-1) .. (0,0);
\draw (-1,0) -- (-1,-1.5); 
\draw (1,0) -- (1,1.5); 
\draw (0.5,-0.8) -- (0.5,-1.5); 
\draw (-0.5,0.8) -- (-0.5,1.5); 
\end{tikzpicture}
\ee
hold. We call~$A$ \textsl{separable} if 
\be\label{eq:separability}
\begin{tikzpicture}[very thick,scale=0.75,color=green!50!black, baseline=0cm]
\draw[-dot-] (0,0) .. controls +(0,-1) and +(0,-1) .. (1,0);
\draw[-dot-] (0,0) .. controls +(0,1) and +(0,1) .. (1,0);
\draw (0.5,-0.8) -- (0.5,-1.2); 
\draw (0.5,0.8) -- (0.5,1.2); 
\end{tikzpicture}
= 
\begin{tikzpicture}[very thick,scale=0.75,color=green!50!black, baseline=0cm]
\draw (0.5,-1.2) -- (0.5,1.2); 
\end{tikzpicture}
\ee
and \textsl{symmetric} if 
\be\label{eq:symmetric}
\begin{tikzpicture}[very thick,scale=0.75,color=green!50!black, baseline=0cm]
\draw[-dot-] (0,0) .. controls +(0,1) and +(0,1) .. (-1,0);
\draw[directedgreen, color=green!50!black] (1,0) .. controls +(0,-1) and +(0,-1) .. (0,0);
\draw (-1,0) -- (-1,-1.5); 
\draw (1,0) -- (1,1.5); 
\draw (-0.5,1.2) node[Odot] (end) {}; 
\draw (-0.5,0.8) -- (end); 
\end{tikzpicture}
= 
\begin{tikzpicture}[very thick,scale=0.75,color=green!50!black, baseline=0cm]
\draw[redirectedgreen, color=green!50!black] (0,0) .. controls +(0,-1) and +(0,-1) .. (-1,0);
\draw[-dot-] (1,0) .. controls +(0,1) and +(0,1) .. (0,0);
\draw (-1,0) -- (-1,1.5); 
\draw (1,0) -- (1,-1.5); 
\draw (0.5,1.2) node[Odot] (end) {}; 
\draw (0.5,0.8) -- (end); 
\end{tikzpicture}
\, . 
\ee

A map $\phi \in \End(A)$ is an \textsl{algebra morphism} if 
\be\label{eq:AlgebraMorphism}
\begin{tikzpicture}[very thick,scale=0.75,color=green!50!black, baseline=0.4cm]
\draw[-dot-] (3,0) .. controls +(0,1) and +(0,1) .. (2,0);
\draw (2.5,0.75) -- (2.5,1.5); 
\fill (2,0) circle (2.5pt) node[left] (alpha) {{\small $\phi$}};
\fill (3,0) circle (2.5pt) node[right] (beta) {{\small $\phi$}};
\draw (2,0) -- (2,-0.5); 
\draw (3,0) -- (3,-0.5); 
\end{tikzpicture} 
=
\begin{tikzpicture}[very thick,scale=0.75,color=green!50!black, baseline=0.4cm]
\draw[-dot-] (3,0) .. controls +(0,1) and +(0,1) .. (2,0);
\draw (2.5,0.75) -- (2.5,1.5); 
\fill (2.5,1.15) circle (2.5pt) node[left] (alpha) {{\small $\phi$}};
\draw (2,0) -- (2,-0.5); 
\draw (3,0) -- (3,-0.5); 
\end{tikzpicture} 
\, , \quad
\begin{tikzpicture}[very thick,scale=0.75,color=green!50!black, baseline=0.4cm]
\draw (0,-0.4) node[Odot] (D) {}; 
\draw (D) -- (0,1.5); 
\fill (0,0.65) circle (2.5pt) node[left] (alpha) {{\small $\phi$}};
\end{tikzpicture} 
=
\begin{tikzpicture}[very thick,scale=0.75,color=green!50!black, baseline=0.4cm]
\draw (0,-0.4) node[Odot] (D) {}; 
\draw (D) -- (0,1.5); 
\end{tikzpicture} 
\, . 
\ee
If~$A$ is Frobenius then an important example is the \textsl{Nakayama automorphism}
\be\label{eq:Nakayama}
\gamma_A = 
\begin{tikzpicture}[very thick, scale=0.5,color=green!50!black, baseline=-0.35cm]
\draw (0,0.8) -- (0,2);
\draw[-dot-] (0,0.8) .. controls +(0,-0.5) and +(0,-0.5) .. (-0.75,0.8);
\draw[directedgreen, color=green!50!black] (-0.75,0.8) .. controls +(0,0.5) and +(0,0.5) .. (-1.5,0.8);
\draw[-dot-] (0,-1.8) .. controls +(0,0.5) and +(0,0.5) .. (-0.75,-1.8);
\draw[redirectedgreen, color=green!50!black] (-0.75,-1.8) .. controls +(0,-0.5) and +(0,-0.5) .. (-1.5,-1.8);
\draw (0,-1.8) -- (0,-3);
\draw (-1.5,0.8) -- (-1.5,-1.8);
\draw (-0.375,-0.2) node[Odot] (D) {}; 
\draw (-0.375,0.4) -- (D);
\draw (-0.375,-0.8) node[Odot] (E) {}; 
\draw (-0.375,-1.4) -- (E);
\end{tikzpicture}
\, , \quad
\gamma_A^{-1} = 
\begin{tikzpicture}[very thick, scale=0.5,color=green!50!black, baseline=-0.35cm]
\draw (0,0.8) -- (0,2);
\draw[-dot-] (0,0.8) .. controls +(0,-0.5) and +(0,-0.5) .. (0.75,0.8);
\draw[directedgreen, color=green!50!black] (0.75,0.8) .. controls +(0,0.5) and +(0,0.5) .. (1.5,0.8);
\draw[-dot-] (0,-1.8) .. controls +(0,0.5) and +(0,0.5) .. (0.75,-1.8);
\draw[redirectedgreen, color=green!50!black] (0.75,-1.8) .. controls +(0,-0.5) and +(0,-0.5) .. (1.5,-1.8);
\draw (0,-1.8) -- (0,-3);
\draw (1.5,0.8) -- (1.5,-1.8);
\draw (0.375,-0.2) node[Odot] (D) {}; 
\draw (0.375,0.4) -- (D);
\draw (0.375,-0.8) node[Odot] (E) {}; 
\draw (0.375,-1.4) -- (E);
\end{tikzpicture}
\, . 
\ee
It is a measure for how far~$A$ is away from being symmetric in the sense that~$A$ is symmetric iff $\gamma_A = 1_A$, see e.\,g.~\cite{fs0901.4886}. 
We will see that the Nakayama automorphism plays a central role in our discussion, as does its variant
\be\label{eq:NakayamaPrime}
\gamma'_A = 
\begin{tikzpicture}[very thick, scale=0.5,color=green!50!black, baseline=-0.35cm]
\draw (0,0.8) -- (0,2);
\draw[directedgreen] (0,0.8) .. controls +(0,-0.5) and +(0,-0.5) .. (0.75,0.8);
\draw[-dot-, color=green!50!black] (0.75,0.8) .. controls +(0,0.5) and +(0,0.5) .. (1.5,0.8);
\draw[redirectedgreen] (0,-1.8) .. controls +(0,0.5) and +(0,0.5) .. (0.75,-1.8);
\draw[-dot-, color=green!50!black] (0.75,-1.8) .. controls +(0,-0.5) and +(0,-0.5) .. (1.5,-1.8);
\draw (0,-1.8) -- (0,-3);
\draw (1.5,0.8) -- (1.5,-1.8);
\draw (1.125,1.78) node[Odot] (D) {}; 
\draw (1.125,1.25) -- (D);
\draw (1.125,-2.78) node[Odot] (E) {}; 
\draw (1.125,-2.1) -- (E);
\end{tikzpicture}
\, , \quad
(\gamma'_A)^{-1} = 
\begin{tikzpicture}[very thick, scale=0.5,color=green!50!black, baseline=-0.35cm]
\draw (0,0.8) -- (0,2);
\draw[directedgreen] (0,0.8) .. controls +(0,-0.5) and +(0,-0.5) .. (-0.75,0.8);
\draw[-dot-, color=green!50!black] (-0.75,0.8) .. controls +(0,0.5) and +(0,0.5) .. (-1.5,0.8);
\draw[redirectedgreen] (0,-1.8) .. controls +(0,0.5) and +(0,0.5) .. (-0.75,-1.8);
\draw[-dot-, color=green!50!black] (-0.75,-1.8) .. controls +(0,-0.5) and +(0,-0.5) .. (-1.5,-1.8);
\draw (0,-1.8) -- (0,-3);
\draw (-1.5,0.8) -- (-1.5,-1.8);
\draw (-1.125,1.78) node[Odot] (D) {}; 
\draw (-1.125,1.25) -- (D);
\draw (-1.125,-2.78) node[Odot] (E) {}; 
\draw (-1.125,-2.1) -- (E);
\end{tikzpicture}
\, . 
\ee

\begin{example}\label{ex:A_G}
An important special case of a separable Frobenius algebra is the symmetry defect $A_G = \bigoplus_{g\in G} {}_g I$ of~\eqref{eq:AG} in the bicategory $\B = \LG$ of Landau-Ginzburg models (and similarly in other theories). Indeed, as shown in~\cite[Sect.\,7.1]{cr1210.6363}, $A_G$ together with multiplication $\sum_{g,h\in G} {}_h (\lambda_{{}_g I})$ and comultiplication $\frac{1}{|G|}\sum_{g,h\in G} {}_h (\lambda^{-1}_{{}_g I})$ is separable and Frobenius, and it is symmetric iff $\dim_{\text{r}}({}_g I) = 1$ for all $g\in G$. 

To compute the quantum dimension of ${}_g I \cong I_{g^{-1}}$ recall for example from~\cite{kr0405232,pv1002.2116} that $I=I_W$ has quantum dimensions one, so in particular
\be\label{eq:dimI}
\dim_{\text{r}}(I) = 
(-1)^{\binom{n+1}{2}}\Res \left[ \frac{ \str\big(  \partial_{x'_1} d_{I}\ldots \partial_{x'_n} d_{I} \,  \partial_{x_1} d_{I}\ldots \partial_{x_n} d_{I}\big) \operatorname{d}\! x'}{\partial_{x'_1} W, \ldots, \partial_{x'_n} W} \right] 
= 1 \, .
\ee
Using the explicit presentation~\eqref{eq:gIdefect} for~${}_g I$ together with the invariance of~$W$ under $g\in G$, we find that after a variable transformation $x'\mapsto g^{-1}(x')$ the only dependence of $\dim_{\text{r}}({}_g I)$ on~$g$ is via $\operatorname{d}\! x' \mapsto \det(g) \cdot \operatorname{d}\! x'$ in~\eqref{eq:dimI}. The left quantum dimension is computed analogously, and thus we have 
\be\label{eq:qdimIg}
\dim_{\text{r}}({}_g I) = \dim_{\text{r}}(I_{g^{-1}}) = \det(g)
 \, , \quad 
\dim_{\text{l}}({}_g I) = \dim_{\text{l}}(I_{g^{-1}}) = \det(g)^{-1} \, . 
\ee

Next we determine the Nakayama automorphism~$\gamma_{A_G}$ for~$A_G$: 
$$
\gamma_{A_G} = 
\sum_{g\in G} 
\begin{tikzpicture}[very thick, scale=0.5,color=green!65!black, baseline=-0.35cm]
\draw (0,0.8) -- (0,1.5); 
\draw (0,-1.8) -- (0,-2.5); 
\draw[-dot-] (0,0.8) .. controls +(0,-0.5) and +(0,-0.5) .. (-0.75,0.8);
\draw[directedlightgreen, color=green!65!black] (-0.75,0.8) .. controls +(0,0.5) and +(0,0.5) .. (-1.5,0.8);
\draw[-dot-] (0,-1.8) .. controls +(0,0.5) and +(0,0.5) .. (-0.75,-1.8);
\draw[redirectedlightgreen, color=green!65!black] (-0.75,-1.8) .. controls +(0,-0.5) and +(0,-0.5) .. (-1.5,-1.8);
\draw (-1.5,0.8) -- (-1.5,-1.8);
\draw (-0.375,-0.2) node[Odot] (D) {}; 
\draw (-0.375,0.4) -- (D);
\draw (-0.375,-0.8) node[Odot] (E) {}; 
\draw (-0.375,-1.4) -- (E);
\fill[color=green!65!black] (-0.25,1.3) circle (0pt) node[right] (alpha) {{\tiny $g$}};
\fill[color=green!65!black] (-0.25,-2.3) circle (0pt) node[right] (alpha) {{\tiny $g$}};
\fill[color=green!65!black] (-1.7,-0.5) circle (0pt) node[right] (alpha) {{\tiny $g$}};
\end{tikzpicture}
\eq^{(1)}
\sum_{g\in G} 
\begin{tikzpicture}[very thick, scale=0.5,color=green!65!black, baseline=-0.35cm]
\draw (0,0.8) -- (0,1.5); 
\draw (0,-1.8) -- (0,-2.5); 
\draw[-dot-] (0,0.8) .. controls +(0,-0.5) and +(0,-0.5) .. (-0.75,0.8);
\draw[directedlightgreen, color=green!65!black] (-0.75,0.8) .. controls +(0,0.5) and +(0,0.5) .. (-1.5,0.8);
\draw[-dot-] (0,-1.8) .. controls +(0,0.5) and +(0,0.5) .. (-0.75,-1.8);
\draw[redirectedlightgreen, color=green!65!black] (-0.75,-1.8) .. controls +(0,-0.5) and +(0,-0.5) .. (-1.5,-1.8);
\draw (-1.5,0.8) -- (-1.5,-1.8);
\draw[dashed] (-0.375,0.4) -- (-0.375,-1.4);
\fill[color=green!65!black] (-0.25,1.3) circle (0pt) node[right] (alpha) {{\tiny $g$}};
\fill[color=green!65!black] (-0.25,-2.3) circle (0pt) node[right] (alpha) {{\tiny $g$}};
\fill[color=green!65!black] (-1.7,-0.5) circle (0pt) node[right] (alpha) {{\tiny $g$}};
\fill[color=green!65!black] (-0.55,0.35) circle (0pt) node[right] (alpha) {{\tiny ${}_{g^{-1}} (\lambda_{{}_g I}^{-1})$}};
\fill[color=green!65!black] (-0.55,-1.05) circle (0pt) node[right] (alpha) {{\tiny ${}_{g^{-1}} (\lambda_{{}_g I})$}};
\end{tikzpicture}
\eq^{(2)}
\sum_{g\in G} 
\begin{tikzpicture}[very thick, scale=0.5,color=green!65!black, baseline=-0.1cm]
\draw (2.5,2) -- (2.5,-2); 
\draw (0,0) circle (1.25);
\draw[<-] (0.100,-1.25) -- (-0.101,-1.25) node[above] {}; 
\draw[<-] (-0.100,1.25) -- (0.101,1.25) node[below] {}; 
\draw[dashed] (-45:1.25) .. controls +(0.3,0) and +(-0.2,-0.3) .. (2.5,0.75);
\fill (-45:1.25) circle (3.3pt) node[left] (beta) {};
\fill (2.5,0.75) circle (3.3pt) node[left] (beta) {};
\fill[color=green!65!black] (2.25,1.8) circle (0pt) node[right] (alpha) {{\tiny $g$}};
\fill[color=green!65!black] (2.25,-1.8) circle (0pt) node[right] (alpha) {{\tiny $g$}};
\fill[color=green!65!black] (-1.5,0) circle (0pt) node[right] (alpha) {{\tiny $g$}};
\fill[color=green!65!black] (2.4,0.65) circle (0pt) node[right] (alpha) {{\tiny $\lambda_{{}_g I}$}};
\fill[color=green!65!black] (-0.2,-1.3) circle (0pt) node[right] (alpha) {{\tiny ${}_{g^{-1}} (\lambda^{-1}_{I})$}};
\end{tikzpicture}
\eq^{(3)}
\sum_{g\in G} 
\begin{tikzpicture}[very thick, scale=0.5,color=green!65!black, baseline=-0.1cm]
\draw (2.5,2) -- (2.5,-2); 
\draw (0,0) circle (1.25);
\draw[<-] (0.100,-1.25) -- (-0.101,-1.25) node[above] {}; 
\draw[<-] (-0.100,1.25) -- (0.101,1.25) node[below] {}; 
\draw[dashed] (-45:1.25) .. controls +(0.3,0) and +(-0.2,-0.3) .. (2.5,0.75);
\fill (-45:1.25) circle (3.3pt) node[left] (beta) {};
\fill (2.5,0.75) circle (3.3pt) node[left] (beta) {};
\fill[color=green!65!black] (2.25,1.8) circle (0pt) node[right] (alpha) {{\tiny $g$}};
\fill[color=green!65!black] (2.25,-1.8) circle (0pt) node[right] (alpha) {{\tiny $g$}};
\fill[color=green!65!black] (-1.5,0) circle (0pt) node[right] (alpha) {{\tiny $g$}};
\fill[color=green!65!black] (2.4,0.65) circle (0pt) node[right] (alpha) {{\tiny $\lambda_{{}_g I}$}};
\fill[color=green!65!black] (-0,-1.5) circle (0pt) node[right] (alpha) {{\tiny $\rho_{{}_{g^{-1}}I}^{-1}$}};
\end{tikzpicture}
$$
where we used (1) the explicit (co)multiplication maps for~$A_G$, (2) its Frobenius property, and (3) the identity ${}_{g^{-1}} (\lambda_I^{-1}) = \rho_{{}_{g^{-1}} I}^{-1}$ which can be read off of~\eqref{eq:rhoinverse}. 
Hence by~\eqref{eq:qdimIg} we find 
$$
\gamma_{A_G} = \sum_{g\in G} \det(g) \cdot 1_{{}_g I} \, . 
$$
\end{example}

\begin{example}\label{ex:XdaggerX}
Let $X\in \B(a,b)$ be any 1-morphism in a pivotal bicategory~$\B$. Then $A = X^\dagger \otimes X$ is a symmetric Frobenius algebra in $\B(a,a)$, and~$A$ is separable if $\dim_{\text{r}}(X) \in \End(I_a)$ is invertible as shown in~\cite[Thm.\,4.3]{cr1210.6363}. 
\end{example} 

\medskip

A discussion of algebras in the framework of bicategories should naturally be complemented by their appropriate representation theory. Given an algebra $A\in \B(a,a)$, we say that $X\in\B(a,b)$ is a \textsl{right $A$-module} if there is a map
\be\label{eq:rightAmod}
\begin{tikzpicture}[very thick,scale=0.75,color=blue!50!black, baseline]

\draw (0,-1) node[left] (X) {{\small$X$}};
\draw (0,1) node[left] (Xu) {{\small$X$}};

\draw (0,-1) -- (0,1); 

\fill[color=green!50!black] (0,0.0) circle (2.5pt) node (meet) {};
\draw[color=green!50!black] (0.5,-1) .. controls +(0,0.5) and +(0.5,-0.5) .. (0,0.0);
\end{tikzpicture} 
: X \otimes A \lra X 
\quad\text{such that}\quad
\begin{tikzpicture}[very thick,scale=0.75,color=blue!50!black, baseline]

\draw (0,-1) node[left] (X) {};
\draw (0,1) node[left] (Xu) {};

\draw (0,-1) -- (0,1); 

\fill[color=green!50!black] (0,-0.25) circle (2.5pt) node (meet) {};
\fill[color=green!50!black] (0,0.75) circle (2.5pt) node (meet2) {};
\draw[color=green!50!black] (0.5,-1) .. controls +(0,0.25) and +(0.25,-0.25) .. (0,-0.25);
\draw[color=green!50!black] (1,-1) .. controls +(0,0.5) and +(0.5,-0.5) .. (0,0.75);
\end{tikzpicture} 
=
\begin{tikzpicture}[very thick,scale=0.75,color=blue!50!black, baseline]

\draw (0,-1) node[left] (X) {};
\draw (0,1) node[left] (Xu) {};

\draw (0,-1) -- (0,1); 

\fill[color=green!50!black] (0,0.75) circle (2.5pt) node (meet2) {};

\draw[-dot-, color=green!50!black] (0.5,-1) .. controls +(0,1) and +(0,1) .. (1,-1);

\draw[color=green!50!black] (0.75,-0.2) .. controls +(0,0.5) and +(0.5,-0.5) .. (0,0.75);
\end{tikzpicture} 
\, , \quad
\begin{tikzpicture}[very thick,scale=0.75,color=blue!50!black, baseline]
\draw (0,-1) -- (0,1); 
\draw (0,-1) node[left] (X) {};
\draw (0,1) node[left] (Xu) {};
\end{tikzpicture} 
=
\begin{tikzpicture}[very thick,scale=0.75,color=blue!50!black, baseline]
\draw (0,-1) -- (0,1); 
\draw (0,-1) node[left] (X) {};
\draw (0,1) node[left] (Xu) {};

\draw[color=green!50!black]  (0.5,-0.5) node[Odot] (unit) {}; 
\fill[color=green!50!black]  (0,0.6) circle (2.5pt) node (meet) {};
\draw[color=green!50!black]  (unit) .. controls +(0,0.5) and +(0.5,-0.5) .. (0,0.6);
\end{tikzpicture} 
\, . 
\ee

Since we reserve different colours for algebras and modules we may often refrain from labelling lines in string diagrams like the above. If $Y\in\B(a,b)$ is another right $A$-module a 2-morphism $\Phi: X\rightarrow Y$ is called a \textsl{module map} if it is compatible with the action of~$A$, i.\,e. 
\be\label{eq:ModuleMap}
\begin{tikzpicture}[very thick,scale=0.75,color=blue!50!black, baseline]

\draw (0,-1) node[left] (X) {{\small$X$}};
\draw (0,1) node[left] (Xu) {{\small$Y$}};

\draw (0,-1) -- (0,1); 

\fill (0,-0.3) circle (2.5pt) node[left] (phi) {{\small $\Phi$}};

\fill[color=green!50!black] (0,0.3) circle (2.5pt) node (meet) {};
\draw[color=green!50!black] (0.5,-1) .. controls +(0,0.5) and +(0.5,-0.5) .. (0,0.3);
\end{tikzpicture} 
=
\begin{tikzpicture}[very thick,scale=0.75,color=blue!50!black, baseline]

\draw (0,-1) node[left] (X) {{\small$X$}};
\draw (0,1) node[left] (Xu) {{\small$Y$}};

\draw (0,-1) -- (0,1); 

\fill (0,0.3) circle (2.5pt) node[left] (phi) {{\small $\Phi$}};

\fill[color=green!50!black] (0,-0.3) circle (2.5pt) node (meet) {};
\draw[color=green!50!black] (0.5,-1) .. controls +(0,0.25) and +(0.25,-0.25) .. (0,-0.3);
\end{tikzpicture} 
\, . 
\ee
We write $\Hom_A(X,Y)$ for the space of all module maps. If~$A$ is separable Frobenius one can show (see e.\,g.~\cite[Lem.\,2.2]{cr1210.6363}) that 
\be\label{eq:piA}
\Hom_A(X,Y) = \operatorname{im}(\pi_A) 
\quad\text{with}\quad
\pi_A : \Phi \lmt
\begin{tikzpicture}[very thick,scale=0.75,color=blue!50!black, baseline=0cm]

\draw (1.5,-1.75) -- (1.5,1.75); 

\draw (1.5,-1.75) node[left] (X) {{\small$X$}};
\draw (1.5,1.75) node[left] (Xu) {{\small$Y$}};

\fill (1.5,0.7) circle (2.5pt) node[left] (phi) {{\small $\Phi$}};

\fill[color=green!50!black] (1.5,0.3) circle (2.5pt) node (meet) {};
\fill[color=green!50!black] (1.5,1.1) circle (2.5pt) node (meet) {};

\draw[color=green!50!black] (2,-0.25) .. controls +(0.0,0.25) and +(0.25,-0.25) .. (1.5,0.3);
\draw[color=green!50!black] (2.5,-0.25) .. controls +(0.0,0.5) and +(0.25,-0.25) .. (1.5,1.1);

\draw[-dot-, color=green!50!black] (2.5,-0.25) .. controls +(0,-0.5) and +(0,-0.5) .. (2,-0.25);

\draw[color=green!50!black] (2.25,-1.2) node[Odot] (unit) {}; 
\draw[color=green!50!black] (2.25,-0.6) -- (unit);

\end{tikzpicture}
\; .
\ee

If~$A$ is a coalgebra then a \textsl{right $A$-comodule} is some $X\in \B(a,b)$ with~\eqref{eq:rightAmod} turned upside-down. In particular, for a Frobenius algebra $A\in \B(a,a)$ a right module $X\in\B(a,b)$ is also a right comodule with action 
\be\label{eq:comoduleAction}
\begin{tikzpicture}[very thick,scale=0.75,color=blue!50!black, baseline]

\draw (0,-1.25) node[left] (X) {};
\draw (0,1.25) node[left] (Xu) {};

\draw (0,-1.25) -- (0,1.25); 

\fill[color=green!50!black] (0,0.0) circle (2.5pt) node (meet) {};
\draw[color=green!50!black] (0.75,1.25) .. controls +(0,-0.5) and +(0.5,0.5) .. (0,0.0);
\end{tikzpicture} 
=
\begin{tikzpicture}[very thick,scale=0.75,color=blue!50!black, baseline=0cm]

\draw (1.5,-1.25) -- (1.5,1.25); 

\draw (1.5,-1.25) node[left] (X) {};
\draw (1.5,1.25) node[left] (Xu) {};

\fill[color=green!50!black] (1.5,0.3) circle (2.5pt) node (meet) {};

\draw[color=green!50!black] (2,-0.25) .. controls +(0.0,0.25) and +(0.25,-0.25) .. (1.5,0.3);
\draw[color=green!50!black] (2.5,-0.25) .. controls +(0.0,0.5) and +(0,-0.25) .. (2.5,1.25);

\draw[-dot-, color=green!50!black] (2.5,-0.25) .. controls +(0,-0.5) and +(0,-0.5) .. (2,-0.25);

\draw[color=green!50!black] (2.25,-1.2) node[Odot] (unit) {}; 
\draw[color=green!50!black] (2.25,-0.6) -- (unit);

\end{tikzpicture}
\, . 
\ee

Of course there is also the notion of \textsl{left (co)modules}. For example, a right $A$-module~$X$ 
gives rise to a left $A$-module~$X^\dagger$ with left action
\be\label{eq:pivvotal}
\begin{tikzpicture}[very thick,scale=0.85,color=blue!50!black, baseline=0cm]
\draw[line width=0] 
(-0.5,-1.25) node[line width=0pt, color=green!50!black] (Algebra) {{\small $A\vphantom{X^\dagger}$}}
(0,1.25) node[line width=0pt] (A) {{\small $X^\dagger$}}
(0,-1.25) node[line width=0pt] (A2) {{\small $X^\dagger$}};
\draw (A) -- (A2);
\draw[color=green!50!black] (0,0) .. controls +(-0.5,-0.25) and +(0,1) .. (Algebra);
\fill[color=green!50!black] (0,0) circle (2.5pt) node[left] {};
\end{tikzpicture}
\!=\!
\begin{tikzpicture}[very thick,scale=0.85,color=blue!50!black, baseline=0cm]
\draw[line width=0] 
(0.5,-1.25) node[line width=0pt, color=green!50!black] (Algebra) {{\small $A\vphantom{X^\dagger}$}}
(-1,1.25) node[line width=0pt] (A) {{\small $X^\dagger$}}
(1,-1.25) node[line width=0pt] (A2) {{\small $X^\dagger$}}; 
\draw[redirected] (0,0) .. controls +(0,-1) and +(0,-1) .. (-1,0);
\draw[redirected] (1,0) .. controls +(0,1) and +(0,1) .. (0,0);
\draw[color=green!50!black] (0,0) .. controls +(0.5,-0.25) and +(0,1) .. (Algebra);
\fill[color=green!50!black] (0,0) circle (2.5pt) node[left] {};
\draw (-1,0) -- (A); 
\draw (1,0) -- (A2); 
\end{tikzpicture}
\! = \!
\begin{tikzpicture}[very thick,scale=0.85,color=blue!50!black, baseline=0cm]
\draw[line width=0] 
(1,1.25) node[line width=0pt] (A) {{\small $X^\dagger$}}
(-1,-1.25) node[line width=0pt] (A2) {{\small $X^\dagger$}}
(-1.5,-1.25) node[line width=0pt, color=green!50!black] (Algebra) {{\small $A\vphantom{X^\dagger}$}}; 
\draw[directed] (0,0) .. controls +(0,1) and +(0,1) .. (-1,0);
\draw[directed] (1,0) .. controls +(0,-1) and +(0,-1) .. (0,0);
\fill[color=green!50!black] (0,0) circle (2.5pt) node[right] {};
\draw[color=green!50!black] (0,0) .. controls +(-0.15,0.15) and +(-0.2,0) .. (0.25,-0.25);
\draw[color=green!50!black] (0.25,-0.25) .. controls +(0.6,0) and +(0.85,0) .. (-0.5,1.25);
\draw[color=green!50!black] (-0.5,1.25) .. controls +(-0.7,0) and +(0,1) .. (-1.5,-0.25);
\draw[<-,color=green!50!black] (0.21,-0.25) -- (0.22,-0.25);
\draw[<-,color=green!50!black] (-0.41,1.25) -- (-0.42,1.25);
\draw[color=green!50!black] (-1.5,-0.25) -- (Algebra); 
\draw (-1,0) -- (A2); 
\draw (1,0) -- (A); 
\end{tikzpicture}
,
\ee
where the second equality can be seen as follows: 
$$
\begin{tikzpicture}[very thick,scale=0.85,color=blue!50!black, baseline=0cm]
\draw[line width=0] 
(0.5,-1.5) node[line width=0pt, color=green!50!black] (Algebra) {}
(-1,1.5) node[line width=0pt] (A) {}
(1,-1.5) node[line width=0pt] (A2) {}; 
\draw[redirected] (0,0) .. controls +(0,-1) and +(0,-1) .. (-1,0);
\draw[redirected] (1,0) .. controls +(0,1) and +(0,1) .. (0,0);
\draw[color=green!50!black] (0,0) .. controls +(0.5,-0.25) and +(0,1) .. (Algebra);
\fill[color=green!50!black] (0,0) circle (2.5pt) node[left] {};
\draw (-1,0) -- (A); 
\draw (1,0) -- (A2); 
\end{tikzpicture}
\! 
\eq^{\text{Zorro}}
\!
\begin{tikzpicture}[very thick,scale=0.85,color=blue!50!black, baseline=0cm]
\draw[line width=0] 
(-4,-1.5) node[line width=0pt, color=green!50!black] (Algebra) {}
(-3,1.5) node[line width=0pt] (A) {}
(1,-1.5) node[line width=0pt] (A2) {}; 
\draw[redirected, ultra thick] (0,0) .. controls +(0,-1) and +(0,-1) .. (-1,0);
\draw[redirected] (1,0) .. controls +(0,1) and +(0,1) .. (0,0);
\draw[redirected, ultra thick] (-1,0) .. controls +(0,1) and +(0,1) .. (-2,0);
\draw[redirected, dashed, color=green!50!black, ultra thick] (0,0) .. controls +(0,-1) and +(0,-1) .. (-1,0);
\draw[redirected, dashed, color=green!50!black, ultra thick] (-1,0) .. controls +(0,1) and +(0,1) .. (-2,0);
\fill[color=green!50!black] (0,0) circle (2.5pt) node[left] {};
\draw (1,0) -- (A2); 
\draw[dotted] (-2.25,0) -- (-1.75,0); 
\draw[redirected] (-2.25,0) .. controls +(0,-0.5) and +(0,-0.5) .. (-3,0);
\draw[redirectedgreen, color=green!50!black] (-1.75,0) .. controls +(0,-1) and +(0,-1) .. (-3.5,0);
\draw[redirectedgreen, color=green!50!black] (-3.5,0) .. controls +(0,0.5) and +(0,0.5) .. (-4,0);
\draw[color=green!50!black] (-4,0) -- (Algebra);
\draw (-3,0) -- (A);
\end{tikzpicture}
\! 
\eq^{\eqref{eq:pivotal}}_{\text{Zorro}}
\!
\begin{tikzpicture}[very thick,scale=0.85,color=blue!50!black, baseline=0cm]
\draw[line width=0] 
(-1.5,-1.5) node[line width=0pt, color=green!50!black] (Algebra) {}
(1.5,1.5) node[line width=0pt] (A) {}
(-1,-1.5) node[line width=0pt] (A2) {}; 
[
\draw[dotted] (-0.25,0) -- (0.25,0); 
\draw[redirectedgreen, color=green!50!black] (0.25,0) .. controls +(0,-0.5) and +(0,-0.5) .. (1,0);
\draw[redirected] (-0.25,0) .. controls +(0,-1) and +(0,-1) .. (1.5,0);
\draw (1.5,0) -- (A);
\draw[directed, ultra thick] (0,0) .. controls +(0,1) and +(0,1) .. (-1,0);
\draw[directed, ultra thick, dashed, color=green!50!black] (0,0) .. controls +(0,1) and +(0,1) .. (-1,0);
\draw (-1,0) -- (A2);
\fill[color=green!50!black] (-1,0) circle (2.5pt) node[left] {};
\draw[redirectedgreen, color=green!50!black] (1,0) .. controls +(0,1.75) and +(0,1.75) .. (-1.5,0);
\draw[color=green!50!black] (-1.5,0) -- (Algebra);
\end{tikzpicture}
\!
\eq^{\eqref{eq:rightPhi}}
\!
\begin{tikzpicture}[very thick,scale=0.85,color=blue!50!black, baseline=0cm]
\draw[line width=0] 
(1,1.5) node[line width=0pt] (A) {}
(-1,-1.5) node[line width=0pt] (A2) {}
(-1.5,-1.5) node[line width=0pt, color=green!50!black] (Algebra) {}; 
\draw[directed] (0,0) .. controls +(0,1) and +(0,1) .. (-1,0);
\draw[directed] (1,0) .. controls +(0,-1) and +(0,-1) .. (0,0);
\fill[color=green!50!black] (0,0) circle (2.5pt) node[right] {};
\draw[color=green!50!black] (0,0) .. controls +(-0.15,0.15) and +(-0.2,0) .. (0.25,-0.25);
\draw[color=green!50!black] (0.25,-0.25) .. controls +(0.6,0) and +(0.85,0) .. (-0.5,1.25);
\draw[color=green!50!black] (-0.5,1.25) .. controls +(-0.7,0) and +(0,1) .. (-1.5,-0.25);
\draw[<-,color=green!50!black] (0.21,-0.25) -- (0.22,-0.25);
\draw[<-,color=green!50!black] (-0.41,1.25) -- (-0.42,1.25);
\draw[color=green!50!black] (-1.5,-0.25) -- (Algebra); 
\draw (-1,0) -- (A2); 
\draw (1,0) -- (A); 
\end{tikzpicture}
.
$$
We will often make use of this induced structure below. 
Everything that is true of right (co)modules holds as well for their left versions if the diagrams are horizontally reflected. 

For two algebras $A\in\B(a,a)$, $B\in\B(b,b)$ a \textsl{$B$-$A$-bimodule} is an $X\in\B(a,b)$ which is a left $B$-module and a right $A$-module, and the respective algebra actions are compatible in the sense that 
$$
\begin{tikzpicture}[very thick,scale=0.75,color=blue!50!black, baseline]

\draw (0,-1) node[right] (X) {{\small$X$}};
\draw (0,1) node[right] (Xu) {{\small$X$}};

\draw[color=green!50!black] (1.0,-1) node[right] (A) {{\small$A$}};
\draw[color=green!50!black] (-0.5,-1) node[left] (B) {{\small$B$}};

\draw (0,-1) -- (0,1); 

\fill[color=green!50!black] (0,-0.3) circle (2.5pt) node (meet) {};
\fill[color=green!50!black] (0,0.3) circle (2.5pt) node (meet) {};

\draw[color=green!50!black] (-0.5,-1) .. controls +(0,0.25) and +(-0.5,-0.5) .. (0,-0.3);
\draw[color=green!50!black] (1.0,-1) .. controls +(0,0.5) and +(0.5,-0.5) .. (0,0.3);
\end{tikzpicture} 
=
\begin{tikzpicture}[very thick,scale=0.75,color=blue!50!black, baseline]

\draw (0,-1) node[left] (X) {{\small$X$}};
\draw (0,1) node[left] (Xu) {{\small$X$}};

\draw[color=green!50!black] (0.5,-1) node[right] (A) {{\small$A$}};
\draw[color=green!50!black] (-1,-1) node[left] (B) {{\small$B$}};

\draw (0,-1) -- (0,1); 

\fill[color=green!50!black] (0,-0.3) circle (2.5pt) node (meet) {};
\fill[color=green!50!black] (0,0.3) circle (2.5pt) node (meet) {};

\draw[color=green!50!black] (-1,-1) .. controls +(0,0.5) and +(-0.5,-0.5) .. (0,0.3);
\draw[color=green!50!black] (0.5,-1) .. controls +(0,0.25) and +(0.5,-0.5) .. (0,-0.3);
\end{tikzpicture} 
. 
$$
We denote the space of \textsl{bimodule maps} $\Phi: X\rightarrow Y$ (i.\,e.~simultaneous left $B$- and right $A$-module maps) by $\Hom_{BA}(X,Y)$. If~$A$ and~$B$ are separable Frobenius we have
\be\label{eq:piBA}
\Hom_{BA}(X,Y) = \operatorname{im}(\pi_{BA}) 
\quad\text{with}\quad
\pi_{BA} : \Phi \lmt
\begin{tikzpicture}[very thick,scale=0.75,color=blue!50!black, baseline=0cm]

\draw (1.5,-1.75) -- (1.5,1.95); 

\draw (1.5,-1.75) node[right] (X) {{\small$X$}};
\draw (1.5,1.95) node[right] (Xu) {{\small$Y$}};

\fill (1.5,0.7) circle (2.5pt) node[right] (phi) {{\small $\Phi$}};

\fill[color=green!50!black] (1.5,0.3) circle (2.5pt) node (meet) {};
\fill[color=green!50!black] (1.5,1.1) circle (2.5pt) node (meet) {};
\fill[color=green!50!black] (1.5,1.5) circle (2.5pt) node (meet) {};
\fill[color=green!50!black] (1.5,-0.1) circle (2.5pt) node (meet) {};

\draw[color=green!50!black] (1,-0.25) .. controls +(0.0,0.25) and +(-0.25,-0.25) .. (1.5,0.3);
\draw[color=green!50!black] (0.5,-0.25) .. controls +(0.0,0.5) and +(-0.25,-0.25) .. (1.5,1.1);
\draw[-dot-, color=green!50!black] (0.5,-0.25) .. controls +(0,-0.5) and +(0,-0.5) .. (1,-0.25);

\draw[color=green!50!black] (0.75,-1.2) node[Odot] (unit) {}; 
\draw[color=green!50!black] (0.75,-0.6) -- (unit);

\draw[color=green!50!black] (2.25,-0.45) .. controls +(0.0,0.25) and +(0.25,-0.25) .. (1.5,-0.1);
\draw[color=green!50!black] (2.75,-0.45) .. controls +(0.0,1) and +(0.25,-0.25) .. (1.5,1.5);
\draw[-dot-, color=green!50!black] (2.25,-0.45) .. controls +(0,-0.5) and +(0,-0.5) .. (2.75,-0.45);

\draw[color=green!50!black] (2.5,-1.4) node[Odot] (unit) {}; 
\draw[color=green!50!black] (2.5,-0.8) -- (unit);

\end{tikzpicture}
\, . 
\ee

Given two algebra maps $\alpha\in\End(A)$ and $\beta\in\End(B)$ we obtain a functor ${}_\beta(-)_\alpha$ on the category of $B$-$A$-bimodules by twisting the module actions. It sends a $B$-$A$-bimodule $X\in\B(a,b)$ to the same underlying 1-morphism in~$\B$, but with left and right action twisted by~$\beta$ and~$\alpha$, respectively, 
\be\label{eq:TwistFunctor}
\begin{tikzpicture}[very thick,scale=0.75,color=blue!50!black, baseline]

\draw (0,-1) node[below] (X) {{\small${}_\beta X_\alpha$}};
\draw (0,1) node[above] (Xu) {{\small${}_\beta X_\alpha$}};

\draw[color=green!50!black] (1.0,-1) node[below] (A) {{\small$A$}};
\draw[color=green!50!black] (-0.75,-1) node[below] (B) {{\small$B$}};

\draw (0,-1) -- (0,1); 

\fill[color=green!50!black] (0,-0.3) circle (2.5pt) node (meet) {};
\fill[color=green!50!black] (0,0.3) circle (2.5pt) node (meet) {};

\draw[color=green!50!black] (-0.75,-1) .. controls +(0,0.25) and +(-0.5,-0.25) .. (0,-0.3);
\draw[color=green!50!black] (1.0,-1) .. controls +(0,0.5) and +(0.5,-0.5) .. (0,0.3);
\end{tikzpicture} 
=
\begin{tikzpicture}[very thick,scale=0.75,color=blue!50!black, baseline]

\draw (0,-1) node[below] (X) {{\small$X$}};
\draw (0,1) node[above] (Xu) {{\small$X$}};

\draw[color=green!50!black] (1.0,-1) node[below] (A) {{\small$A$}};
\draw[color=green!50!black] (-0.75,-1) node[below] (B) {{\small$B$}};

\draw (0,-1) -- (0,1); 

\fill[color=green!50!black] (0,-0.3) circle (2.5pt) node (meet) {};
\fill[color=green!50!black] (0,0.3) circle (2.5pt) node (meet) {};

\fill[color=green!50!black] (-0.5,-0.6) circle (2.5pt) node[left] (meet) {{\small$\beta$}};
\fill[color=green!50!black] (0.7,-0.35) circle (2.5pt) node[right] (meet) {{\small$\alpha$}};

\draw[color=green!50!black] (-0.75,-1) .. controls +(0,0.25) and +(-0.5,-0.25) .. (0,-0.3);\draw[color=green!50!black] (1.0,-1) .. controls +(0,0.5) and +(0.5,-0.5) .. (0,0.3);
\end{tikzpicture} 
\, , 
\ee
while on morphisms ${}_\beta(-)_\alpha$ acts as the identity.

We shall also need the notion of the tensor product over an algebra~$A$. For a right module~$X$ and a left module~$Y$, the \textsl{tensor product} $X \otimes_A Y$ is defined by a universal property, see e.\,g.~\cite[Sect.\,2.3]{cr1210.6363}. In the spirit of the above constructions~\eqref{eq:piA}, \eqref{eq:piBA} of module maps in terms of projectors, and under the assumption that~$A$ is separable Frobenius and idempotent 2-morphisms split in~$\B$ (which is e.\,g.~true for Landau-Ginzburg models and sigma models), we construct $X \otimes_A Y$ as the image of the projector
$$
\pi_A^{X,Y} = 
\begin{tikzpicture}[very thick,scale=0.75,color=blue!50!black, baseline]

\draw (-1,-1) node[left] (X) {{\small$X$}};
\draw (1,-1) node[right] (Xu) {{\small$Y$}};

\draw (-1,-1) -- (-1,1); 
\draw (1,-1) -- (1,1); 

\fill[color=green!50!black] (-1,0.6) circle (2.5pt) node (meet) {};
\fill[color=green!50!black] (1,0.6) circle (2.5pt) node (meet) {};

\draw[-dot-, color=green!50!black] (0.35,-0.0) .. controls +(0,-0.5) and +(0,-0.5) .. (-0.35,-0.0);

\draw[color=green!50!black] (0.35,-0.0) .. controls +(0,0.25) and +(-0.25,-0.25) .. (1,0.6);
\draw[color=green!50!black] (-0.35,-0.0) .. controls +(0,0.25) and +(0.25,-0.25) .. (-1,0.6);

\draw[color=green!50!black] (0,-0.75) node[Odot] (down) {}; 
\draw[color=green!50!black] (down) -- (0,-0.35); 

\end{tikzpicture} 
. 
$$
The fact that this idempotent splits means that there are maps $\xi: X\otimes_A Y \rightarrow X\otimes Y$ and $\vartheta: X\otimes Y \rightarrow X\otimes_A Y$ such that $\vartheta\xi = 1_{X\otimes_A Y}$ and $\xi\vartheta = \pi_A^{X,Y}$. This implicitly defines $X \otimes_A Y$, and for maps $\Phi\in\End(X)$, $\Psi\in\End(Y)$ one sets $\Phi\otimes_A \Psi = \vartheta(\Phi\otimes\Psi)\xi$. 

\begin{remark}\label{rem:AGmodules}
We are now in a position to make good on a promise made at the end of Section~\ref{subsec:conventionalorbi}, namely explain how to describe boundary conditions and defects in the $G$-orbifold of a Landau-Ginzburg model with potential~$W$. Indeed, as shown in~\cite[Sect.\,7.1]{cr1210.6363} they are precisely given by modules and bimodules over~$A_G$. This in particular means that~$A_G$ is the invisible defect and the tensor product $\otimes_{A_G}$ describes fusion in the orbifold theory. 
\end{remark}

\subsection{Generalised twisted sectors}\label{subsec:gentwist}

We continue to work with a pivotal bicategory~$\B$ which for convenience we also assume to be $\C$-linear. Furthermore, for the remainder of this section we fix a separable Frobenius algebra $A\in\B(a,a)$, and we write $I=I_a$. $A$ could be of the form $A=A_G$ as in~\eqref{eq:AG}, but the point is that we want to develop a generalised orbifold theory that does not necessarily depend on a (classical) symmetry group~$G$. Rather, the defect~$A$ itself is to be viewed as encoding a (quantum) symmetry, and we will associate a generalised orbifold theory to the pair $(a,A)$. To achieve this we shall formalise and abstract from the discussion in Section~\ref{sec:ordinaryLGorbs}. 

As the generalisation of the vector space of twisted sectors \textsl{before} orbifold projection we propose 
\be\label{eq:HomIA}
\HIA \, . 
\ee
We now discuss two orbifold projections on this space, intended to generalise the projections to (c,c) fields and RR ground states. Indeed, in the case $A=A_G$ we recover the results of Section~\ref{sec:ordinaryLGorbs}.

\subsubsection*{`Chiral primaries'}

Our first projector acting on $\HIA$ is 
\be\label{eq:ccProj}
\picc : 
\begin{tikzpicture}[very thick,scale=0.75,color=green!50!black, baseline]
\fill (0,-0.5) circle (2.5pt) node[left] (D) {{\small $\alpha$}};
\draw (0,-0.5) -- (0,0.6); 
\end{tikzpicture} 
\lmt 
\begin{tikzpicture}[very thick,scale=0.75,color=green!50!black, baseline]
\draw (0,0) -- (0,1.25);
\fill (0,0) circle (2.5pt) node[left] {{\small $\alpha$}};
\draw (0,0.8) .. controls +(-0.9,-0.3) and +(-0.9,0) .. (0,-0.8);
\draw (0,-0.8) .. controls +(0.9,0) and +(0.7,-0.1) .. (0,0.4);
\fill (0,-0.8) circle (2.5pt) node {};
\fill (0,0.4) circle (2.5pt) node {};
\fill (0,0.8) circle (2.5pt) node {};
\draw (0,-1.2) node[Odot] (unit) {};
\draw (0,-0.8) -- (unit);
\end{tikzpicture}
\!
=
\!
\begin{tikzpicture}[very thick,scale=0.75,color=green!50!black, baseline]
\draw (0,0) -- (0,1.25);
\fill (0,0) circle (2.5pt) node[right] {{\small $\alpha$}};
\draw (0,0.8) .. controls +(0.9,-0.3) and +(0.9,0) .. (0,-0.8);
\draw (0,-0.8) .. controls +(-0.9,0) and +(-0.7,-0.1) .. (0,0.4);
\fill (0,-0.8) circle (2.5pt) node {};
\fill (0,0.4) circle (2.5pt) node {};
\fill (0,0.8) circle (2.5pt) node {};
\draw (0,-1.2) node[Odot] (unit) {};
\draw (0,-0.8) -- (unit);
\end{tikzpicture}
\ee
where the identity holds since~$A$ is associative. Using associativity again together with the properties of a separable Frobenius algebra we can convince ourselves that $\picc$ is indeed a projector:
\be\label{eq:piccIsProjector}
\picc(\picc(\alpha))
=
\begin{tikzpicture}[very thick,scale=0.75,color=green!50!black, baseline]
\draw (0,0) -- (0,2.0);
\fill (0,0) circle (2.5pt) node[left] {{\small $\alpha$}};
\draw (0,0.8) .. controls +(-0.9,-0.3) and +(-0.9,0) .. (0,-0.8);
\draw (0,-0.8) .. controls +(0.9,0) and +(0.7,-0.1) .. (0,0.4);
\fill (0,-0.8) circle (2.5pt) node {};
\fill (0,0.4) circle (2.5pt) node {};
\fill (0,0.8) circle (2.5pt) node {};
\draw (0,-1.2) node[Odot] (unit) {};
\draw (0,-0.8) -- (unit);
\fill (0,1.2) circle (2.5pt) node {};
\fill (0,1.6) circle (2.5pt) node {};
\fill (0,-1.6) circle (2.5pt) node {};
\draw (0,1.6) .. controls +(-1.4,-0.3) and +(-1.4,0) .. (0,-1.6);
\draw (0,-1.6) .. controls +(1.4,0) and +(1.1,-0.1) .. (0,1.2);
\draw (0,-2.0) node[Odot] (unit) {};
\draw (0,-1.6) -- (unit);
\end{tikzpicture}
\!\!
\eq^{\eqref{eq:associativeAlgebra}}
\!\!
\begin{tikzpicture}[very thick,scale=0.75,color=green!50!black, baseline]
\draw (0,0) -- (0,2.0);
\fill (0,0) circle (2.5pt) node[left] {{\small $\alpha$}};
\draw (0,0.8) .. controls +(-0.9,-0.3) and +(-0.9,0) .. (0,-0.8);
\draw (0,-0.8) .. controls +(0.9,0) and +(0.7,-0.1) .. (0,0.4);
\fill (0,-0.8) circle (2.5pt) node {};
\fill (0,0.4) circle (2.5pt) node {};
\fill (0,0.8) circle (2.5pt) node {};
\draw (0,-1.2) node[Odot] (unit) {};
\draw (0,-0.8) -- (unit);
\fill (0.6,-0.2) circle (2.5pt) node {};
\fill (-0.65,-0.2) circle (2.5pt) node {};
\fill (0,-1.6) circle (2.5pt) node {};
\draw (-0.65,-0.2) .. controls +(-0.7,-0.3) and +(-0.9,0) .. (0,-1.6);
\draw (0,-1.6) .. controls +(0.9,0) and +(0.7,-0.1) .. (0.6,-0.2);
\draw (0,-2.0) node[Odot] (unit) {};
\draw (0,-1.6) -- (unit);
\end{tikzpicture}
\!
\eq^{\eqref{eq:FrobeniusProperty}}
\begin{tikzpicture}[very thick,scale=0.75,color=green!50!black, baseline]
\draw (0,0) -- (0,2.0);
\fill (0,0) circle (2.5pt) node[left] {{\small $\alpha$}};
\draw (0,0.8) .. controls +(-0.9,-0.3) and +(-0.9,0) .. (0,-0.8);
\draw (0,-0.8) .. controls +(0.9,0) and +(0.7,-0.1) .. (0,0.4);
\fill (0,-0.8) circle (2.5pt) node {};
\fill (0,0.4) circle (2.5pt) node {};
\fill (0,0.8) circle (2.5pt) node {};
\draw (0,-0.8) -- (0,-1.05);
\fill (0,-1.05) circle (2.5pt) node {};
\fill (0,-1.6) circle (2.5pt) node {};
\draw (0,-1.05) .. controls +(-0.4,-0.1) and +(-0.4,0.1) .. (0,-1.6);
\draw (0,-1.6) .. controls +(0.4,0.1) and +(0.4,-0.1) .. (0,-1.05);
\draw (0,-2.0) node[Odot] (unit) {};
\draw (0,-1.6) -- (unit);
\end{tikzpicture}
\!
\eq^{\eqref{eq:separability}}
\begin{tikzpicture}[very thick,scale=0.75,color=green!50!black, baseline]
\draw (0,0) -- (0,2.0);
\fill (0,0) circle (2.5pt) node[left] {{\small $\alpha$}};
\draw (0,0.8) .. controls +(-0.9,-0.3) and +(-0.9,0) .. (0,-0.8);
\draw (0,-0.8) .. controls +(0.9,0) and +(0.7,-0.1) .. (0,0.4);
\fill (0,-0.8) circle (2.5pt) node {};
\fill (0,0.4) circle (2.5pt) node {};
\fill (0,0.8) circle (2.5pt) node {};
\draw (0,-1.2) node[Odot] (unit) {};
\draw (0,-0.8) -- (unit);
\end{tikzpicture}
=
\picc(\alpha)
\, . 
\ee

We will write 
$$
\Hcc = \im(\picc)
$$ 
for the image of this projector to \textsl{(generalised) (c,c) fields}, which has several useful presentations. 
For this we recall that the \textsl{relative centre of $\HIA$ with respect to~$A$} is 
\be\label{eq:ZAdefinition}
\ZA = 
\Big\{ 
\alpha \in \HIA \; \Big| \;\, 
\begin{tikzpicture}[very thick,scale=0.45,color=green!50!black, baseline=0.1cm]
\draw[-dot-] (3,0) .. controls +(0,1) and +(0,1) .. (2,0);
\draw (2.5,0.75) -- (2.5,1.5); 
\fill (3,0) circle (4.0pt) node[right] (alpha) {{\small $\alpha$}};
\draw (2,0) -- (2,-0.5); 
\end{tikzpicture} 
=
\begin{tikzpicture}[very thick,scale=0.45,color=green!50!black, baseline=0.1cm]
\draw[-dot-] (3,0) .. controls +(0,1) and +(0,1) .. (2,0);
\draw (2.5,0.75) -- (2.5,1.5); 
\fill (2,0) circle (4.0pt) node[left] (alpha) {{\small $\alpha$}};
\draw (3,0) -- (3,-0.5); 
\end{tikzpicture} 
\;\Big\}
\ee
(which also appears in rational CFT, see~\cite[Sect.\,3.4]{tft1}), and that \textsl{Hochschild cohomology} $\operatorname{HH}^\bullet$ is the endomorphism space of the identity functor (see e.\,g.~\cite[Sect.\,4.1]{cw1007.2679}). The characterisation of $\im(\picc)$ in terms of endomorphisms of~$A$ as a bimodule over itself connects our discussion to the generalised orbifolds of~\cite{cr1210.6363} where all generalised twisted sectors are given by bimodule maps. 

\begin{proposition}
We have 
\be\label{eq:Hccequivalents}
\Hcc = \ZA  
\cong 
\End_{AA}(A)
= 
\operatorname{HH}^\bullet(A) 
\, . 
\ee
\end{proposition}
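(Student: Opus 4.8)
The plan is to prove the chain~\eqref{eq:Hccequivalents} in three pieces, each one a short diagrammatic identification. Throughout I work in the pivotal $\C$-linear bicategory~$\B$ with the fixed separable Frobenius algebra $A \in \B(a,a)$ and $I = I_a$.

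\textbf{Step 1: $\Hcc = \ZA$.} I would show the two inclusions by using the explicit form of $\picc$ in~\eqref{eq:ccProj}. If $\alpha \in \ZA$, then the defect line of~$A$ that $\picc$ wraps around~$\alpha$ can be slid past~$\alpha$ using the centrality condition~\eqref{eq:ZAdefinition}; together with associativity, the Frobenius relations~\eqref{eq:FrobeniusProperty} and separability~\eqref{eq:separability} (exactly the manipulations already carried out in~\eqref{eq:piccIsProjector}) this collapses the wrapped diagram back to~$\alpha$, so $\picc(\alpha) = \alpha$ and hence $\ZA \subseteq \im(\picc) = \Hcc$. Conversely, any element in the image of~$\picc$ has the form of the wrapped diagram on the right of~\eqref{eq:ccProj}; precomposing/postcomposing with a multiplication by~$A$ and using coassociativity of the comultiplication plus the Frobenius property, one checks that the two ways of multiplying such a diagram by~$A$ agree, i.e.\ it lies in $\ZA$. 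This is the most ``hands-on'' part but is entirely parallel to the projector computation already displayed, so it should be routine.

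\textbf{Step 2: $\ZA \cong \End_{AA}(A)$.} Here I would exhibit the isomorphism concretely. Given $\alpha \in \Hom(I,A)$, define $\widehat\alpha : A \to A$ by $\widehat\alpha = \mu \circ (1_A \otimes \alpha) \circ \rho_A^{-1}$ (multiply the incoming~$A$ by~$\alpha$ from the right), using the right unit isomorphism. One checks directly from~\eqref{eq:associativeAlgebra} that $\widehat\alpha$ is always a right $A$-module map, and that it is additionally a \emph{left} $A$-module map precisely when $\alpha$ satisfies the centrality condition~\eqref{eq:ZAdefinition}; so $\alpha \mapsto \widehat\alpha$ restricts to a map $\ZA \to \End_{AA}(A)$. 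The inverse sends a bimodule endomorphism $f : A \to A$ to $f \circ \eta \in \Hom(I,A)$, where $\eta : I \to A$ is the unit. That these two assignments are mutually inverse is the usual ``an $A$-module endomorphism of~$A$ is determined by the image of the unit'' argument, transported into the bicategorical string calculus; the unit axioms~\eqref{eq:associativeAlgebra} make both round-trips the identity. I would phrase this purely diagrammatically so it is manifestly valid in any pivotal bicategory.

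\textbf{Step 3: $\End_{AA}(A) = \operatorname{HH}^\bullet(A)$.} This is really a matter of matching conventions: $A$, regarded as a bimodule over itself, \emph{is} the identity functor on the category of (right) $A$-modules under the tensor-over-$A$ action, so its bimodule endomorphisms are by definition the degree-zero part of Hochschild cohomology in the sense of the endomorphisms of the identity functor (cf.~\cite[Sect.\,4.1]{cw1007.2679}). In the graded/derived setting the same statement holds with $\Hom_{AA}(A, A[\bullet])$, which is what the notation $\operatorname{HH}^\bullet(A)$ abbreviates here; since for Landau--Ginzburg models (and the other examples of interest) $\B$ is already built from homotopy categories, the relevant $\Hom_{AA}$ is automatically the derived one. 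I would state this identification, note that it is the definition being used, and point to the reference rather than reproving it.

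\textbf{Main obstacle.} No single step is deep; the only real care is bookkeeping. The trickiest point is Step~2's claim that $\widehat\alpha$ being a \emph{left} module map is equivalent to $\alpha \in \ZA$ (rather than merely implied by it) — one must check that the left-module-map condition, after cancelling a unit, is \emph{exactly} the centrality diagram and not something stronger. Secondarily, one should make sure the pivotal structure (and the harmless shift-signs flagged after~\eqref{eq:qdims}) does not obstruct the diagrammatic moves; as the paper already notes, these signs cancel in applications of this type, so I would simply invoke that.
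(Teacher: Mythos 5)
Your proposal is correct and follows essentially the same route as the paper: the two inclusions $\Hcc\subset\ZA$ and $\ZA\subset\Hcc$ via the same string-diagram manipulations (associativity, Frobenius, separability), the isomorphism $\ZA\cong\End_{AA}(A)$ via exactly the paper's pair of maps $\alpha\mapsto\mu\circ(1_A\otimes\alpha)$ and $\Phi\mapsto\Phi\circ\eta$, and the identification with $\operatorname{HH}^\bullet(A)$ treated as definitional. The extra care you flag in Step~2 is harmless but not needed, since mutual inverseness of the two maps already settles the equivalence.
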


\begin{proof}
The inclusion $\Hcc \subset \ZA$ follows from
\be\label{eq:RelativeCentre1}
\begin{tikzpicture}[very thick,scale=0.75,color=green!50!black, baseline=0.1cm]
\draw[-dot-] (3,0) .. controls +(0,1) and +(0,1) .. (2,0);
\draw (2.5,0.75) -- (2.5,1.5); 
\fill (3,0) circle (2.5pt) node[right] (alpha) {};
\draw (2,0) -- (2,-1.0); 
\draw (2.96,0.3) .. controls +(-0.6,-0.3) and +(-0.6,0) .. (3,-0.5);
\draw (3,-0.5) .. controls +(0.9,0) and +(0.7,-0.1) .. (2.82,0.6);
\fill (3,-0.5) circle (2.5pt) node {};
\fill (2.96,0.3) circle (2.5pt) node {};
\fill (2.82,0.6) circle (2.5pt) node {};
\draw (3,-0.9) node[Odot] (unit) {};
\draw (3,-0.5) -- (unit);
\end{tikzpicture} 
\!\!\!
\eq^{\eqref{eq:associativeAlgebra}}
\,
\begin{tikzpicture}[very thick,scale=0.75,color=green!50!black, baseline=0.1cm]
\draw[-dot-] (3,0) .. controls +(0,1) and +(0,1) .. (2,0);
\draw (2.5,0.75) -- (2.5,1.5); 
\draw (2,0) -- (2,-1.0); 
\draw[-dot-, color=green!50!black] (3,0) .. controls +(0,-0.5) and +(0,-0.5) .. (2.25,0);
\draw[color=green!50!black] (2.625,-0.9) node[Odot] (unit) {}; 
\draw[color=green!50!black] (2.625,-0.4) -- (unit);
\draw[color=green!50!black] (2.25,0) .. controls +(0,0.1) and +(0.05,0.05) .. (2.02,0.2);
\fill (2.02,0.2) circle (2.5pt) node[right] (alpha) {};
\draw[color=green!50!black] (2.5,0.15) .. controls +(0,0.25) and +(0.05,0.05) .. (2.14,0.55);
\fill (2.14,0.55) circle (2.5pt) node[right] (alpha) {};
\fill (2.5,0.15) circle (2.5pt) node[right] (alpha) {};
\end{tikzpicture} 
\,
\eq^{\eqref{eq:FrobeniusProperty}}
\,
\begin{tikzpicture}[very thick,scale=0.75,color=green!50!black, baseline=0.1cm]
\draw[-dot-] (3,0) .. controls +(0,1) and +(0,1) .. (2,0);
\draw (2.5,0.75) -- (2.5,1.5); 
\draw[-dot-] (3,0) .. controls +(0,-1) and +(0,-1) .. (2,0);
\draw (2.5,-0.75) -- (2.5,-1.0); 
\draw[color=green!50!black] (2.5,0.15) .. controls +(0,0.25) and +(0.05,0.05) .. (2.14,0.55);
\fill (2.14,0.55) circle (2.5pt) node[right] (alpha) {};
\fill (2.5,0.15) circle (2.5pt) node[right] (alpha) {};
\end{tikzpicture} 
\,
\eq^{\eqref{eq:FrobeniusProperty}}
\!\!
\begin{tikzpicture}[very thick,scale=0.75,color=green!50!black, baseline=0.1cm]
\draw[-dot-] (3,0) .. controls +(0,1) and +(0,1) .. (2,0);
\draw (2.5,0.75) -- (2.5,1.5); 
\fill (2,0) circle (2.5pt) node[left] (alpha) {};
\draw (3,0) -- (3,-1.0); 
\draw (2.04,0.3) .. controls +(-0.6,-0.3) and +(-0.6,0) .. (2,-0.5);
\draw (2,-0.5) .. controls +(0.7,0) and +(-0.4,-0.1) .. (2.82,0.6);
\fill (2,-0.5) circle (2.5pt) node {};
\fill (2.04,0.3) circle (2.5pt) node {};
\fill (2.82,0.6) circle (2.5pt) node {};
\draw (2,-0.9) node[Odot] (unit) {};
\draw (2,-0.5) -- (unit);
\end{tikzpicture} 
\,
\eq^{\eqref{eq:associativeAlgebra}}
\!\!
\begin{tikzpicture}[very thick,scale=0.75,color=green!50!black, baseline=0.1cm]
\draw[-dot-] (3,0) .. controls +(0,1) and +(0,1) .. (2,0);
\draw (2.5,0.75) -- (2.5,1.5); 
\fill (2,0) circle (2.5pt) node[left] (alpha) {};
\draw (3,0) -- (3,-1.0); 
\draw (2.04,0.3) .. controls +(-0.6,-0.3) and +(-0.6,0) .. (2,-0.5);
\draw (2,-0.5) .. controls +(0.7,0) and +(0.4,-0.1) .. (2.18,0.6);
\fill (2,-0.5) circle (2.5pt) node {};
\fill (2.04,0.3) circle (2.5pt) node {};
\fill (2.18,0.6) circle (2.5pt) node {};
\draw (2,-0.9) node[Odot] (unit) {};
\draw (2,-0.5) -- (unit);
\end{tikzpicture} 
\, , 
\ee
while for the opposite inclusion we note
\be\label{eq:RelativeCentre2}
\begin{tikzpicture}[very thick,scale=0.75,color=green!50!black, baseline]
\draw (0,0) -- (0,1.25);
\fill (0,0) circle (2.5pt) node[left] {{\small $\alpha$}};
\draw (0,0.8) .. controls +(-0.9,-0.3) and +(-0.9,0) .. (0,-0.8);
\draw (0,-0.8) .. controls +(0.9,0) and +(0.7,-0.1) .. (0,0.4);
\fill (0,-0.8) circle (2.5pt) node {};
\fill (0,0.4) circle (2.5pt) node {};
\fill (0,0.8) circle (2.5pt) node {};
\draw (0,-1.2) node[Odot] (unit) {};
\draw (0,-0.8) -- (unit);
\end{tikzpicture}
\!\!\!\!
\eq^{\eqref{eq:ZAdefinition}}
\!\!\!\!\!
\begin{tikzpicture}[very thick,scale=0.75,color=green!50!black, baseline]
\draw (0,0) -- (0,1.25);
\fill (0,0) circle (2.5pt) node[right] {{\small $\alpha$}};
\draw (0,0.8) .. controls +(-0.9,-0.3) and +(-0.9,0) .. (-0.6,-0.8);
\draw (-0.6,-0.8) .. controls +(0.9,0) and +(-0.7,-0.1) .. (0,0.4);
\fill (-0.6,-0.8) circle (2.5pt) node {};
\fill (0,0.4) circle (2.5pt) node {};
\fill (0,0.8) circle (2.5pt) node {};
\draw (-0.6,-1.2) node[Odot] (unit) {};
\draw (-0.6,-0.8) -- (unit);
\end{tikzpicture}
\eq^{\eqref{eq:associativeAlgebra}}_{\eqref{eq:separability}}
\;
\begin{tikzpicture}[very thick,scale=0.75,color=green!50!black, baseline]
\draw (0,0) -- (0,1.25);
\fill (0,0) circle (2.5pt) node[right] {{\small $\alpha$}};
\end{tikzpicture}
.
\ee

To prove the isomorphism in~\eqref{eq:Hccequivalents} we consider the maps 
\begin{align}
\label{eq:EndAAAtoZA}& \End_{AA}(A) \lra \ZA \, , \quad 
\Phi \lmt
\begin{tikzpicture}[very thick,scale=0.75,color=green!50!black, baseline=-0.1cm]
\draw (0,-0.5) node[Odot] (D) {}; 
\draw (D) -- (0,0.6); 
\fill (0,0) circle (2.5pt) node[left] (alpha) {{\small $\Phi$}};
\end{tikzpicture} 
\, , \\
\label{eq:ZAtoEndAAA}& \ZA \lra \End_{AA}(A) \, , \quad
\alpha \lmt 
\begin{tikzpicture}[very thick,scale=0.45,color=green!50!black, baseline=0.1cm]
\draw[-dot-] (3,0) .. controls +(0,1) and +(0,1) .. (2,0);
\draw (2.5,0.75) -- (2.5,1.5); 
\fill (3,0) circle (4.0pt) node[right] (alpha) {{\small $\alpha$}};
\draw (2,0) -- (2,-0.5); 
\end{tikzpicture} 
. 
\end{align}
It is easy to check that these are well-defined, and they are indeed mutually inverse, e.\,g. 
$$
\ZA \ni \alpha 
\stackrel{\eqref{eq:ZAtoEndAAA}}{\lmt}
\begin{tikzpicture}[very thick,scale=0.45,color=green!50!black, baseline=0.1cm]
\draw[-dot-] (3,0) .. controls +(0,1) and +(0,1) .. (2,0);
\draw (2.5,0.75) -- (2.5,1.5); 
\fill (3,0) circle (4.0pt) node[right] (alpha) {{\small $\alpha$}};
\draw (2,0) -- (2,-0.5); 
\end{tikzpicture} 
\stackrel{\eqref{eq:EndAAAtoZA}}{\lmt}
\begin{tikzpicture}[very thick,scale=0.45,color=green!50!black, baseline=0.1cm]
\draw[-dot-] (3,0) .. controls +(0,1) and +(0,1) .. (2,0);
\draw (2.5,0.75) -- (2.5,1.5); 
\fill (3,0) circle (4.0pt) node[right] (alpha) {{\small $\alpha$}};
\draw (2,0) -- (2,-0.45); 
\draw (2,-0.5) node[Odot] (D) {}; 
\end{tikzpicture} 
= 
\alpha \, . 
$$

The description in terms of Hochschild cohomology is now tautological since in the orbifold theory $(a,A)$ the identity functor is given by~$A$ itself, cf.\,\cite[Def.\,4.1]{cr1210.6363}. 
\end{proof}

\medskip

To describe the boundary sector of chiral primaries for the generalised orbifold theory $(a,A)$ let us assume that the bicategory~$\B$ has a trivial object~$0$;\footnote{If~$\B$ is monoidal (like~$\LG$), then~$0$ is the unit object.} for example, in the case of Landau-Ginzburg models the trivial theory is the one with vanishing potential, $W=0$. Then boundary conditions are left $A$-modules $Q,P,\ldots\in \modu(A)$ in $\B(0,a)$, and \textsl{chiral primaries on the boundary} are module maps, i.\,e.~elements of
$$
\Hom_A(Q,P) \, . 
$$
These are precisely those elements $\Phi \in \Hom(Q,P)$ which are invariant under the projector
\be\label{eq:boundaryccproj}
\Phi \lmt 
\begin{tikzpicture}[very thick,scale=0.75,color=blue!50!black, baseline]
\draw (0,-1) node[left] (X) {};
\draw (0,1) node[left] (Xu) {};
\draw (0,-1) -- (0,1); 
\fill (0,0) circle (2.5pt) node[right] (phi) {{\small $\Phi$}};
\fill[color=green!50!black] (0,0.6) circle (2.5pt) node (up) {};
\fill[color=green!50!black] (0,-0.6) circle (2.5pt) node (down) {};
\draw[color=green!50!black] (0,0.6) .. controls +(-0.75,0) and +(-0.75,0) .. (0,-0.6);
\end{tikzpicture} 
= 
\begin{tikzpicture}[very thick,scale=0.75,color=blue!50!black, baseline]
\draw (0,-1) node[left] (X) {};
\draw (0,1) node[left] (Xu) {};
\draw (0,-1) -- (0,1); 
\fill (0,0.2) circle (2.5pt) node[right] (phi) {{\small $\Phi$}};
\fill[color=green!50!black] (0,0.7) circle (2.5pt) node (up) {};
\fill[color=green!50!black] (0,-0.3) circle (2.5pt) node (down) {};
\draw[-dot-, color=green!50!black] (-0.3,-0.6) .. controls +(0,-0.5) and +(0,-0.5) .. (-0.8,-0.6);
\draw[color=green!50!black] (-0.3,-0.6) .. controls +(0,0.2) and +(-0.1,-0.1) .. (0,-0.3);
\draw[color=green!50!black] (-0.8,-0.6) .. controls +(0,0.6) and +(-0.1,-0.1) .. (0,0.7);
\draw[color=green!50!black] (-0.55,-1.3) node[Odot] (unit) {}; 
\draw[color=green!50!black] (-0.55,-1.0) -- (unit);
\end{tikzpicture} 
\ee
where the first expression on the right-hand side is simply short-hand for the second expression, cf.\,\eqref{eq:piA} and~\eqref{eq:comoduleAction}.

\subsubsection*{`RR ground states'}

The second projector that we consider on $\HIA$ is
\be\label{eq:piRR}
\pirr : 
\begin{tikzpicture}[very thick,scale=0.75,color=green!50!black, baseline]
\fill (0,-0.5) circle (2.5pt) node[left] (D) {{\small $\alpha$}};
\draw (0,-0.5) -- (0,0.6); 
\end{tikzpicture} 
\lmt 
\begin{tikzpicture}[very thick,scale=0.75,color=green!50!black, baseline=0.2cm]
\draw (0,0) -- (0,1.3);
\fill (0,0) circle (2.5pt) node[left] {{\small$\alpha$}};
\draw (0,0.8) .. controls +(-0.9,-0.3) and +(-0.9,0) .. (0,-0.8);
\draw[
	decoration={markings, mark=at position 0.83 with {\arrow{<}}}, postaction={decorate}
	]
	 (0,-0.8) .. controls +(0.9,0) and +(0.9,0.8) .. (0,0.4);
\draw[->] (0.01,-0.8) -- (-0.01,-0.8);
\fill (0,0.4) circle (2.5pt) node {};
\fill (0,0.8) circle (2.5pt) node {};
\end{tikzpicture}
=
\begin{tikzpicture}[very thick,scale=0.75,color=green!50!black, baseline=0.2cm]
\draw (0,0) -- (0,1.3);
\fill (0,0) circle (2.5pt) node[right] {{\small$\alpha$}};
\draw (0,0.8) .. controls +(0.9,-0.3) and +(0.9,0) .. (0,-0.8);
\draw[
	decoration={markings, mark=at position 0.83 with {\arrow{<}}}, postaction={decorate}
	]
	 (0,-0.8) .. controls +(-0.9,0) and +(-0.9,0.8) .. (0,0.4);
\draw[<-] (0.01,-0.8) -- (-0.01,-0.8);
\fill (0,0.4) circle (2.5pt) node {};
\fill (0,0.8) circle (2.5pt) node {};
\end{tikzpicture}
.
\ee
The identity is checked with the help of the Nakayama automorphism $\gamma_A$: 
\be\label{eq:pirrGamma}
\begin{tikzpicture}[very thick,scale=0.75,color=green!50!black, baseline=0.2cm]
\draw (0,0) -- (0,1.3);
\fill (0,0) circle (2.5pt) node[left] {{\small$\alpha$}};
\draw (0,0.8) .. controls +(-0.9,-0.3) and +(-0.9,0) .. (0,-0.8);
\draw[
	decoration={markings, mark=at position 0.83 with {\arrow{<}}}, postaction={decorate}
	]
	 (0,-0.8) .. controls +(0.9,0) and +(0.9,0.8) .. (0,0.4);
\draw[->] (0.01,-0.8) -- (-0.01,-0.8);
\fill (0,0.4) circle (2.5pt) node {};
\fill (0,0.8) circle (2.5pt) node {};
\end{tikzpicture}
\eq^{\eqref{eq:FrobeniusProperty}}
\begin{tikzpicture}[very thick,scale=0.75,color=green!50!black, baseline=0.2cm]
\draw (0,0) -- (0,1.3);
\fill (0,0) circle (2.5pt) node[left] {{\small$\alpha$}};
\draw (0,0.8) .. controls +(-0.9,-0.3) and +(-0.9,0) .. (0,-0.8);
\draw (0.35,0.12) .. controls +(-0.25,0.1) and +(0.1,-0.1) .. (0,0.4);
\fill (0.35,0.12) circle (2.5pt) node {};
\draw (0.35,-0.2) node[Odot] (unit) {};
\draw (0.35,0.12) -- (unit);
\draw (0.35,0.12) .. controls +(0.32,0.1) and +(-0.3,0) .. (0.8,0.45);
\draw[->] (0.85,0.45) -- (0.86,0.45);
\draw (0.8,0.45) .. controls +(0.5,0) and +(0.9,0) .. (0,-0.8);
\draw[->] (0.01,-0.8) -- (-0.01,-0.8);
\fill (0,0.4) circle (2.5pt) node {};
\fill (0,0.8) circle (2.5pt) node {};
\end{tikzpicture}
\eq^{\eqref{eq:FrobeniusProperty}}
\begin{tikzpicture}[very thick,scale=0.75,color=green!50!black, baseline=0.2cm]
\draw (0,0) -- (0,1.3);
\fill (0,0) circle (2.5pt) node[left] {{\small$\alpha$}};
\draw (0,0.8) .. controls +(-0.9,-0.3) and +(-0.9,0) .. (0,-1.15);
\draw (0.35,0.12) .. controls +(-0.25,0.1) and +(0.1,-0.1) .. (0,0.4);
\fill (0.35,0.12) circle (2.5pt) node {};
\draw (0.35,-0.2) node[Odot] (unit1) {};
\draw (0.35,0.12) -- (unit1);
\draw (0.35,0.12) .. controls +(0.35,0.1) and +(-0.2,0) .. (0.7,0.45);
\draw[->] (0.77,0.45) -- (0.78,0.45);
\draw (0.7,0.45) .. controls +(0.4,0) and +(0.4,0) .. (0.7,-1.15);
\draw[->] (0.62,-1.15) -- (0.61,-1.15);
\draw (0.7,-1.15) .. controls +(-0.2,0) and +(0.35,-0.1) .. (0.35,-0.82);
\fill (0.35,-0.82) circle (2.5pt) node {};
\draw (0.35,-0.5) node[Odot] (unit2) {};
\draw (0.35,-0.82) -- (unit2);
\draw (0.35,-0.82) .. controls +(-0.25,-0.1) and +(0.25,0) .. (0,-1.15);
\fill (0,-1.15) circle (2.5pt) node {};
\draw (0,-1.47) node[Odot] (unit3) {};
\draw (0,-1.15) -- (unit3);
\fill (0,0.4) circle (2.5pt) node {};
\fill (0,0.8) circle (2.5pt) node {};
\end{tikzpicture}
\eq^{\eqref{eq:Nakayama}}                                             
\begin{tikzpicture}[very thick,scale=0.75,color=green!50!black, baseline=0.2cm]
\draw (0,0) -- (0,1.3);
\fill (0,0) circle (2.5pt) node[left] {{\small $\alpha$}};
\fill (0.53,0) circle (2.5pt) node[right] {{\small $\gamma_A^{-1}$}};
\draw (0,0.8) .. controls +(-0.9,-0.3) and +(-0.9,0) .. (0,-0.8);
\draw (0,-0.8) .. controls +(0.9,0) and +(0.7,-0.1) .. (0,0.4);
\fill (0,-0.8) circle (2.5pt) node {};
\fill (0,0.4) circle (2.5pt) node {};
\fill (0,0.8) circle (2.5pt) node {};
\draw (0,-1.2) node[Odot] (unit) {};
\draw (0,-0.8) -- (unit);
\end{tikzpicture}
\!
\eq^{\eqref{eq:AlgebraMorphism}}
\!
\begin{tikzpicture}[very thick,scale=0.75,color=green!50!black, baseline=0.2cm]
\draw (0,0) -- (0,1.3);
\fill (0,0) circle (2.5pt) node[left] {{\small $\alpha$}};
\fill (-0.67,0) circle (2.5pt) node[left] {{\small $\gamma_A$}};
\draw (0,0.8) .. controls +(-0.9,-0.3) and +(-0.9,0) .. (0,-0.8);
\draw (0,-0.8) .. controls +(0.9,0) and +(0.7,-0.1) .. (0,0.4);
\fill (0,-0.8) circle (2.5pt) node {};
\fill (0,0.4) circle (2.5pt) node {};
\fill (0,0.8) circle (2.5pt) node {};
\draw (0,-1.2) node[Odot] (unit) {};
\draw (0,-0.8) -- (unit);
\end{tikzpicture}
\!
=
\!
\begin{tikzpicture}[very thick,scale=0.75,color=green!50!black, baseline=0.2cm]
\draw (0,0) -- (0,1.3);
\fill (0,0) circle (2.5pt) node[right] {{\small$\alpha$}};
\draw (0,0.8) .. controls +(0.9,-0.3) and +(0.9,0) .. (0,-0.8);
\draw[
	decoration={markings, mark=at position 0.83 with {\arrow{<}}}, postaction={decorate}
	]
	 (0,-0.8) .. controls +(-0.9,0) and +(-0.9,0.8) .. (0,0.4);
\draw[<-] (0.01,-0.8) -- (-0.01,-0.8);
\fill (0,0.4) circle (2.5pt) node {};
\fill (0,0.8) circle (2.5pt) node {};
\end{tikzpicture}
\ee
where the last equality follows analogously to the first three played backwards. One can again easily verify that $\pirr$ squares to itself, using the same steps as in~\eqref{eq:piccIsProjector} together with~\eqref{eq:pirrGamma} and the fact that $\gamma_A$ is an algebra morphism:
$$
\pirr(\pirr(\alpha))
=
\!\!
\begin{tikzpicture}[very thick,scale=0.75,color=green!50!black, baseline]
\draw (0,0) -- (0,2.0);
\fill (0,0) circle (2.5pt) node[left] {{\small $\alpha$}};
\draw (0,0.8) .. controls +(-0.9,-0.3) and +(-0.9,0) .. (0,-0.8);
\draw (0,-0.8) .. controls +(0.9,0) and +(0.7,-0.1) .. (0,0.4);
\fill (0,-0.8) circle (2.5pt) node {};
\fill (0,0.4) circle (2.5pt) node {};
\fill (0,0.8) circle (2.5pt) node {};
\draw (0,-1.2) node[Odot] (unit) {};
\draw (0,-0.8) -- (unit);
\fill (0,1.2) circle (2.5pt) node {};
\fill (0,1.6) circle (2.5pt) node {};
\fill (0,-1.6) circle (2.5pt) node {};
\draw (0,1.6) .. controls +(-1.8,-0.3) and +(-1.4,0) .. (0,-1.6);
\draw (0,-1.6) .. controls +(1.4,0) and +(1.1,-0.1) .. (0,1.2);
\draw (0,-2.0) node[Odot] (unit) {};
\draw (0,-1.6) -- (unit);
\fill (-0.77,1.2) circle (2.5pt) node {};
\fill (-0.5,0.4) circle (2.5pt) node {};
\fill (-0.57,1.1) circle (0pt) node[above left] {{$\gamma_A$}};
\fill (-0.19,0.32) circle (0pt) node[above left] {{$\gamma_A$}};
\end{tikzpicture}
\!\!
=
\!\!
\begin{tikzpicture}[very thick,scale=0.75,color=green!50!black, baseline]
\draw (0,0) -- (0,2.0);
\fill (0,0) circle (2.5pt) node[left] {{\small $\alpha$}};
\draw (0,0.8) .. controls +(-0.9,-0.3) and +(-0.9,0) .. (0,-0.8);
\draw (0,-0.8) .. controls +(0.9,0) and +(0.7,-0.1) .. (0,0.4);
\fill (0,-0.8) circle (2.5pt) node {};
\fill (0,0.4) circle (2.5pt) node {};
\fill (0,0.8) circle (2.5pt) node {};
\draw (0,-1.2) node[Odot] (unit) {};
\draw (0,-0.8) -- (unit);
\fill (0.6,-0.2) circle (2.5pt) node {};
\fill (-0.65,-0.2) circle (2.5pt) node {};
\fill (0,-1.6) circle (2.5pt) node {};
\draw (-0.65,-0.2) .. controls +(-0.7,-0.3) and +(-0.9,0) .. (0,-1.6);
\draw (0,-1.6) .. controls +(0.9,0) and +(0.7,-0.1) .. (0.6,-0.2);
\draw (0,-2.0) node[Odot] (unit) {};
\draw (0,-1.6) -- (unit);
\fill (-0.5,0.4) circle (2.5pt) node {};
\fill (-0.3,0.3) circle (0pt) node[above left] {{$\gamma_A$}};
\end{tikzpicture}
\!\!
=
\!\!
\begin{tikzpicture}[very thick,scale=0.75,color=green!50!black, baseline]
\draw (0,0) -- (0,2.0);
\fill (0,0) circle (2.5pt) node[left] {{\small $\alpha$}};
\draw (0,0.8) .. controls +(-0.9,-0.3) and +(-0.9,0) .. (0,-0.8);
\draw (0,-0.8) .. controls +(0.9,0) and +(0.7,-0.1) .. (0,0.4);
\fill (0,-0.8) circle (2.5pt) node {};
\fill (0,0.4) circle (2.5pt) node {};
\fill (0,0.8) circle (2.5pt) node {};
\draw (0,-1.2) node[Odot] (unit) {};
\draw (0,-0.8) -- (unit);
\fill (-0.5,0.4) circle (2.5pt) node {};
\fill (-0.3,0.3) circle (0pt) node[above left] {{$\gamma_A$}};
\end{tikzpicture}
=
\pirr(\alpha) 
\, .
$$

By definition the bulk space of \textsl{(generalised) RR ground states} is 
\be\label{eq:HARR}
\Hrr = \im(\pirr) \, . 
\ee
The Ramond boundary sector is given by left $A$-modules $Q,P,\ldots\in \modu(A)$ in $\B(0,a)$ and maps $\Phi \in \Hom(Q,P)$ in the image of the projector 
\be\label{eq:RRboundaryprojector}
\begin{tikzpicture}[very thick,scale=0.75,color=blue!50!black, baseline]
\draw (0,-1) node[left] (X) {};
\draw (0,1) node[left] (Xu) {};
\draw (0,-1) -- (0,1); 
\fill (0,0) circle (2.5pt) node[right] (phi) {{\small $\Phi$}};
\end{tikzpicture} 
\lmt
\begin{tikzpicture}[very thick,scale=0.75,color=blue!50!black, baseline]
\draw (0,-1) node[left] (X) {};
\draw (0,1) node[left] (Xu) {};
\draw (0,-1) -- (0,1); 
\fill (0,0) circle (2.5pt) node[right] (phi) {{\small $\Phi$}};
\draw[
	color=green!50!black, 
	decoration={markings, mark=at position 0.84 with {\arrow{>}}, mark=at position 0.16 with {\arrow{>}}}, postaction={decorate}
	]
	 (0,0.6) .. controls +(-1.15,1.15) and +(-1.15,-1.15) .. (0,-0.6);
\fill[color=green!50!black] (0,0.6) circle (2.5pt) node (up) {};
\fill[color=green!50!black] (0,-0.6) circle (2.5pt) node (down) {};
\end{tikzpicture} 
=
\begin{tikzpicture}[very thick,scale=0.75,color=blue!50!black, baseline]
\draw (0,-1) node[left] (X) {};
\draw (0,1) node[left] (Xu) {};
\draw (0,-1) -- (0,1); 
\fill (0,0) circle (2.5pt) node[right] (phi) {{\small $\Phi$}};
\fill[color=green!50!black] (0,0.6) circle (2.5pt) node (up) {};
\fill[color=green!50!black] (0,-0.6) circle (2.5pt) node (down) {};
\draw[color=green!50!black] (0,0.6) .. controls +(-0.75,0) and +(-0.75,0) .. (0,-0.6);
\fill[color=green!50!black] (-0.56,0) circle (2.5pt) node[left] {{\small $\gamma_A$}};
\end{tikzpicture} 
\, ,
\ee
where the identity can be proved similarly as \eqref{eq:pirrGamma}. Hence the \textsl{Ramond boundary space} is 
$$
\Hom_A(Q,{}_{\gamma_A}P) \, , 
$$
where ${}_{\gamma_A}(-)$ denotes the functor that twists the left action by $\gamma_A$, cf.~\eqref{eq:TwistFunctor}.

\begin{example}
In the special case where $\B = \LG$, $W$ has a finite symmetry~$G$ and $A=A_G$ the projector~\eqref{eq:RRboundaryprojector} is explicitly given by
$$
\Phi \lmt 
\sum_{g\in G} 
\;
\begin{tikzpicture}[very thick, scale=0.5,color=green!65!black, baseline=-0.35cm]
\draw[color=blue!50!black] (0.5,-4) -- (0.5,3);
\draw[-dot-] (0,0.8) .. controls +(0,-0.5) and +(0,-0.5) .. (-0.75,0.8);
\draw[directedlightgreen, color=green!65!black] (-0.75,0.8) .. controls +(0,0.5) and +(0,0.5) .. (-1.5,0.8);
\draw[-dot-] (0,-1.8) .. controls +(0,0.5) and +(0,0.5) .. (-0.75,-1.8);
\draw[redirectedlightgreen, color=green!65!black] (-0.75,-1.8) .. controls +(0,-0.5) and +(0,-0.5) .. (-1.5,-1.8);
\draw (-1.5,0.8) -- (-1.5,-1.8);
\draw (-0.375,-0.2) node[Odot] (D) {}; 
\draw (-0.375,0.4) -- (D);
\draw (-0.375,-0.8) node[Odot] (E) {}; 
\draw (-0.375,-1.4) -- (E);
\draw (0,0.8) .. controls +(0,0.5) and +(-0.25,-0.25) .. (0.5,1.5);
\draw (0,-1.8) .. controls +(0,-0.5) and +(-0.25,0.25) .. (0.5,-2.5);
\fill[color=blue!50!black] (0.5,2.5) circle (3.3pt) node[right] (alpha) {{\small $(\gamma_g^{(P)})$}};
\fill (0.5,1.5) circle (3.3pt) node[right] (alpha) {{\small ${}_g(\lambda_P)$}};
\fill[color=blue!50!black] (0.5,-0.5) circle (3.3pt) node[right] (alpha) {{\small $\Phi$}};
\fill (0.5,-2.5) circle (3.3pt) node[right] (alpha) {{\small ${}_g(\lambda_Q)^{-1}$}};
\fill[color=blue!50!black] (0.5,-3.5) circle (3.3pt) node[right] (alpha) {{\small $(\gamma_g^{(Q)})^{-1}$}};
\fill[color=green!65!black] (-0.25,0.6) circle (0pt) node[right] (alpha) {{\tiny $g$}};
\fill[color=green!65!black] (-0.25,-1.6) circle (0pt) node[right] (alpha) {{\tiny $g$}};
\fill[color=green!65!black] (-1.7,-0.5) circle (0pt) node[right] (alpha) {{\tiny $g$}};
\end{tikzpicture}
\!\!
= 
\frac{1}{|G|} \sum_{g\in G} \det(g) \, \gamma_g^{(P)} \circ {}_g \Phi \circ (\gamma_g^{(Q)})^{-1} \, ,
$$
where we first spelled out the Nakayama automorphism and then wrote it in terms of the determinant, cf.~Example~\ref{ex:A_G}. On the other hand, the boundary projector to chiral primaries~\eqref{eq:boundaryccproj} is
$$
\Phi \lmt 
\frac{1}{|G|} \sum_{g\in G} \gamma_g^{(P)} \circ {}_g \Phi \circ (\gamma_g^{(Q)})^{-1} \, ,
$$
and hence invariance precisely recovers the standard equivariance condition~\eqref{eq:equivmorphs}. 
\end{example}

The space of generalised RR ground states $\Hrr$ also admits several characterisations. To state the result we define the \textsl{twisted relative centre} to be
\be\label{eq:gZAdefinition}
\gZA = 
\Big\{ 
\alpha \in \HIA \; \Big| \,
\begin{tikzpicture}[very thick,scale=0.45,color=green!50!black, baseline=0.1cm]
\draw[-dot-] (3,0) .. controls +(0,1) and +(0,1) .. (2,0);
\draw (2.5,0.75) -- (2.5,1.5); 
\fill (3,0) circle (4.0pt) node[right] (alpha) {{\small $\alpha$}};
\fill (2,0) circle (4.0pt) node[left] {{$\gamma_A$}};
\draw (2,0) -- (2,-0.5); 
\end{tikzpicture} 
=
\begin{tikzpicture}[very thick,scale=0.45,color=green!50!black, baseline=0.1cm]
\draw[-dot-] (3,0) .. controls +(0,1) and +(0,1) .. (2,0);
\draw (2.5,0.75) -- (2.5,1.5); 
\fill (2,0) circle (4.0pt) node[left] (alpha) {{\small $\alpha$}};
\draw (3,0) -- (3,-0.5); 
\end{tikzpicture} 
\;\Big\}
\, , 
\ee
and we recall from~\cite[Sect.\,4.2]{cw1007.2679} and~\cite{s0702590} that \textsl{Hochschild homology} $\operatorname{HH}_\bullet$ may be defined as the space of maps from the inverse Serre kernel to the identity. As follows from~\cite[Sect.\,4]{cr1210.6363} and will be discussed in more detail in Section~\ref{subsec:ocTFT} below, in our setting the Serre kernel for the theory $(a,A)$ is given by 
\be\label{eq:SerreFunctor}
\Sigma_A = \gA 
\ee
and accordingly $\Sigma_A^{-1} = {}_{\gamma_A^{-1}} A$. 

\begin{proposition}\label{prop:HRRHochschildHomology}
We have 
$$
\Hrr
=
\gZA
\cong 
\Hom_{AA}(A,\gA)
\cong 
\Hom_{AA}({}_{\gamma_A^{-1}} A,A)
=
\operatorname{HH}_\bullet(A) \, . 
$$
\end{proposition}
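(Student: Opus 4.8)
The plan is to mirror the structure of the proof of Proposition~\ref{eq:Hccequivalents} for the chiral primary case, inserting the Nakayama automorphism~$\gamma_A$ in the appropriate places. Concretely, there are three equalities/isomorphisms to establish: $\Hrr = \gZA$, then $\gZA \cong \Hom_{AA}(A,\gA)$, then $\Hom_{AA}(A,\gA) \cong \Hom_{AA}({}_{\gamma_A^{-1}}A,A)$, and finally the identification of the last space with $\operatorname{HH}_\bullet(A)$, which by the cited definition of Hochschild homology as maps from $\Sigma_A^{-1} = {}_{\gamma_A^{-1}}A$ to the identity (which is $A$ in the orbifold theory $(a,A)$, cf.~\cite[Def.\,4.1]{cr1210.6363}) is again tautological once~\eqref{eq:SerreFunctor} is granted.

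First I would prove $\Hrr = \gZA$ by the same two-sided inclusion argument as in~\eqref{eq:RelativeCentre1} and~\eqref{eq:RelativeCentre2}. For $\Hrr \subset \gZA$: take $\alpha = \pirr(\alpha)$ in its presentation~\eqref{eq:piRR}, precompose with a multiplication, and use associativity plus the Frobenius property~\eqref{eq:FrobeniusProperty} to slide the loop around; the one new feature relative to~\eqref{eq:RelativeCentre1} is that pulling the strand ``the long way round'' through the Frobenius structure produces exactly one insertion of $\gamma_A$ on the left input, by the defining diagram~\eqref{eq:Nakayama} of the Nakayama automorphism (the same manipulation already appears in~\eqref{eq:pirrGamma}). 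For the reverse inclusion $\gZA \subset \Hrr$: given $\alpha$ satisfying the twisted centrality condition~\eqref{eq:gZAdefinition}, apply $\pirr$ and use~\eqref{eq:gZAdefinition} to replace the $\gamma_A$-twisted multiplication configuration, then collapse using associativity~\eqref{eq:associativeAlgebra} and separability~\eqref{eq:separability}, exactly as in~\eqref{eq:RelativeCentre2}, landing back on~$\alpha$; so $\pirr(\alpha)=\alpha$.

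Next, for $\gZA \cong \Hom_{AA}(A,\gA)$ I would write down the analogues of~\eqref{eq:EndAAAtoZA} and~\eqref{eq:ZAtoEndAAA}: send $\Phi \in \Hom_{AA}(A,\gA)$ to $\Phi$ precomposed with the unit $I\to A$, and send $\alpha \in \gZA$ to the map $A \to \gA$ obtained by multiplying $\alpha$ onto the strand. One checks that the image of the second map is genuinely a morphism of $A$-$A$-bimodules \emph{into the twisted module} $\gA$ — left-linearity is automatic from associativity, and right-linearity into the $\gamma_A$-twisted right action is precisely the twisted centrality condition~\eqref{eq:gZAdefinition}; conversely the first map lands in $\gZA$ because a bimodule map into $\gA$ satisfies that twisted intertwining relation. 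That the two maps are mutually inverse is the same one-line unit/counit cancellation as displayed after~\eqref{eq:ZAtoEndAAA}. The isomorphism $\Hom_{AA}(A,\gA) \cong \Hom_{AA}({}_{\gamma_A^{-1}}A,A)$ is then purely formal: since $\gamma_A$ is an algebra automorphism, the twist functors ${}_{\gamma_A}(-)$ and ${}_{\gamma_A^{-1}}(-)$ of~\eqref{eq:TwistFunctor} are equivalences, and untwisting the target is the same as twisting the source by the inverse, which is an elementary relabelling of the module-action vertices.

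The main obstacle is bookkeeping rather than conceptual: making sure the Nakayama automorphism ends up on the \emph{correct} leg with the \emph{correct} orientation at each step, since $\gamma_A$ and $\gamma_A^{-1}$ (and their primed cousins $\gamma'_A$, cf.~\eqref{eq:NakayamaPrime}) are easy to confuse when one reverses a diagram. The reliable way to control this is to always reduce the oriented-loop pictures in~\eqref{eq:piRR} to the unoriented-loop pictures via~\eqref{eq:FrobeniusProperty}, tracking the $\gamma_A$ insertion exactly as in the displayed chain~\eqref{eq:pirrGamma}, and to invoke the fact that $\gamma_A$ is an algebra morphism (so it commutes past multiplications and units, cf.~\eqref{eq:AlgebraMorphism}) whenever it needs to be moved. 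With that discipline every step is a routine string-diagram identity, and the final identification with $\operatorname{HH}_\bullet(A)$ requires nothing beyond the already-granted formula $\Sigma_A = \gA$ from~\eqref{eq:SerreFunctor} together with the definition of Hochschild homology recalled just above the statement.
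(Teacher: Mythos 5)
Your proposal is correct and follows essentially the same route as the paper: the equality $\Hrr = \gZA$ is obtained via \eqref{eq:pirrGamma} and the analogues of \eqref{eq:RelativeCentre1}, \eqref{eq:RelativeCentre2}; the isomorphism with $\Hom_{AA}(A,\gA)$ comes from multiplying against $\alpha$ with inverse given by precomposition with the unit; and the identification with $\operatorname{HH}_\bullet(A)$ is tautological from $\Sigma_A = \gA$. The only cosmetic difference is that the paper realises the second Hom-space by a second explicit multiplication map $\gZA \rightarrow \Hom_{AA}(A,\Aginv)$ (attaching $\alpha$ to the other input of the product), whereas you derive it from the first isomorphism by formally untwisting the target via the twist functors of \eqref{eq:TwistFunctor} --- the same relabelling of action vertices.
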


\begin{proof}
Using~\eqref{eq:pirrGamma} together with the automorphism property of $\gamma_A$ and mimicking the steps in \eqref{eq:RelativeCentre1}, \eqref{eq:RelativeCentre2} we obtain the first equality. The other descriptions of $\Hrr$ follow from the definition of Hochschild homology and the isomorphisms
\begin{align}
&\gZA \lra \Hom_{AA}(A,\gA) \, , \quad
\alpha \lmt 
\begin{tikzpicture}[very thick,scale=0.45,color=green!50!black, baseline=0.1cm]
\draw[-dot-] (3,0) .. controls +(0,1) and +(0,1) .. (2,0);
\draw (2.5,0.75) -- (2.5,1.5); 
\fill (2,0) circle (4pt) node[left] (alpha) {{\small $\alpha$}};
\draw (3,0) -- (3,-0.5);  
\end{tikzpicture} 
\label{eq:pirrHomAAAgA} \, , \\
& \gZA \lra \Hom_{AA}(A,\Aginv) \, , \quad
\alpha \lmt \quad
\begin{tikzpicture}[very thick,scale=0.45,color=green!50!black, baseline=0.1cm]
\draw[-dot-] (3,0) .. controls +(0,1) and +(0,1) .. (2,0);
\draw (2.5,0.75) -- (2.5,1.5); 
\fill (3,0) circle (4pt) node[right] (alpha) {{\small $\alpha$}};
\draw (2,0) -- (2,-0.5); 
\end{tikzpicture} 
\label{eq:pirrHomAAAAginv}
\end{align}
with inverses given by 
$
\Phi \mapsto
\begin{tikzpicture}[very thick,scale=0.4,color=green!50!black, baseline=-0.15cm]
\draw (0,-0.5) node[Odot] (D) {}; 
\draw (D) -- (0,0.6); 
\fill (0,0) circle (4pt) node[left] (alpha) {{\small $\Phi$}};
\end{tikzpicture} 
\, .
$
\end{proof}

\subsubsection*{Relation between `chiral primaries' and `RR ground states'}

The projectors~\eqref{eq:ccProj}, \eqref{eq:boundaryccproj} and~\eqref{eq:piRR}, \eqref{eq:RRboundaryprojector} to chiral and Ramond sectors are related by the Nakayama automorphisms $\gamma_A, \gamma'_A$ of~\eqref{eq:Nakayama}, \eqref{eq:NakayamaPrime}:  

\begin{lemma}
For $\alpha \in \Hom(I,A)$ the bulk relations are
\begin{align}
\label{eq:RRfromccgamma}
\pirr(\alpha) = & 
\begin{tikzpicture}[very thick,scale=0.75,color=green!50!black, baseline=0.2cm]
\draw (0,0) -- (0,1.3);
\fill (0,0) circle (2.5pt) node[left] {{\small$\alpha$}};
\draw (0,0.8) .. controls +(-0.9,-0.3) and +(-0.9,0) .. (0,-0.8);
\draw[
	decoration={markings, mark=at position 0.83 with {\arrow{<}}}, postaction={decorate}
	]
	 (0,-0.8) .. controls +(0.9,0) and +(0.9,0.8) .. (0,0.4);
\draw[->] (0.01,-0.8) -- (-0.01,-0.8);
\fill (0,0.4) circle (2.5pt) node {};
\fill (0,0.8) circle (2.5pt) node {};
\end{tikzpicture}
= 
\begin{tikzpicture}[very thick,scale=0.75,color=green!50!black, baseline=0.2cm]
\draw (0,0) -- (0,1.3);
\fill (0,0) circle (2.5pt) node[left] {{\small $\alpha$}};
\fill (0.53,0) circle (2.5pt) node[right] {{\small $\gamma_A^{-1}$}};
\draw (0,0.8) .. controls +(-0.9,-0.3) and +(-0.9,0) .. (0,-0.8);
\draw (0,-0.8) .. controls +(0.9,0) and +(0.7,-0.1) .. (0,0.4);
\fill (0,-0.8) circle (2.5pt) node {};
\fill (0,0.4) circle (2.5pt) node {};
\fill (0,0.8) circle (2.5pt) node {};
\draw (0,-1.2) node[Odot] (unit) {};
\draw (0,-0.8) -- (unit);
\end{tikzpicture}
\!
=
\!
\begin{tikzpicture}[very thick,scale=0.75,color=green!50!black, baseline=0.2cm]
\draw (0,0) -- (0,1.3);
\fill (0,0) circle (2.5pt) node[left] {{\small $\alpha$}};
\fill (-0.67,0) circle (2.5pt) node[left] {{\small $\gamma_A$}};
\draw (0,0.8) .. controls +(-0.9,-0.3) and +(-0.9,0) .. (0,-0.8);
\draw (0,-0.8) .. controls +(0.9,0) and +(0.7,-0.1) .. (0,0.4);
\fill (0,-0.8) circle (2.5pt) node {};
\fill (0,0.4) circle (2.5pt) node {};
\fill (0,0.8) circle (2.5pt) node {};
\draw (0,-1.2) node[Odot] (unit) {};
\draw (0,-0.8) -- (unit);
\end{tikzpicture}
\!
=
\!
\begin{tikzpicture}[very thick,scale=0.75,color=green!50!black, baseline=0.2cm]
\draw (0,0) -- (0,1.3);
\fill (0,0) circle (2.5pt) node[right] {{\small$\alpha$}};
\draw (0,0.8) .. controls +(0.9,-0.3) and +(0.9,0) .. (0,-0.8);
\draw[
	decoration={markings, mark=at position 0.83 with {\arrow{<}}}, postaction={decorate}
	]
	 (0,-0.8) .. controls +(-0.9,0) and +(-0.9,0.8) .. (0,0.4);
\draw[<-] (0.01,-0.8) -- (-0.01,-0.8);
\fill (0,0.4) circle (2.5pt) node {};
\fill (0,0.8) circle (2.5pt) node {};
\end{tikzpicture}
, \\
\label{eq:ccfromRRgamma}
\picc(\alpha) = & 
\begin{tikzpicture}[very thick,scale=0.75,color=green!50!black, baseline=0.2cm]
\draw (0,0) -- (0,1.3);
\fill (0,0) circle (2.5pt) node[left] {{\small $\alpha$}};
\draw (0,0.8) .. controls +(-0.9,-0.3) and +(-0.9,0) .. (0,-0.8);
\draw (0,-0.8) .. controls +(0.9,0) and +(0.7,-0.1) .. (0,0.4);
\fill (0,-0.8) circle (2.5pt) node {};
\fill (0,0.4) circle (2.5pt) node {};
\fill (0,0.8) circle (2.5pt) node {};
\draw (0,-1.2) node[Odot] (unit) {};
\draw (0,-0.8) -- (unit);
\end{tikzpicture}
=
\begin{tikzpicture}[very thick,scale=0.75,color=green!50!black, baseline=0.2cm]
\draw (0,0) -- (0,1.3);
\fill (0,0) circle (2.5pt) node[left] {{\small$\alpha$}};
\fill (0.67,0) circle (2.5pt) node[right] {{\small $\gamma'_A$}};
\draw (0,0.8) .. controls +(-0.9,-0.3) and +(-0.9,0) .. (0,-0.8);
\draw[
	decoration={markings, mark=at position 0.83 with {\arrow{<}}}, postaction={decorate}
	]
	 (0,-0.8) .. controls +(0.9,0) and +(0.9,0.8) .. (0,0.4);
\draw[->] (0.01,-0.8) -- (-0.01,-0.8);
\fill (0,0.4) circle (2.5pt) node {};
\fill (0,0.8) circle (2.5pt) node {};
\end{tikzpicture}
=
\begin{tikzpicture}[very thick,scale=0.75,color=green!50!black, baseline=0.2cm]
\draw (0,0) -- (0,1.3);
\fill (0,0) circle (2.5pt) node[right] {{\small$\alpha$}};
\fill (-0.67,0) circle (2.5pt) node[left] {{\small $(\gamma'_A)^{-1}$}};
\draw (0,0.8) .. controls +(0.9,-0.3) and +(0.9,0) .. (0,-0.8);
\draw[
	decoration={markings, mark=at position 0.83 with {\arrow{<}}}, postaction={decorate}
	]
	 (0,-0.8) .. controls +(-0.9,0) and +(-0.9,0.8) .. (0,0.4);
\draw[<-] (0.01,-0.8) -- (-0.01,-0.8);
\fill (0,0.4) circle (2.5pt) node {};
\fill (0,0.8) circle (2.5pt) node {};
\end{tikzpicture}
,
\end{align}
while in the boundary sector for $\Phi \in \Hom(Q,P)$ we have 
$$
\begin{tikzpicture}[very thick,scale=0.75,color=blue!50!black, baseline]
\draw (0,-1) node[left] (X) {};
\draw (0,1) node[left] (Xu) {};
\draw (0,-1) -- (0,1); 
\fill (0,0) circle (2.5pt) node[right] (phi) {{\small $\Phi$}};
\fill[color=green!50!black] (0,0.6) circle (2.5pt) node (up) {};
\fill[color=green!50!black] (0,-0.6) circle (2.5pt) node (down) {};
\draw[color=green!50!black] (0,0.6) .. controls +(-0.75,0) and +(-0.75,0) .. (0,-0.6);
\end{tikzpicture} 
= 
\begin{tikzpicture}[very thick,scale=0.75,color=blue!50!black, baseline]
\draw (0,-1) node[left] (X) {};
\draw (0,1) node[left] (Xu) {};
\draw (0,-1) -- (0,1); 
\fill (0,0) circle (2.5pt) node[right] (phi) {{\small $\Phi$}};
\draw[
	color=green!50!black, 
	decoration={markings, mark=at position 0.84 with {\arrow{>}}, mark=at position 0.16 with {\arrow{>}}}, postaction={decorate}
	]
	 (0,0.6) .. controls +(-1.15,1.15) and +(-1.15,-1.15) .. (0,-0.6);
\fill[color=green!50!black] (0,0.6) circle (2.5pt) node (up) {};
\fill[color=green!50!black] (0,-0.6) circle (2.5pt) node (down) {};
\fill[color=green!50!black] (-0.86,0) circle (2.5pt) node[left] {{\small $(\gamma'_A)^{-1}$}};
\end{tikzpicture} 
\, , \quad 
\begin{tikzpicture}[very thick,scale=0.75,color=blue!50!black, baseline]
\draw (0,-1) node[left] (X) {};
\draw (0,1) node[left] (Xu) {};
\draw (0,-1) -- (0,1); 
\fill (0,0) circle (2.5pt) node[right] (phi) {{\small $\Phi$}};
\draw[
	color=green!50!black, 
	decoration={markings, mark=at position 0.84 with {\arrow{>}}, mark=at position 0.16 with {\arrow{>}}}, postaction={decorate}
	]
	 (0,0.6) .. controls +(-1.15,1.15) and +(-1.15,-1.15) .. (0,-0.6);
\fill[color=green!50!black] (0,0.6) circle (2.5pt) node (up) {};
\fill[color=green!50!black] (0,-0.6) circle (2.5pt) node (down) {};
\end{tikzpicture} 
= 
\begin{tikzpicture}[very thick,scale=0.75,color=blue!50!black, baseline]
\draw (0,-1) node[left] (X) {};
\draw (0,1) node[left] (Xu) {};
\draw (0,-1) -- (0,1); 
\fill (0,0) circle (2.5pt) node[right] (phi) {{\small $\Phi$}};
\fill[color=green!50!black] (0,0.6) circle (2.5pt) node (up) {};
\fill[color=green!50!black] (0,-0.6) circle (2.5pt) node (down) {};
\draw[color=green!50!black] (0,0.6) .. controls +(-0.75,0) and +(-0.75,0) .. (0,-0.6);
\fill[color=green!50!black] (-0.56,0) circle (2.5pt) node[left] {{\small $\gamma_A$}};
\end{tikzpicture} 
\, . 
$$
\end{lemma}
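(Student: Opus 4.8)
The plan is to reduce all four identities to the single string-diagram computation already carried out in~\eqref{eq:pirrGamma}, reading the diagrams in the pivotal bicategory~$\B$. For the first bulk chain~\eqref{eq:RRfromccgamma} there is in fact nothing new to prove: the displayed equalities are exactly those appearing in~\eqref{eq:pirrGamma} with the intermediate diagrams suppressed, so they follow from the Frobenius relations~\eqref{eq:FrobeniusProperty}, the definition~\eqref{eq:Nakayama} of the Nakayama automorphism, and the fact~\eqref{eq:AlgebraMorphism} that $\gamma_A$ is an algebra morphism (this last point is what lets one slide $\gamma_A^{-1}$ past the multiplication so that it acts directly on $\alpha$, and likewise rewrite it as $\gamma_A$ on the other side).

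For the reverse chain~\eqref{eq:ccfromRRgamma} I would start from the $\picc$-loop and reverse the orientation of the $A$-line that encircles $\alpha$. Bending that strand using the \emph{other} of the two Frobenius moves in~\eqref{eq:FrobeniusProperty}, together with a Zorro move to create and then cancel an auxiliary $A$-loop, forces an automorphism of $A$ around the loop; comparing with the defining diagram~\eqref{eq:NakayamaPrime} one identifies this automorphism as $\gamma'_A$. This is the same manipulation as in~\eqref{eq:pirrGamma}, but mirror-reflected, which is precisely why $\gamma'_A$ rather than $\gamma_A$ appears, and why one gets $\gamma'_A$ (bending the top) versus $(\gamma'_A)^{-1}$ (bending the bottom). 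The equality of the two right-hand forms in~\eqref{eq:ccfromRRgamma} then follows from separability~\eqref{eq:separability} and the algebra-morphism property of $\gamma'_A$, which holds by the same one-line argument that establishes it for $\gamma_A$.

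For the boundary relations I would first rewrite the boundary projectors~\eqref{eq:boundaryccproj} and~\eqref{eq:RRboundaryprojector} in their long forms, using that the left $A$-module $Q$ (resp.\ $P$) is automatically a left $A$-comodule through the Frobenius structure, i.e.\ the horizontal reflection of~\eqref{eq:comoduleAction}. The $A$-loop encircling $\Phi$ can then be pulled and bent exactly as the $A$-loop encircling $\alpha$ was in~\eqref{eq:pirrGamma}: one direction of bending produces $\gamma_A$ — giving the second boundary relation, which is essentially~\eqref{eq:RRboundaryprojector} itself — while the opposite direction produces $(\gamma'_A)^{-1}$, giving the first, the boundary analogue of~\eqref{eq:ccfromRRgamma}. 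Separability~\eqref{eq:separability} is again used to collapse the spurious bubble, and the module-map property of $\Phi$ moves the algebra action from one side of $\Phi$ to the other where a given step requires it.

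The computations are routine once~\eqref{eq:pirrGamma} is available; the only real difficulty is bookkeeping. I would have to keep careful track of (i) whether each bend is performed at the top or the bottom of the loop, which decides $\gamma$ versus $\gamma^{-1}$, and (ii) which of the two Frobenius moves is used at each step, which decides $\gamma_A$ versus $\gamma'_A$ through their defining diagrams~\eqref{eq:Nakayama}, \eqref{eq:NakayamaPrime}. In the Landau-Ginzburg realisation one should in principle also track the sign shifts coming from pivotality holding only up to a shift, but as noted after~\eqref{eq:qdims} these cancel in all the relevant composites, so they may be ignored here.
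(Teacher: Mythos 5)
Your proposal is correct and follows essentially the same route as the paper: the paper likewise observes that \eqref{eq:RRfromccgamma} is exactly the computation \eqref{eq:pirrGamma}, and derives \eqref{eq:ccfromRRgamma} and the boundary relations by the mirrored bending argument using \eqref{eq:FrobeniusProperty}, the defining diagram \eqref{eq:NakayamaPrime} of $\gamma'_A$, and Zorro moves. The only cosmetic difference is that you invoke the algebra-morphism property of $\gamma'_A$ to relate the two right-hand forms of \eqref{eq:ccfromRRgamma}, whereas the paper gets each form separately from the bending computation, so that property is never needed.
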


\begin{proof}
The first identity~\eqref{eq:RRfromccgamma} was already shown in~\eqref{eq:pirrGamma}, the other ones follow similarly from the definitions of $\gamma_A, \gamma'_A$ and the fact that~$A$ is Frobenius, for example 
$$
\begin{tikzpicture}[very thick,scale=0.75,color=green!50!black, baseline=0.2cm]
\draw (0,0) -- (0,1.3);
\fill (0.65,0) circle (2.5pt) node[right] {{$\gamma_A'$}};
\fill (0,0) circle (2.5pt) node[left] {};
\draw (0,0.8) .. controls +(-0.9,-0.3) and +(-0.9,0) .. (0,-0.8);
\draw[
	decoration={markings, mark=at position 0.83 with {\arrow{<}}}, postaction={decorate}
	]
	 (0,-0.8) .. controls +(0.9,0) and +(0.9,0.8) .. (0,0.4);
\draw[->] (0.01,-0.8) -- (-0.01,-0.8);
\fill (0,0.4) circle (2.5pt) node {};
\fill (0,0.8) circle (2.5pt) node {};
\end{tikzpicture}
\eq^{\eqref{eq:FrobeniusProperty}}
\begin{tikzpicture}[very thick,scale=0.75,color=green!50!black, baseline=0.2cm]
\draw (0,0) -- (0,1.3);
\fill (0,0) circle (2.5pt) node[left] {};
\fill (0,0.4) circle (2.5pt) node {};
\fill (0,0.8) circle (2.5pt) node {};
\draw (0,0.4) .. controls +(0.4,-0.2) and +(0,0.2) .. (0.4,0);
\draw[-dot-] (0.4,0) .. controls +(0,-0.5) and +(0,-0.5) .. (1.0,0);
\draw (0.7,-0.75) node[Odot] (unit) {};
\draw (0.7,-0.3) -- (unit);
\draw[directedgreen] (1.0,0) .. controls +(0,0.5) and +(0,0.5) .. (1.6,0);
\draw[directedgreen] (1.6,0) .. controls +(0,-1.5) and +(0,-1.5) .. (-0.5,0);
\draw (-0.5,0) .. controls +(0,0.4) and +(-0.2,-0.2) .. (0,0.8);
\fill (1.6,0) circle (2.5pt) node[right] {{\small $\gamma'_A$}};
\end{tikzpicture}
\eq^{\eqref{eq:NakayamaPrime}}
\begin{tikzpicture}[very thick,scale=0.75,color=green!50!black, baseline=0.2cm]
\draw (0,0) -- (0,1.3);
\fill (0,0) circle (2.5pt) node[left] {};
\fill (0,0.4) circle (2.5pt) node {};
\fill (0,0.8) circle (2.5pt) node {};
\draw (0,0.4) .. controls +(0.4,-0.2) and +(0,0.2) .. (0.4,0);
\draw[-dot-] (0.4,0) .. controls +(0,-0.5) and +(0,-0.5) .. (1.0,0);
\draw (0.7,-0.75) node[Odot] (unit) {};
\draw (0.7,-0.3) -- (unit);
\draw[directedgreen] (1.0,0) .. controls +(0,0.5) and +(0,0.5) .. (1.6,0);
\draw[directedgreen] (1.6,0) .. controls +(0,-0.5) and +(0,-0.5) .. (2.2,0);
\draw[-dot-] (2.2,0) .. controls +(0,0.5) and +(0,0.5) .. (2.8,0);
\draw (2.5,0.75) node[Odot] (unit) {};
\draw (2.5,0.4) -- (unit);
\draw (2.8,0) -- (2.8,-1.0);
\draw[-dot-] (2.8,-1.0) .. controls +(0,-0.5) and +(0,-0.5) .. (2.2,-1.0);
\draw (2.5,-1.75) node[Odot] (unit) {};
\draw (2.5,-1.3) -- (unit);
\draw[directedgreen] (2.2,-1.0) .. controls +(0,0.5) and +(0,0.5) .. (1.6,-1.0);
\draw[directedgreen] (1.6,-1) .. controls +(0,-0.95) and +(0,-0.95) .. (-0.5,-1);
\draw (-0.5,-1) -- (-0.5,0);
\draw (-0.5,0) .. controls +(0,0.4) and +(-0.2,-0.2) .. (0,0.8);
\end{tikzpicture}
\, 
\eq^{\eqref{eq:FrobeniusProperty}}_{\text{Zorro}}
\begin{tikzpicture}[very thick,scale=0.75,color=green!50!black, baseline=0.2cm]
\draw (0,0) -- (0,1.3);
\fill (0,0) circle (2.5pt) node[left] {};
\draw (0,0.8) .. controls +(-0.9,-0.3) and +(-0.9,0) .. (0,-0.8);
\draw (0,-0.8) .. controls +(0.9,0) and +(0.7,-0.1) .. (0,0.4);
\fill (0,-0.8) circle (2.5pt) node {};
\fill (0,0.4) circle (2.5pt) node {};
\fill (0,0.8) circle (2.5pt) node {};
\draw (0,-1.2) node[Odot] (unit) {};
\draw (0,-0.8) -- (unit);
\end{tikzpicture}
\!\! . 
$$
\end{proof}

If~$A$ is symmetric then $\Hrr = \Hcc$. Recall from Example~\ref{ex:A_G} that if $A=A_G$ symmetry is equivalent to $\det(g)=1$ for all $g\in G$, and in Section~\ref{subsec:conventionalorbi} we saw that in this case spectral flow indeed provides an isomorphism between (c,c) fields and RR ground states. The identities 
\be\label{eq:HRRHccSymA}
\Hrr = \Hcc 
\quad\text{and}\quad 
\operatorname{HH}_\bullet(A)
=
\operatorname{HH}^\bullet(A)
\quad \text{for symmetric } A
\ee 
generalise this result. To see that~\eqref{eq:HRRHccSymA} is true we simply recall that for symmetric algebras the Nakayama automorphism is the identity, and thus \eqref{eq:pirrGamma} implies $\pirr=\picc$ in this case.
That Hochschild cohomology and homology are isomorphic is familiar in the case of Calabi-Yau manifolds, and in conjunction with the discussion in Section~\ref{subsec:conventionalorbi} we find that the orbifolding defect~$A$ being symmetric is analogous to the Calabi-Yau condition in our setting. 

Furthermore we have that
\be\label{eq:AsymSF}
A\text{ is symmetric} 
\quad\Longleftrightarrow\quad 
\pirr
\Big(\;
\begin{tikzpicture}[very thick,scale=0.75,color=green!50!black, baseline=-0.08cm]
\draw (0,-0.5) node[Odot] (D) {}; 
\draw (D) -- (0,0.6); 
\end{tikzpicture} 
\;\Big)
=
\begin{tikzpicture}[very thick,scale=0.75,color=green!50!black, baseline=-0.08cm]
\draw (0,-0.5) node[Odot] (D) {}; 
\draw (D) -- (0,0.6); 
\end{tikzpicture} 
\, , 
\ee
interpreted as the condition that the state 
$
\begin{tikzpicture}[very thick,scale=0.4,color=green!50!black, baseline=-0.12cm]
\draw (0,-0.5) node[Odot] (D) {}; 
\draw (D) -- (0,0.6); 
\end{tikzpicture} 
$ 
associated to the (generalised) spectral flow operator survives the orbifold projection in the RR sector. This immediately follows from the characterisation in terms of the twisted relative centre, since 
$
\begin{tikzpicture}[very thick,scale=0.35,color=green!50!black, baseline=-0.12cm]
\draw (0,-0.5) node[Odot] (D) {}; 
\draw (D) -- (0,0.6); 
\end{tikzpicture} 
\in \gZA 
$ 
iff 
$
\gamma_A=1_A
$ 
which in turn is equivalent to~$A$ being symmetric.

\subsection{Open/closed topological field theory}\label{subsec:ocTFT}

We will now explore to what extent a generalised orbifold gives rise to the structure of an open/closed topological field theory. For this we start by recalling the algebraic description of a 2d open/closed TFT \cite{l0010269,ms0609042}, which consists of the following data:
 \begin{itemize}
    \item a commutative Frobenius algebra $C$ with trace map $\langle - \rangle_C: C \rightarrow \C$,
    \item a Calabi-Yau category $\mathcal{O}$, i.\,e.~a $\C$-linear category with trace maps $\langle - \rangle_Q: \End_{\mathcal{O}}(Q) \rightarrow \C$ for every $Q\in \mathcal{O}$, such that the induced pairings $\langle -,- \rangle_{Q,P}: \Hom_{\mathcal{O}}(Q,P) \times \Hom_{\mathcal{O}}(P,Q) \rightarrow \C$ given by $\langle \Psi_1,\Psi_2 \rangle_{Q,P}=\langle \Psi_1 \Psi_2 \rangle_Q$ are symmetric and nondegenerate,
    \item for every $Q\in\mathcal{O}$ a \textsl{bulk-boundary map} $\beta_Q:C\rightarrow\End_{\mathcal{O}}(Q)$ and a \textsl{boundary-bulk map} $\beta^Q: \End_{\mathcal{O}}(Q) \rightarrow C$.
 \end{itemize}
These data have to satisfy the axioms:
 \begin{itemize}
  \item the bulk-boundary maps $\beta_Q$ are morphisms of algebras (which we always take to be unital and associative) whose image is in the centre of $\End_{\mathcal{O}}(Q)$,
  \item $\beta_Q$ and $\beta^Q$ are adjoint with respect to the nondegenerate pairings on~$C$ and $\End_{\mathcal{O}}(Q)$ in the sense that 
$
\left\langle \phi \, \beta^Q(\Psi) \right\rangle_C = \left\langle \beta_Q(\phi) \, \Psi \right\rangle_Q
$ 
for all $\phi \in C$ and $\Psi \in \End_{\mathcal{O}}(Q)$,
  \item the \textsl{Cardy condition}
$$
\Big\langle \beta^Q(\Phi) \, \beta^P(\Psi) \Big\rangle_C = \tr({}_\Psi m_\Phi)
$$ 
is satisfied for every $\Phi \in \End_{\mathcal{O}}(Q)$ and $\Psi \in \End_{\mathcal{O}}(P)$, where ${}_\Psi m_\Phi(\alpha)= \Psi \circ \alpha \circ \Phi$ and the trace is taken over $\Hom_{\mathcal{O}}(Q,P)$.
 \end{itemize}

Examples of 2d open/closed TFTs include the A- and B-twisted sigma models~\cite{cw1007.2679, a1001.4593, g1304.7312}, as well as unorbifolded affine B-twisted Landau-Ginzburg models~\cite{pv1002.2116, m0912.1629, cm1208.1481}. Furthermore, it was shown in~\cite{cr1210.6363} that every generalised Landau-Ginzburg orbifold $(W,A)$ with~$A$ a \textsl{symmetric} separable Frobenius algebra in the bicategory of Landau-Ginzburg models gives rise to an open/closed TFT. In what follows, we will determine how much of this structure remains when the symmetry condition on~$A$ is dropped.

\subsubsection*{Products in bulk and boundary sectors}

We start with the observation that the vector space $\HIA$ can be made into an algebra (over the complex numbers) by endowing it with the multiplication
\be\label{eq:alphatimesbeta}
\alpha \cdot \beta = 
\begin{tikzpicture}[very thick,scale=0.75,color=green!50!black, baseline=0.4cm]
\draw[-dot-] (3,0) .. controls +(0,1) and +(0,1) .. (2,0);
\draw (2.5,0.75) -- (2.5,1.5); 
\fill (2,0) circle (2.5pt) node[left] (alpha) {{\small $\alpha$}};
\fill (3,0) circle (2.5pt) node[right] (beta) {{\small $\beta$}};
\end{tikzpicture} 
. 
\ee
Associativity and unitality are immediate consequences of the fact that~$A$ is associative and unital. The product on $\HIA$ descends to the subspaces $\Hcc=\ZA$ and $\Hrr=\gZA$, so in particular we have the bilinear maps 
\begin{align*}
 \Hcc \times \Hcc \lra \Hcc 
 \, , \quad
 \Hrr \times \Hcc \lra \Hrr \, .
\end{align*}
That these products indeed map to the subspaces of $\Hom(I,A)$ indicated can be seen from
\be\label{eq:productsinHIA}
\begin{tikzpicture}[very thick,scale=0.75,color=green!50!black, baseline=0.4cm]
\draw[-dot-] (3,0) .. controls +(0,1) and +(0,1) .. (2,0);
\fill (2,0) circle (2.5pt) node[below] (alpha) {{\small $\alpha$}};
\fill (3,0) circle (2.5pt) node[below] (beta) {{\small $\beta$}};
\draw[-dot-] (2.5,0.75) .. controls +(0,1) and +(0,1) .. (3.5,0.75);
\draw (3,1.5) -- (3,2.25);
\draw (3.5,-0.75) -- (3.5,0.75);
\end{tikzpicture}
\, 
\eq^{\eqref{eq:associativeAlgebra}}
\begin{tikzpicture}[very thick,scale=0.75,color=green!50!black, baseline=0.4cm]
\fill (2.5,0) circle (2.5pt) node[below] (alpha) {{\small $\alpha$}};
\draw (2.5,0) -- (2.5,0.75);
\draw[-dot-] (2.5,0.75) .. controls +(0,1) and +(0,1) .. (3.5,0.75);
\fill (3,0) circle (2.5pt) node[below] (beta) {{\small $\beta$}};
\draw[-dot-] (3,0) .. controls +(0,1) and +(0,1) .. (4,0);
\draw (3,1.5) -- (3,2.25);
\draw (4,-0.75) -- (4,0);
\end{tikzpicture}
\,
=
\begin{tikzpicture}[very thick,scale=0.75,color=green!50!black, baseline=0.4cm]
\fill (2.5,0) circle (2.5pt) node[below] (alpha) {{\small $\alpha$}};
\draw (2.5,0) -- (2.5,0.75);
\draw[-dot-] (2.5,0.75) .. controls +(0,1) and +(0,1) .. (3.5,0.75);
\fill (4,0) circle (2.5pt) node[below] (beta) {{\small $\beta$}};
\draw[-dot-] (3,0) .. controls +(0,1) and +(0,1) .. (4,0);
\draw (3,1.5) -- (3,2.25);
\draw (3,-0.75) -- (3,0);
\end{tikzpicture}
\,
\eq^{\eqref{eq:associativeAlgebra}}
\begin{tikzpicture}[very thick,scale=0.75,color=green!50!black, baseline=0.4cm]
\draw[-dot-] (3,0) .. controls +(0,1) and +(0,1) .. (2,0);
\fill (3.5,0) circle (2.5pt) node[below] (alpha) {{\small $\beta$}};
\fill (2,0) circle (2.5pt) node[below] {{\small $\alpha$}};
\draw[-dot-] (2.5,0.75) .. controls +(0,1) and +(0,1) .. (3.5,0.75);
\draw (3,1.5) -- (3,2.25);
\draw (3.5,0) -- (3.5,0.75);
\draw (3,-0.75) -- (3,0);
\end{tikzpicture}
=
\begin{tikzpicture}[very thick,scale=0.75,color=green!50!black, baseline=0.4cm]
\draw[-dot-] (3,0) .. controls +(0,1) and +(0,1) .. (2,0);
\fill (3.5,0) circle (2.5pt) node[below] (alpha) {{\small $\beta$}};
\fill (3,0) circle (2.5pt) node[below] (beta) {{\small $\alpha$}};
\fill (2,0) circle (2.5pt) node[right] {{\small $\gamma$}};
\draw[-dot-] (2.5,0.75) .. controls +(0,1) and +(0,1) .. (3.5,0.75);
\draw (3,1.5) -- (3,2.25);
\draw (3.5,0) -- (3.5,0.75);
\draw (2,-0.75) -- (2,0);
\end{tikzpicture}
\eq^{\eqref{eq:associativeAlgebra}}
\begin{tikzpicture}[very thick,scale=0.75,color=green!50!black, baseline=0.4cm]
\fill (4,0) circle (2.5pt) node[below] (alpha) {{\small $\beta$}};
\draw (2.5,0) -- (2.5,0.75);
\draw[-dot-] (2.5,0.75) .. controls +(0,1) and +(0,1) .. (3.5,0.75);
\fill (3,0) circle (2.5pt) node[below] (beta) {{\small $\alpha$}};
\fill (2.5,0) circle (2.5pt) node[left] {{\small $\gamma$}};
\draw[-dot-] (3,0) .. controls +(0,1) and +(0,1) .. (4,0);
\draw (3,1.5) -- (3,2.25);
\draw (2.5,-0.75) -- (2.5,0);
\end{tikzpicture}
,
\ee
where $\beta \in \Hcc$, and $\gamma=1$ if $\alpha \in \Hcc$, while for $\alpha \in\Hrr$ we take $\gamma=\gamma_A$. In the unlabelled steps we used the characterisations in terms of the (twisted) relative centre \eqref{eq:ZAdefinition} and \eqref{eq:gZAdefinition}. Thus we have: 

\begin{lemma}\label{lem:H0H1algebramodule}
The space $\Hcc$ is a commutative algebra, while $\Hrr$ is a module over $\Hcc$. 
\end{lemma}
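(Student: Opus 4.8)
The plan is to establish three things in turn: that the multiplication \eqref{eq:alphatimesbeta} on $\HIA$ restricts to a well-defined product on $\Hcc = \ZA$, that this restricted product is commutative, and that $\Hrr = \gZA$ is closed under multiplication by elements of $\Hcc$, thereby becoming a module. Associativity and unitality on all of $\HIA$ are immediate from associativity and unitality of the multiplication of $A$ (the unit being the composite $I \to A$), so the only genuine content is the interaction with the (twisted) relative centre conditions.

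First I would verify closure of $\Hcc$ under the product. By \eqref{eq:Hccequivalents} we have $\Hcc = \ZA$, so given $\alpha, \beta \in \ZA$ we must show $\alpha\cdot\beta$ again satisfies the relative-centre condition \eqref{eq:ZAdefinition}; this is exactly the first computation in \eqref{eq:productsinHIA} with $\gamma = 1$: one pushes an extra $A$-input past $\beta$ using $\beta \in \ZA$, then past $\alpha$ using $\alpha \in \ZA$, reassociating via \eqref{eq:associativeAlgebra} as needed. Since the unit of $A$ visibly lies in $\ZA$, $\Hcc$ is a unital associative algebra. For the module statement, given $\alpha \in \Hrr = \gZA$ and $\beta \in \Hcc = \ZA$, the \emph{same} string-diagram manipulation — now carried out with $\gamma = \gamma_A$, since moving an $A$-input past $\alpha$ costs a Nakayama twist by \eqref{eq:gZAdefinition} — shows $\alpha\cdot\beta \in \gZA = \Hrr$; this is the displayed chain in \eqref{eq:productsinHIA}. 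Compatibility of this action with the algebra structure on $\Hcc$ (i.e.\ $(\alpha\cdot\beta_1)\cdot\beta_2 = \alpha\cdot(\beta_1\cdot\beta_2)$ and $\alpha\cdot 1 = \alpha$) is just a restriction of associativity and unitality already noted.

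The one point requiring a separate argument is commutativity of the product on $\Hcc$, and I expect this to be the main obstacle in the sense that it is the only step not reducible to pure associativity bookkeeping: one must genuinely use that $\beta$ (or $\alpha$) lies in the relative centre. Concretely, for $\alpha,\beta\in\ZA$ one writes $\alpha\cdot\beta$ as the multiplication vertex applied to $\alpha\otimes\beta$, inserts the unit via separability/unitality to create room, slides $\beta$ around $\alpha$ using the defining identity of $\ZA$ applied to $\beta$, and reassociates with \eqref{eq:associativeAlgebra}, landing on $\beta\cdot\alpha$; diagrammatically this is the standard fact that the relative centre of a Frobenius algebra is commutative (cf.\ the reference \cite{tft1} to its appearance in rational CFT). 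Alternatively, and perhaps most cleanly, I would transport the product along the isomorphism $\Hcc \cong \End_{AA}(A) = \operatorname{HH}^\bullet(A)$ of \eqref{eq:Hccequivalents}: under \eqref{eq:EndAAAtoZA}–\eqref{eq:ZAtoEndAAA} the multiplication \eqref{eq:alphatimesbeta} corresponds to composition of bimodule endomorphisms of $A$, and composition in $\End_{AA}(A)$ is commutative because $A$ is the monoidal unit in the category of $A$-$A$-bimodules (so $\operatorname{HH}^\bullet$ is graded-commutative, and in the present ungraded packaging simply commutative), which is a well-known fact I may invoke. Either route closes the proof; I would present the diagrammatic one for consistency with \eqref{eq:productsinHIA} and remark on the $\operatorname{HH}^\bullet$ interpretation.
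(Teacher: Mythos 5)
Your proposal is correct and follows essentially the same route as the paper, which establishes closure via the computation \eqref{eq:productsinHIA} with $\gamma=1$ resp.\ $\gamma=\gamma_A$ and treats associativity, unitality and commutativity as immediate. The only remark worth making is that commutativity needs neither separability nor the unit-insertion you describe: precomposing the free $A$-leg in the defining identity \eqref{eq:ZAdefinition} for $\beta$ with $\alpha\colon I\to A$ already yields $\alpha\cdot\beta=\beta\cdot\alpha$ (equivalently, your Eckmann--Hilton argument via $\End_{AA}(A)$ works).
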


This is precisely the algebraic structure present on the spaces of (ordinary) (c,c) fields and RR ground states in any superconformal field theory. Note also that the induced actions
\begin{align*}
\operatorname{HH}^\bullet (A)\times \operatorname{HH}^\bullet (A)
\lra
\operatorname{HH}^\bullet (A)
\, , \quad
\operatorname{HH}_\bullet (A)\times \operatorname{HH}^\bullet (A)
\lra
\operatorname{HH}_\bullet (A)
\end{align*}
generalise the usual cup and cap products on Hochschild cohomology and homology of ordinary algebras.

\begin{remark}\label{rem:extendedTFT}
For any integer~$n$ let us define the space
$$
\mathcal H_n^A = \im 
\Bigg(
\begin{tikzpicture}[very thick,scale=0.65,color=green!50!black, baseline]
\fill (0,-0.5) circle (2.5pt) node[left] (D) {{\small $\alpha$}};
\draw (0,-0.5) -- (0,0.6); 
\end{tikzpicture} 
\lmt 
\begin{tikzpicture}[very thick,scale=0.65,color=green!50!black, baseline=0cm]
\draw (0,0) -- (0,1.3);
\fill (0,0) circle (2.5pt) node[below] {{\small $\alpha$}};
\fill (-0.67,0) circle (2.5pt) node[left] {{\small $\gamma_A^n$}};
\draw (0,0.8) .. controls +(-0.9,-0.3) and +(-0.9,0) .. (0,-0.8);
\draw (0,-0.8) .. controls +(0.9,0) and +(0.7,-0.1) .. (0,0.4);
\fill (0,-0.8) circle (2.5pt) node {};
\fill (0,0.4) circle (2.5pt) node {};
\fill (0,0.8) circle (2.5pt) node {};
\draw (0,-1.2) node[Odot] (unit) {};
\draw (0,-0.8) -- (unit);
\end{tikzpicture}
\Bigg)
=
\Big\{ 
\alpha \in \HIA \; \Big| \,
\begin{tikzpicture}[very thick,scale=0.45,color=green!50!black, baseline=0.1cm]
\draw[-dot-] (3,0) .. controls +(0,1) and +(0,1) .. (2,0);
\draw (2.5,0.75) -- (2.5,1.5); 
\fill (3,0) circle (4.0pt) node[right] (alpha) {{\small $\alpha$}};
\fill (2,0) circle (4.0pt) node[left] {{$\gamma_A^n$}};
\draw (2,0) -- (2,-0.5); 
\end{tikzpicture} 
=
\begin{tikzpicture}[very thick,scale=0.45,color=green!50!black, baseline=0.1cm]
\draw[-dot-] (3,0) .. controls +(0,1) and +(0,1) .. (2,0);
\draw (2.5,0.75) -- (2.5,1.5); 
\fill (2,0) circle (4.0pt) node[left] (alpha) {{\small $\alpha$}};
\draw (3,0) -- (3,-0.5); 
\end{tikzpicture} 
\;\Big\}
$$
which we can think of as performing an $n$-fold twist of the defect~$A$ before wrapping it around a bulk field. 
As in~\eqref{eq:productsinHIA} one finds that the product~\eqref{eq:alphatimesbeta} in $\HIA$ restricts to a map
\be\label{eq:HmHn}
\mathcal H^A_m \times \mathcal H^A_n \lra \mathcal H^A_{m+n} \, , 
\ee
and we recover Lemma~\ref{lem:H0H1algebramodule} in the special cases $\mathcal H^A_0 = \Hcc$ and $\mathcal H^A_1 = \Hrr$. 

A product of the form~\eqref{eq:HmHn} is familiar from \textsl{extended, framed} TFT~\cite{l0905.0465}. In the two-dimensional case there are as many inequivalent 2-framed circles~$S_n^1$ as there are integers $n\in \Z \cong \pi_1(\operatorname{SO}(2))$. The TFT maps these circles to spaces~$\mathcal H_n$, and the pair of pants bordism gives rise to products $\mathcal H_m \times \mathcal H_n \rightarrow \mathcal H_{m+n}$. Furthermore, the special spaces~$\mathcal H_0$ and~$\mathcal H_1$ are always given by (suitably defined) Hochschild cohomology and homology, respectively, and in general the framing of~$S_n^1$ is encoded in the space~$\mathcal H_n$ via the $n$-th power of the Serre automorphism, see~\cite[Prop.\,4.2.3 \& Rem.\,4.2.5]{l0905.0465}. 

Since in our setting of generalised orbifolds $(a,A)$ the Serre functor is given by twisting with the Nakayama automorphism~$\gamma_A$, we produced exactly the structure described in the previous paragraph. It thus seems natural to expect that generalised orbifolds give rise (via the cobordism hypothesis of \cite{bdCobordismHypothesis, l0905.0465}) to fully extended framed TFTs with values in the equivariant completion~$\Beq$ of~\cite{cr1210.6363}.\footnote{For this to make sense one has to assume at least that the original bicategory~$\B$ is symmetric monoidal -- a condition that is e.\,g.~satisfied for topological sigma models and Landau-Ginzburg models.} 
The details of this will be discussed elsewhere. For now we only note that from this perspective $\mathcal H^A_0 = \Hcc$ and $\mathcal H^A_1 = \Hrr$ continue to have natural interpretations even if the underlying theory encoded in~$\B$ is not of supersymmetric origin (in which case the notion of chiral primaries and RR ground states is spurious). 
\end{remark}

\begin{remark}
Instead of allowing framings for bordisms, one can also consider other additional geometric structures, leading to other flavours of topological field theories. 
As explained in~\cite{NovakRunkel} every separable Frobenius algebra whose Nakayama automorphism squares to the identity gives rise to a two-dimensional \textsl{spin} TFT. In the supersymmetric setting this should be regarded as an orbifold by the group $G = \langle (-1)^F \rangle$ between the NS and~R sector. 
\end{remark}

\medskip

On a boundary given by a left $A$-module $Q\in \modu(A)$ the multiplication before orbifold projection is given simply by the composition of maps in $\End(Q)$. The induced product on the images of the boundary orbifold projectors then endows the boundary chiral sector $\End_A(Q)$ with the structure of a (not necessarily commutative) algebra. One also easily verifies that the boundary Ramond sector $\Hom_A(Q,{}_{\gamma_A}Q)$ forms a right module over this algebra. 

Finally, we define the bulk-boundary and boundary-bulk maps in the generalised orbifold theory by
\be
\beta_Q(\phi) = 
\begin{tikzpicture}[very thick,scale=0.75,color=blue!50!black, baseline]
\draw (0,-1) -- (0,1); 
\draw (0,-1) node[right] (X) {{\small$Q$}};
\draw (0,1) node[right] (Xu) {{\small$Q$}};
\fill[color=green!50!black] (-0.5,-0.5) circle (2.5pt) node[left] {{\small$\phi$}};
\fill[color=green!50!black]  (0,0.6) circle (2.5pt) node (meet) {};
\draw[color=green!50!black]  (-0.5,-0.5) .. controls +(0,0.5) and +(-0.5,-0.5) .. (0,0.6);
\end{tikzpicture} 
, \qquad \
\beta^Q(\Psi) = 
\begin{tikzpicture}[very thick,scale=0.6,color=blue!50!black, baseline]
\draw (0,0) circle (1.5);
\draw[->, very thick] (0.100,-1.5) -- (-0.101,-1.5) node[above] {};
\draw[->, very thick] (-0.100,1.5) -- (0.101,1.5) node[below] {}; 
\fill (202.5:1.5) circle (3.3pt) node[right] {{\small$\Psi$}};
\fill (270:1.5) circle (0pt) node[above] {{\small$Q$}};
\fill[color=green!50!black] (157.5:1.5) circle (3.3pt) node[right] {};
\draw[-dot-, color=green!50!black] (-3,-0.5) .. controls +(0,-1) and +(0,-1) .. (-2,-0.5);
\draw[color=green!50!black] (-2,-0.5) .. controls +(0,0.5) and +(-0.5,-0.5) .. (157.5:1.5);
\draw[color=green!50!black] (-2.5,-2) node[Odot] (unit) {};
\draw[color=green!50!black] (-2.5,-1.3) -- (unit);
\draw[color=green!50!black] (-3,-0.5) -- (-3,2);
\end{tikzpicture} 
\, . 
\ee
These operators are a priori defined on the unprojected bulk and boundary spaces; by appropriate abuse of notation we will use the same symbols to denote their restrictions to the orbifold projections. 

\begin{proposition}
\begin{enumerate}
\item 
The bulk-boundary map gives an algebra morphism 
$$
\beta_Q: 
\Hcc \lra \End_A(Q)
$$
with image in the centre of $\End_A(Q)$, while the restriction to $\Hrr$ gives
$$
\beta_Q: 
\Hrr \lra \Hom_A(Q,{}_{\gamma_A}Q)
$$
satisfying
$\beta_Q(\alpha \cdot \alpha') = \beta_Q(\alpha) \beta_Q(\alpha')
$
for $\alpha \in \Hrr$ and $\alpha' \in \Hcc$. 
\item The boundary-bulk map $\beta^Q$ restricted to $\End_A(Q)$ gives a map
$$
\beta^Q: 
\End_A(Q) \lra \Hrr \, . 
$$
\end{enumerate}
\end{proposition}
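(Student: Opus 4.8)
The plan is to verify everything at the level of string diagrams, using only the axioms of a separable Frobenius algebra \eqref{eq:associativeAlgebra}--\eqref{eq:symmetric}, the (co)module axioms for $Q$, the fact that $\gamma_A$ is an algebra automorphism, the characterisations $\Hcc=\ZA$ from \eqref{eq:Hccequivalents} and $\Hrr=\gZA$ from Proposition~\ref{prop:HRRHochschildHomology}, and the boundary projectors \eqref{eq:boundaryccproj}, \eqref{eq:RRboundaryprojector}. First I would check well-definedness. Since $\beta_Q$ is a priori only a map out of the unprojected space $\HIA$ into $\End(Q)$, one must show it sends $\Hcc$ to $\End_A(Q)$: applying the projector \eqref{eq:boundaryccproj} to $\beta_Q(\phi)$, slide the comodule leg down along the $Q$-line; coassociativity and counitality turn it into a loop attached to the leg carrying $\phi$, and the relative-centre property \eqref{eq:ZAdefinition} of $\phi\in\ZA$ together with separability \eqref{eq:separability} collapses it, so $\beta_Q(\phi)\in\operatorname{im}(\pi_A)=\Hom_A(Q,Q)$. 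The identical computation with the projector \eqref{eq:RRboundaryprojector} and the twisted-centre condition \eqref{eq:gZAdefinition} for $\alpha\in\gZA$ -- the $\gamma_A$ in \eqref{eq:gZAdefinition} being exactly the one in \eqref{eq:RRboundaryprojector} -- gives $\beta_Q(\alpha)\in\Hom_A(Q,{}_{\gamma_A}Q)$.

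The remaining statements of part (i) are then routine. Multiplicativity $\beta_Q(\phi\cdot\phi')=\beta_Q(\phi)\beta_Q(\phi')$ on $\Hcc$ follows by fusing the two $A$-legs attached to $Q$ on the right-hand side into a single leg using the right $A$-module axiom for $Q$ and associativity of $A$; unitality $\beta_Q(\text{unit})=1_Q$ is the module-unit axiom. For centrality: given $\Psi\in\End_A(Q)$, the leg carrying $\phi$ slides past $\Psi$ because $\Psi$ is a module map \eqref{eq:ModuleMap}, and then the relative-centre property of $\phi$ moves it from one side of $\Psi$ to the other, giving $\beta_Q(\phi)\Psi=\Psi\beta_Q(\phi)$. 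The mixed formula $\beta_Q(\alpha\cdot\alpha')=\beta_Q(\alpha)\beta_Q(\alpha')$ for $\alpha\in\Hrr$, $\alpha'\in\Hcc$ is again the fusion-of-legs argument: $\alpha\cdot\alpha'\in\Hrr$ by Lemma~\ref{lem:H0H1algebramodule}, and the composite on the right is exactly the $\End_A(Q)$-module structure on $\Hom_A(Q,{}_{\gamma_A}Q)$, which is consistent because the $\gamma_A$-twist sits on the outer ($\alpha$) leg.

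For part (ii) one must show $\beta^Q(\Psi)\in\Hrr$ for $\Psi\in\End_A(Q)$, i.e. that $\beta^Q(\Psi)$ satisfies the \emph{twisted} centre condition \eqref{eq:gZAdefinition}: left multiplication by $A$ equals right multiplication up to an insertion of $\gamma_A$. Unwinding the definition, the $A$-line emitted by $\beta^Q$ wraps the $Q$-loop; since $\Psi$ is a module map this line slides freely along the loop, and the only nontrivial effect of transporting it once around is the one produced by the cups and caps of $Q$. Identifying that effect as precisely one factor of $\gamma_A$ is exactly the statement that the Serre kernel of the theory $(a,A)$ is $\Sigma_A={}_{\gamma_A}A$, i.e. \eqref{eq:SerreFunctor}; so I would reduce the claim to that (established in Section~\ref{subsec:ocTFT} following~\cite{cr1210.6363}) rather than re-prove it by chasing pivotality \eqref{eq:pivotal} and the Zorro moves by hand. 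This is the genuinely substantive point: the routine part is (i), whereas the interesting content of (ii) is that $\beta^Q$ lands in the \emph{Ramond} sector $\Hrr$ and not in $\Hcc$, the discrepancy being measured by the Nakayama automorphism exactly because capping off a boundary invokes the Serre functor. The bulk of the writeup should be devoted to making that reduction precise.
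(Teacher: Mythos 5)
Your part~(i) is fine and is essentially what the paper means by ``an easy direct check'': applying the boundary projector \eqref{eq:boundaryccproj} (resp.\ \eqref{eq:RRboundaryprojector}) to $\beta_Q(\phi)$ and collapsing the resulting loop via the (co)module axioms, the relative-centre condition \eqref{eq:ZAdefinition} (resp.\ the twisted one \eqref{eq:gZAdefinition}) and separability \eqref{eq:separability} is exactly the right computation, and the multiplicativity and centrality arguments are standard.

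Part~(ii) contains a genuine gap. You correctly reduce the claim to showing that $\beta^Q(\Psi)$ lies in $\gZA=\Hrr$, and you correctly locate the mechanism (transporting the $A$-line once around the $Q$-loop produces a Nakayama twist), but you then defer the entire substantive step to ``the statement that the Serre kernel is $\Sigma_A={}_{\gamma_A}A$'' without carrying out the reduction -- and the result you point at does not deliver it. Proposition~\ref{prop:SerreDefect} as formulated is a statement about nondegeneracy of the pairing \eqref{eq:nondegeneratePairing} on bimodule Hom-spaces; extracting from it the identity that one circuit of the coaction line around the $Q$-circle inserts exactly one factor of $\gamma_A$ would require first establishing the twisted adjunction structure of $Q^\dagger$ in the orbifold theory, which is not less work than the direct computation and is not what \eqref{eq:SerreFunctor} asserts. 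The paper instead proves invariance of $\beta^Q(\Psi)$ under the projector $\pirr$ of \eqref{eq:piRR} by an explicit diagram chase (see \eqref{eq:loopinvariant}): rewrite $\beta^Q$ via the comodule coaction \eqref{eq:comoduleAction}, use the Zorro moves and the comodule analogue of \eqref{eq:pivvotal} to drag the loop around the cups and caps of the $Q$-circle, slide it past $\Psi$ using \eqref{eq:ModuleMap}, and collapse the remaining bubble with \eqref{eq:separability}. You need to either perform this computation or precisely state and prove the adjunction fact you want to invoke; as written, the key step of (ii) is asserted rather than proven.
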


\begin{proof}
Part~(i) is an easy direct check, and part~(ii) follows because the image of $\beta^Q$ is invariant under the projector~\eqref{eq:piRR} to $\Hrr$: 
\begin{align}
\begin{tikzpicture}[very thick,scale=0.6,color=blue!50!black, baseline]
\draw (0,0) circle (1.5);
\draw[->, very thick] (0.100,-1.5) -- (-0.101,-1.5) node[above] {};
\draw[->, very thick] (-0.100,1.5) -- (0.101,1.5) node[below] {}; 
\fill (202.5:1.5) circle (3.3pt) node[right] {{\small$\Psi$}};
\fill (270:1.5) circle (0pt) node[above] {{\small$Q$}};
\fill[color=green!50!black] (157.5:1.5) circle (3.3pt) node[right] {};
\draw[color=green!50!black] (-2.15,2.4) .. controls +(0,-0.5) and +(-0.5,0.5) .. (157.5:1.5);
\fill[color=green!50!black] (152:1.85) circle (3.3pt) node {};
\fill[color=green!50!black] (141.5:2.57) circle (3.3pt) node {};
\draw[domain=35.04:41.51,variable=\t,smooth,samples=75,color=green!50!black] 
plot ({-\t r}:{0.0015*\t * \t});
\draw[->, very thick,color=green!50!black] (-0.2,-2.3) -- (-0.201,-2.3) node[above] {};
\draw[->, very thick,color=green!50!black] (0.2,1.95) -- (0.3,1.95) node[below] {};
\end{tikzpicture} 
&\eq^{(1)}
\begin{tikzpicture}[very thick,scale=0.6,color=blue!50!black, baseline]
\draw (0,0) circle (1.5);
\draw[->, very thick] (0.100,-1.5) -- (-0.101,-1.5) node[above] {};
\draw[->, very thick] (-0.100,1.5) -- (0.101,1.5) node[below] {}; 
\fill (202.5:1.5) circle (3.3pt) node[right] {{\small$\Psi$}};
\fill (270:1.5) circle (0pt) node[above] {{\small$Q$}};
\fill[color=green!50!black] (157.5:1.5) circle (3.3pt) node[right] {};
\draw[color=green!50!black] (-2.15,2.4) .. controls +(0,-0.5) and +(-0.5,0.5) .. (157.5:1.5);
\fill[color=green!50!black] (126:1.5) circle (3.3pt) node {};
\fill[color=green!50!black] (141.5:2.57) circle (3.3pt) node {};
\draw[domain=36.2:41.51,variable=\t,smooth,samples=75,color=green!50!black] 
plot ({-\t r}:{0.0015*\t * \t});
\draw[color=green!50!black] (126:1.5) .. controls +(0.2,0.3) and +(-0.75,0) .. (0.2,1.95);
\draw[->, very thick,color=green!50!black] (-0.2,-2.3) -- (-0.201,-2.3) node[above] {};
\draw[->, very thick,color=green!50!black] (0.2,1.95) -- (0.3,1.95) node[below] {};
\end{tikzpicture} 
\eq^{\text{Zorro}}
\begin{tikzpicture}[very thick,scale=0.6,color=blue!50!black, baseline]
\draw[directed] (1.5,0) .. controls +(0,-2) and +(0,-2) .. (-1.5,0);
\draw (-1.5,0) .. controls +(0,2.15) and +(0,1.15) .. (-0.5,0.5);
\draw[directed] (-0.5,0.5) .. controls +(0,-1) and +(0,-1) .. (0.5,0.5);
\draw (0.5,0.5) .. controls +(0,1.15) and +(0,2.15) .. (1.5,0);
\draw[->, very thick] (-1.01,1.5) -- (-0.95,1.5) node[below] {};
\draw[->, very thick] (1,1.5) -- (1.05,1.5) node[above] {};
\fill (202.5:1.5) circle (3.3pt) node[right] {{\small$\Psi$}};
\fill (270:1.5) circle (0pt) node[above] {{\small$Q$}};
\fill[color=green!50!black] (157.5:1.58) circle (3.3pt) node[right] {};
\draw[color=green!50!black] (-2.15,2.4) .. controls +(0,-0.5) and +(-0.5,0.5) .. (157.5:1.5);
\fill[color=green!50!black] (0.65,1.18) circle (3.3pt) node {};
\fill[color=green!50!black] (141.5:2.57) circle (3.3pt) node {};
\draw[domain=36.2:41.51,variable=\t,smooth,samples=75,color=green!50!black] 
plot ({-\t r}:{0.0015*\t * \t});
\draw[color=green!50!black] (0.65,1.18) .. controls +(-0.5,0.3) and +(-0.8,0) .. (0.2,1.95);
\draw[->, very thick,color=green!50!black] (-0.2,-2.3) -- (-0.201,-2.3) node[above] {};
\draw[->, very thick,color=green!50!black] (0.2,1.95) -- (0.3,1.95) node[below] {};
\end{tikzpicture} 
\eq^{(2)}
\begin{tikzpicture}[very thick,scale=0.6,color=blue!50!black, baseline]
\draw[directed] (-1.5,0) .. controls +(0,2) and +(0,2) .. (1.5,0);
\draw (-1.5,0) .. controls +(0,-2.15) and +(0,-1.15) .. (-0.5,-0.5);
\draw[directed] (0.5,-0.5) .. controls +(0,1) and +(0,1) .. (-0.5,-0.5);
\draw (0.5,-0.5) .. controls +(0,-1.15) and +(0,-2.15) .. (1.5,0);
\draw[<-, very thick] (-1.06,-1.5) -- (-1,-1.5) node[below] {};
\draw[<-, very thick] (0.94,-1.5) -- (0.96,-1.5) node[above] {};
\fill (-1.48,-0.574) circle (3.3pt) node[right] {{\small$\Psi$}};
\fill (90:0.25) circle (0pt) node[above] {{\small$Q$}};
\fill[color=green!50!black] (157.5:1.5) circle (3.3pt) node[right] {};
\draw[color=green!50!black] (-2.15,2.4) .. controls +(0,-0.5) and +(-0.5,0.5) .. (157.5:1.5);
\fill[color=green!50!black] (141.5:2.57) circle (3.3pt) node {};
\fill[color=green!50!black] (0.65,-1.18) circle (3.3pt) node {};
\draw[directedgreen] (0.65,-1.18) .. controls +(-0.2,0.45) and +(0,0.45) .. (-0.15,-1.18);
\draw[directedgreen] (-0.15,-1.18) .. controls +(0,-1.35) and +(0,-1.35) .. (1.85,-1.18);
\draw[directedgreen] (1.85,-1.18) .. controls +(0,0.45) and +(0,0.45) .. (2.65,-1.18);
\draw[<-, very thick,color=green!50!black] (0.8,-3.25) -- (0.901,-3.25) node[below] {};
\draw[color=green!50!black] (2.65,-1.18) .. controls +(0,-2.7) and +(1.7,-2.8) .. (-2.125,-1.18);
\draw[domain=40.3:41.51,variable=\t,smooth,samples=75,color=green!50!black] 
plot ({-\t r}:{0.0015*\t * \t});
\end{tikzpicture} 
\!\!\nonumber
\\&\eq^{\text{Zorro}}
\;\;
\!\!\!\!\!
\begin{tikzpicture}[very thick,scale=0.6,color=blue!50!black, baseline]
\draw (0,0) circle (1.5);
\draw[->, very thick] (0.100,-1.5) -- (-0.101,-1.5) node[above] {};
\draw[->, very thick] (-0.100,1.5) -- (0.101,1.5) node[below] {}; 
\fill (202.5:1.5) circle (3.3pt) node[right] {{\small$\Psi$}};
\fill (270:1.5) circle (0pt) node[above] {{\small$Q$}};
\fill[color=green!50!black] (157.5:1.5) circle (3.3pt) node[right] {};
\draw[color=green!50!black] (-2.15,2.4) .. controls +(0,-0.5) and +(-0.5,0.5) .. (157.5:1.5);
\fill[color=green!50!black] (234:1.5) circle (3.3pt) node {};
\fill[color=green!50!black] (141.5:2.57) circle (3.3pt) node {};
\draw[color=green!50!black] (234:1.5) .. controls +(-2,0) and +(-0.6,-0.6) .. (141.5:2.57);
\end{tikzpicture} 
\eq^{\eqref{eq:ModuleMap}}_{(3)}
\!\!\!\!
\begin{tikzpicture}[very thick,scale=0.6,color=blue!50!black, baseline]
\draw (0,0) circle (1.5);
\draw[->, very thick] (0.100,-1.5) -- (-0.101,-1.5) node[above] {};
\draw[->, very thick] (-0.100,1.5) -- (0.101,1.5) node[below] {}; 
\fill (202.5:1.5) circle (3.3pt) node[right] {{\small$\Psi$}};
\fill (270:1.5) circle (0pt) node[above] {{\small$Q$}};
\fill[color=green!50!black] (157.5:1.5) circle (3.3pt) node[right] {};
\draw[color=green!50!black] (-2.15,2.4) .. controls +(0,-0.5) and +(-0.5,0.5) .. (157.5:1.5);
\fill[color=green!50!black] (152:1.85) circle (3.3pt) node {};
\fill[color=green!50!black] (141.5:2.57) circle (3.3pt) node {};
\draw[color=green!50!black] (152:1.85) .. controls +(-0.3,0) and +(-0.5,-0.5) .. (141.5:2.57);
\end{tikzpicture} 
\eq^{\eqref{eq:separability}}
\begin{tikzpicture}[very thick,scale=0.6,color=blue!50!black, baseline]
\draw (0,0) circle (1.5);
\draw[->, very thick] (0.100,-1.5) -- (-0.101,-1.5) node[above] {};
\draw[->, very thick] (-0.100,1.5) -- (0.101,1.5) node[below] {}; 
\fill (202.5:1.5) circle (3.3pt) node[right] {{\small$\Psi$}};
\fill (270:1.5) circle (0pt) node[above] {{\small$Q$}};
\fill[color=green!50!black] (157.5:1.5) circle (3.3pt) node[right] {};
\draw[color=green!50!black] (-2.15,2.4) .. controls +(0,-0.5) and +(-0.5,0.5) .. (157.5:1.5);
\end{tikzpicture} 
\; . \label{eq:loopinvariant}
\end{align}
Here we first used~\eqref{eq:comoduleAction} to express~$\beta^Q$ in terms of the comodule action, whose defining property is then used in steps (1) and (3), and in step (2) we used the comodule version of~\eqref{eq:pivvotal}.
\end{proof}

\subsubsection*{Nondegenerate pairings and Serre functors}

To study nondegenerate pairings in the bulk and boundary sectors, we need to make further assumptions on the bicategory $\B$. Namely, in addition to pivotality and $\C$-linearity of $\B$, we will assume that for every object $a\in \B$ there is a linear map
$$
\langle- \rangle_{a}: \End(I_a) \lra \C \, .
$$
This can be interpreted as the bulk trace in the unorbifolded theory (under the identification of $\End(I_a)$ with the space of bulk fields). Using this, we define \textsl{defect pairings}
\be\label{eq:DefectPairing}
\langle -,- \rangle_{X,Y} : \Hom(Y,X) \times \Hom(X,Y) \lra \C 
\, , \quad 
\big\langle \Psi_1, \Psi_2 \big\rangle_{X,Y} = 
\left\langle \, 
\begin{tikzpicture}[very thick,scale=0.6,color=blue!50!black, baseline]
\draw (0,0) circle (1.5);
\draw[<-, very thick] (0.100,-1.5) -- (-0.101,-1.5) node[above] {}; 
\draw[<-, very thick] (-0.100,1.5) -- (0.101,1.5) node[below] {}; 
\fill (-22.5:1.5) circle (3.3pt) node[left] {{\small$\Psi_2$}};
\fill (22.5:1.5) circle (3.3pt) node[left] {{\small$\Psi_1$}};
\fill (270:1.5) circle (0pt) node[above] {{\small$X$}};
\end{tikzpicture} 
\,  \right\rangle_{\raisemath{10pt}{\!\!\!a}}
\ee
for any $X,Y\in \B(a,b)$. Note that we have
\be\label{eq:DefectParingSymmetry}
\big\langle \Psi_1\Psi_2, \Psi_3 \big\rangle_{X,Y} 
= \big\langle \Psi_1,\Psi_2 \Psi_3 \big\rangle_{X,Y} 
= \big\langle \Psi_3 \Psi_1, \Psi_2 \big\rangle_{X,Y} \, , 
\ee
where the second identity follows from \eqref{eq:rightPhi} and pivotality. We will assume that these pairings are nondegenerate (which is guaranteed if our bicategory~$\B$, i.\,e.~the unorbifolded theory, indeed describes a TFT), and furthermore we will require the following property for every $\Psi\in\End(X)$:
\be\label{eq:oriflip}
\left\langle \, 
\begin{tikzpicture}[very thick,scale=0.6,color=blue!50!black, baseline]
\draw (0,0) circle (1.5);
\draw[<-, very thick] (0.100,-1.5) -- (-0.101,-1.5) node[above] {}; 
\draw[<-, very thick] (-0.100,1.5) -- (0.101,1.5) node[below] {}; 
\fill (0:1.5) circle (3.3pt) node[left] {{\small$\Psi$}};
\fill (270:1.5) circle (0pt) node[above] {{\small$X$}};
\end{tikzpicture} 
\, \right\rangle_{\raisemath{10pt}{\!\!\!a}}
=
\left\langle \, 
\begin{tikzpicture}[very thick,scale=0.6,color=blue!50!black, baseline]
\draw (0,0) circle (1.5);
\draw[->, very thick] (0.100,-1.5) -- (-0.101,-1.5) node[above] {}; 
\draw[->, very thick] (-0.100,1.5) -- (0.101,1.5) node[below] {}; 
\fill (180:1.5) circle (3.3pt) node[right] {{\small$\Psi$}};
\fill (270:1.5) circle (0pt) node[above] {{\small$X$}};
\end{tikzpicture} 
\, \right\rangle_{\raisemath{10pt}{\!\!\!b}} \, . 
\ee

We now want to understand to what extent the nondegenerate pairings \eqref{eq:DefectPairing} give rise to nondegenerate pairings on the images of the orbifold projectors. For this consider two separable Frobenius algebras $A\in \B(a,a)$ and $B \in \B(b,b)$, and $B$-$A$-bimodules $X,Y\in \B(a,b)$. Our analysis relies on the results of~\cite[Sect.\,4.4]{cr1210.6363}, where it was shown that \eqref{eq:DefectPairing} satisfying the above conditions descends to a nondegenerate pairing
\be\label{eq:nondegeneratePairing}
\langle -,- \rangle_{X,Y} : 
\Hom_{BA}(Y,X)\times \Hom_{BA}(X,{}_{\gamma_B}Y_{\gamma_A^{-1}}) \lra \C
\ee
involving twists by the Nakayama automorphisms $\gamma_A, \gamma_B$. 
Since this map is natural in both~$X$ and~$Y$ by construction, nondegeneracy means: 

\begin{proposition}\label{prop:SerreDefect}
$
S_{BA} = {}_{\gamma_B}(-)_{\gamma_A^{-1}}
$ 
is the Serre functor on the category of $B$-$A$-bimodules, with inverse given by $S_{BA}^{-1}={}_{\gamma_B^{-1}}(-)_{\gamma_A}$.\footnote{This is a generalisation of the following standard result. Let~$k$ be a field, $\mathcal A$ a finite-dimensional $k$-algebra of finite global dimension, and $\modu(\mathcal A)$ the category of finitely generated $\mathcal A$-modules. Then the \textsl{Nakayama functor} $\mathcal N_{\mathcal A} = (-)\otimes_{\mathcal A} \mathcal A^*$ is a bijection between projectives and injectives, and if~$\mathcal A$ is Frobenius we have $\mathcal N_{\mathcal A} \cong {}_{\gamma_{\mathcal A}}(-)$, see e.\,g.~\cite[Prop.\,IV.3.13]{syFrobeniusAlgebrasI}. Furthermore, $\mathcal N_{\mathcal A}$ induces the Serre functor on $\mathds{D}^{\text{b}}(\modu(\mathcal A))$~\cite[Sect.\,I.3]{rvdb9911242}.} 
\end{proposition}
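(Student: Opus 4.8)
The plan is to read off the statement directly from the nondegenerate pairing~\eqref{eq:nondegeneratePairing} of~\cite[Sect.\,4.4]{cr1210.6363} that is quoted immediately above, together with the categorical definition of a Serre functor. Recall that a Serre functor $S$ on a $\C$-linear category $\mathcal C$ with finite-dimensional Hom spaces is an autoequivalence equipped with natural isomorphisms $\Hom_{\mathcal C}(X,Y) \cong \Hom_{\mathcal C}(Y, S X)^*$ that are natural in both arguments. So the first step is to identify $\mathcal C$ with the category of $B$-$A$-bimodules in $\B(a,b)$ and to observe that~\eqref{eq:nondegeneratePairing} provides exactly such a family of perfect pairings with $S X = {}_{\gamma_B}X_{\gamma_A^{-1}}$, i.\,e.\ $S_{BA} = {}_{\gamma_B}(-)_{\gamma_A^{-1}}$: nondegeneracy of~\eqref{eq:nondegeneratePairing} says precisely that $\Hom_{BA}(Y,X)$ is dual to $\Hom_{BA}(X, S_{BA}Y)$, and the ``natural in both $X$ and $Y$'' clause (which is built into the construction) is what promotes this to the Serre property.

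Second, I would check that $S_{BA}$ is an autoequivalence, which is where the separable Frobenius hypothesis really enters. The twist functor ${}_\beta(-)_\alpha$ of~\eqref{eq:TwistFunctor} makes sense for any algebra automorphisms $\alpha$ of $A$ and $\beta$ of $B$, and one has the evident composition law ${}_{\beta_1}\big({}_{\beta_2}(-)_{\alpha_2}\big)_{\alpha_1} = {}_{\beta_1\beta_2}(-)_{\alpha_2\alpha_1}$ directly from the definition, so ${}_{\gamma_B}(-)_{\gamma_A^{-1}}$ and ${}_{\gamma_B^{-1}}(-)_{\gamma_A}$ are mutually inverse on the nose (not just up to iso), giving the claimed formula for $S_{BA}^{-1}$. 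The only nontrivial input is that $\gamma_A$ and $\gamma_B$ are genuinely \emph{algebra automorphisms} of $A$ and $B$ — this is exactly the content of the Frobenius structure, and it is stated in the excerpt (the Nakayama automorphism~\eqref{eq:Nakayama} is an algebra morphism, used e.\,g.\ in~\eqref{eq:pirrGamma}); invertibility of $\gamma_A$ is likewise recorded in~\eqref{eq:Nakayama}. One should also note that $S_{BA}$ is $\C$-linear because it acts as the identity on morphism spaces.

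Third, a small bookkeeping point worth spelling out: one should confirm that the target of the pairing in~\eqref{eq:nondegeneratePairing} is indeed $\Hom_{BA}(X, S_{BA}Y)$ and not $\Hom_{BA}(S_{BA}^{-1}X, Y)$ or some variant — i.\,e.\ that the twists sit on $Y$ rather than $X$, and with the exponents $+1$ for $\gamma_B$ and $-1$ for $\gamma_A$ as written. This is just a matter of matching conventions with~\cite{cr1210.6363}, but since the sign of the exponent is exactly what distinguishes $S$ from $S^{-1}$ it deserves an explicit sentence. Given that, the isomorphism $\Hom_{BA}(Y,X)^* \cong \Hom_{BA}(X, {}_{\gamma_B}Y_{\gamma_A^{-1}})$ and its twisted partner for $X \leftrightarrow Y$ together furnish the two halves of the Serre datum.

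\textbf{Main obstacle.} There is no serious analytic or combinatorial obstacle here — the heavy lifting (constructing the pairing~\eqref{eq:nondegeneratePairing}, proving its nondegeneracy and naturality) was already done in~\cite[Sect.\,4.4]{cr1210.6363} and is being invoked. The one genuine point of care is the \emph{naturality} claim: a nondegenerate bilinear pairing between $\Hom(Y,X)$ and $\Hom(X,SY)$ for each pair $(X,Y)$ is not by itself a Serre structure; one needs the compatibility with composition on both sides, i.\,e.\ that the induced isomorphisms $\Hom(Y,X) \to \Hom(X,SY)^*$ are natural transformations of functors in $X$ and in $Y$. So the proof really consists of pointing to the fact that the pairing in~\cite{cr1210.6363} is manifestly built from the trace $\langle-\rangle_a$ precomposed with composition of bimodule maps and closing-up-into-a-circle, hence automatically has the required compatibility~\eqref{eq:DefectParingSymmetry}; once that is granted, the statement is immediate. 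I would therefore keep the proof to a couple of sentences: quote~\eqref{eq:nondegeneratePairing}, note its naturality, recall the composition law for twist functors to get $S_{BA}^{-1}$, and invoke the definition of Serre functor.
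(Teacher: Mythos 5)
Your proposal is correct and follows essentially the same route as the paper: the paper's entire proof is the sentence preceding the proposition, namely that the pairing~\eqref{eq:nondegeneratePairing} from~\cite[Sect.\,4.4]{cr1210.6363} is nondegenerate and natural in both arguments, which is precisely the Serre property for $S_{BA} = {}_{\gamma_B}(-)_{\gamma_A^{-1}}$. Your additional remarks on the composition law for twist functors and on matching the direction of the twists are sensible bookkeeping but do not change the argument.
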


Another way to phrase this is that the Serre kernel 
\be
\Sigma_A = \gA \; \congscript^{\gamma_A^{-1}} \; A_{\gamma_A^{-1}},
\ee
induces the action of $S_{BA}$ in the sense that $S_{BA}(X)\cong \Sigma_B \otimes_B X \otimes_A \Sigma_A$. 

\medskip

Let us now apply Proposition~\ref{prop:SerreDefect} to bulk and boundary pairings in the orbifold theory. 
On the boundary, given by left $A$-modules $Q, P \in \B(0,a)$, the natural choice for the pairing is \eqref{eq:DefectPairing} with $X=Q$ and $Y=P$: 
$$
\langle -,- \rangle_{Q,P} : \Hom(P,Q) \times \Hom(Q,P) \lra \C 
\, , \quad 
\big\langle \Psi_1, \Psi_2 \big\rangle_{Q,P} = 
\begin{tikzpicture}[very thick,scale=0.6,color=blue!50!black, baseline]
\draw (0,0) circle (1.5);
\draw[<-, very thick] (0.100,-1.5) -- (-0.101,-1.5) node[above] {}; 
\draw[<-, very thick] (-0.100,1.5) -- (0.101,1.5) node[below] {}; 
\fill (-22.5:1.5) circle (3.3pt) node[left] {{\small$\Psi_2$}};
\fill (22.5:1.5) circle (3.3pt) node[left] {{\small$\Psi_1$}};
\fill (270:1.5) circle (0pt) node[above] {{\small$Q$}};
\end{tikzpicture} 
$$
where here and in the following we can safely omit the bracket for the trivial theory~$0$; furthermore, we will write $\langle -,- \rangle_{Q} = \langle -,- \rangle_{Q,Q}$. 

It immediately follows from Proposition~\ref{prop:SerreDefect} that on the boundary the chiral sector is perfectly paired with the Ramond sector: 

\begin{corollary}
The pairing $\langle -,- \rangle_{Q,P}$ is nondegenerate when restricted to 
$
\Hom_A(P,Q) \times \Hom_A(Q,{}_{\gamma_A}P) 
$. 
\end{corollary}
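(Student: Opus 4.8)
The plan is to deduce this corollary directly from Proposition~\ref{prop:SerreDefect} by specialising to the case of a single algebra and the trivial theory. Recall that boundary conditions for the generalised orbifold $(a,A)$ are left $A$-modules in $\B(0,a)$, i.\,e.~$B$-$A$-bimodules where $B = I_0 = \C$ is the (trivial) separable Frobenius algebra on the trivial object~$0$. For the trivial algebra the Nakayama automorphism is the identity, $\gamma_B = 1_{I_0}$, so the twisted functor ${}_{\gamma_B}(-)_{\gamma_A^{-1}}$ reduces to ${}_{\gamma_A^{-1}}(-)$ — but since there is no left action to twist in the module case, this is just $(-)_{\gamma_A^{-1}}$ acting on the right $A$-action. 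Equivalently, writing everything as left modules (transporting via $X \mapsto X^\dagger$ or simply relabelling the side on which $A$ acts), the relevant twist is ${}_{\gamma_A}(-)$, which is exactly the functor appearing in the statement $\Hom_A(Q,{}_{\gamma_A}P)$. So the first step is simply to record this identification: $\Hom_{BA}(X,Y)$ with $B=\C$ becomes $\Hom_A$ of left $A$-modules, and $S_{BA}$ becomes ${}_{\gamma_A}(-)$.

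Second, I would invoke the nondegenerate pairing~\eqref{eq:nondegeneratePairing} with $B=\C$, $X=Q$, $Y=P$: it gives a nondegenerate pairing
$$
\langle -,- \rangle_{Q,P} : \Hom_A(P,Q) \times \Hom_A(Q,{}_{\gamma_A}P) \lra \C
$$
as claimed. The content is that the defect pairing~\eqref{eq:DefectPairing} (restricted here to maps between boundary conditions, with the bracket for the trivial theory~$0$ suppressed) descends to the images of the boundary orbifold projectors~\eqref{eq:boundaryccproj} and~\eqref{eq:RRboundaryprojector}, and that this descended pairing is nondegenerate. Both of these are precisely the assertions of \cite[Sect.\,4.4]{cr1210.6363} as packaged in Proposition~\ref{prop:SerreDefect} and the discussion preceding it; the point is that $\Hom_A(Q,{}_{\gamma_A}P) = \operatorname{im}(\eqref{eq:RRboundaryprojector})$ is the Ramond boundary space and $\Hom_A(P,Q) = \operatorname{im}(\eqref{eq:boundaryccproj})$ is the chiral boundary space, so the corollary is the $B=\C$ instance of the general statement.

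Third, for completeness I would note that one should check the hypotheses feeding into Proposition~\ref{prop:SerreDefect} are met in this case: $\B$ is pivotal and $\C$-linear by standing assumption, $A$ is separable Frobenius by the running hypothesis of Section~\ref{subsec:gentwist}, the trace $\langle - \rangle_0$ on $\End(I_0) = \C$ is just the identity (hence nondegenerate), and property~\eqref{eq:oriflip} is assumed. With these in place the general result applies verbatim. The main (and essentially only) obstacle is bookkeeping: making sure the side on which $A$ acts and the direction of the Nakayama twist are tracked correctly when one passes from the two-algebra bimodule statement to the one-algebra module statement — in particular that ${}_{\gamma_B}(-)_{\gamma_A^{-1}}$ with $\gamma_B = \id$ really does produce the functor ${}_{\gamma_A}(-)$ appearing in~\eqref{eq:RRboundaryprojector} after the standard identification of right $A$-modules over the trivial theory with left $A$-modules (a horizontal reflection of the relevant string diagrams, as remarked after~\eqref{eq:comoduleAction}). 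Once that identification is fixed, there is no further computation: the corollary is immediate from Proposition~\ref{prop:SerreDefect}.
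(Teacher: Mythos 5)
Your overall strategy is exactly the paper's: the corollary is recorded there as an immediate consequence of Proposition~\ref{prop:SerreDefect}, i.\,e.~of the nondegenerate pairing~\eqref{eq:nondegeneratePairing} specialised to the boundary. However, your identification of the two algebras is backwards. In the paper's conventions a $B$-$A$-bimodule is an $X\in\B(a,b)$ with $B\in\B(b,b)$ acting on the \emph{left} and $A\in\B(a,a)$ acting on the right; a boundary condition $Q\in\B(0,a)$ therefore carries the nontrivial algebra $A$ as its \emph{left} algebra (acting via $A\otimes Q$, the only composable option) and the trivial algebra $I_0\in\B(0,0)$ on the right. So in Proposition~\ref{prop:SerreDefect} one must take the proposition's ``$B$'' to be our~$A$ and the proposition's ``$A$'' to be~$I_0$; then ${}_{\gamma_B}(-)_{\gamma_A^{-1}}$ literally becomes ${}_{\gamma_A}(-)$ and \eqref{eq:nondegeneratePairing} reads $\Hom_A(P,Q)\times\Hom_A(Q,{}_{\gamma_A}P)\lra\C$ with no further work. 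Your route --- setting the left algebra to~$\C$ and then converting a right twist $(-)_{\gamma_A^{-1}}$ into a left twist via a dagger/reflection --- happens to land on the correct functor only because ${}_{\gamma_A}A\cong A_{\gamma_A^{-1}}$, but the intermediate premise that $Q$ carries a right $A$-action does not typecheck: $Q\otimes A$ is not composable for $Q\in\B(0,a)$ and $A\in\B(a,a)$, and the promised reflection argument is asserted rather than carried out. This is a fixable bookkeeping slip rather than a missing idea --- with the roles assigned correctly the corollary is indeed immediate, exactly as the paper states --- but as written the middle of your argument is not correct.
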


\medskip

In the bulk there is a natural choice for a pairing $\langle-,-\rangle_{(a,A)}:\HIA \times \HIA \rightarrow \C$ given by
\be\label{eq:genorbbulkpairing}
\big\langle\alpha_1,\alpha_2 \big\rangle_{(a,A)} 
= \left\langle
\begin{tikzpicture}[very thick,scale=0.75,color=green!50!black, baseline=0.4cm]
\draw[-dot-] (3,0) .. controls +(0,1) and +(0,1) .. (2,0);
\fill (2,0) circle (2.5pt) node[left] (alpha) {{\small $\alpha_1$}};
\fill (3,0) circle (2.5pt) node[right] (beta) {{\small $\alpha_2$}};
\draw (2.5,1.3) node[Odot] (unit) {};
\draw (2.5,0.75) -- (unit); 
\end{tikzpicture} 
\right\rangle_{\raisemath{4pt}{\!\!\!a}} \, .
\ee
Note that $\langle\alpha_1 \cdot \alpha_2, \alpha_3\rangle_{(a,A)} = \langle\alpha_1, \alpha_2 \cdot \alpha_3\rangle_{(a,A)}$ by associativity of $A$, and that the pairing is symmetric only up to a twist by $\gamma_A$, i.\,e.
\be\label{eq:bulkPairingTwistedSymmetry}
\big\langle\alpha_1,\alpha_2 \big\rangle_{(a,A)} = \big\langle\gamma_A \circ \alpha_2,\alpha_1 \big\rangle_{(a,A)} 
\ee
for $\alpha_1,\alpha_2 \in \Hrr$, as follows from \eqref{eq:gZAdefinition}. 

\begin{proposition}
The pairing~\eqref{eq:genorbbulkpairing} is nondegenerate when restricted to $\Hrr$. 
\end{proposition}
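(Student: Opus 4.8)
The plan is to deduce nondegeneracy of the bulk pairing restricted to $\Hrr$ directly from the machinery already assembled, specifically Proposition~\ref{prop:SerreDefect} applied to the algebra~$A$ acting on both sides, i.e. to $A$-$A$-bimodules. First I would identify the various Hom-space descriptions so that the bulk pairing~\eqref{eq:genorbbulkpairing} becomes an instance of the defect pairing~\eqref{eq:DefectPairing}. Concretely, via the isomorphism $\HIA \cong \End(A)$ (sending $\alpha$ to $

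\begin{tikzpicture}[very thick,scale=0.4,color=green!50!black, baseline=-0.15cm]
\draw[-dot-] (3,0) .. controls +(0,1) and +(0,1) .. (2,0);
\draw (2.5,0.75) -- (2.5,1.5);
\fill (3,0) circle (4pt) node[right] (alpha) {{\tiny $\alpha$}};
\draw (2,0) -- (2,-0.5);
\end{tikzpicture}

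$, with inverse precomposition by the unit), the bulk pairing~\eqref{eq:genorbbulkpairing} should be identified with the defect pairing $\langle-,-\rangle_{A,A}$ on $\End(A)\times\End(A)$ up to the normalisation built into the (co)unit of $A$; this is a direct diagrammatic check using the Frobenius property~\eqref{eq:FrobeniusProperty} and separability~\eqref{eq:separability}. Under this identification $\Hrr = \gZA$ corresponds (by Proposition~\ref{prop:HRRHochschildHomology}, via~\eqref{eq:pirrHomAAAgA}) to $\Hom_{AA}(A,\gA)$, and its pairing partner $\Hcc = \ZA$ corresponds to $\End_{AA}(A) = \Hom_{AA}(A,A)$.

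Next I would invoke Proposition~\ref{prop:SerreDefect} with $B=A$ and $X=Y=A$: the pairing~\eqref{eq:nondegeneratePairing} states that $\langle-,-\rangle_{A,A}$ descends to a \emph{nondegenerate} pairing
$$
\Hom_{AA}(A,A) \times \Hom_{AA}(A,{}_{\gamma_A}A_{\gamma_A^{-1}}) \lra \C \, .
$$
The remaining point is to see that ${}_{\gamma_A}A_{\gamma_A^{-1}}\cong \gA$ as $A$-$A$-bimodules, so that the second factor is $\Hom_{AA}(A,\gA)\cong\Hrr$. This isomorphism is essentially the statement that twisting the right action by an algebra automorphism~$\varphi$ is isomorphic, as a bimodule, to untwisting the left action by~$\varphi$ (the isomorphism is $\varphi$ itself, viewed as a map $A\to A$); concretely ${}_{\gamma_A}A_{\gamma_A^{-1}}\cong {}_{\gamma_A\gamma_A}A = {}_{\gamma_A^2}A$ is \emph{not} what I want, so instead I should be more careful: twisting \emph{both} sides by $\gamma_A^{\pm1}$ gives a bimodule isomorphic to ${}_{\gamma_A}A = \gA$ via the map $\gamma_A^{-1}:{}_{\gamma_A}A_{\gamma_A^{-1}}\to \gA$, since $\gamma_A^{-1}(b\,x\,a) = \gamma_A(b)\gamma_A^{-1}(x)\gamma_A^{-1}(a)$ matched against the $\gA$-actions works out — this needs a short unwinding of the twist-functor conventions~\eqref{eq:TwistFunctor} and the automorphism property of $\gamma_A$. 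Once that identification is in place, nondegeneracy of~\eqref{eq:genorbbulkpairing} on $\Hrr$ is immediate: pairing $\Hrr\cong\Hom_{AA}(A,\gA)$ against itself is, under the isomorphism $\Hom_{AA}(A,A)\cong\Hrr$ (note $\Hcc$ and $\Hrr$ are abstractly isomorphic as vector spaces via $\gamma_A$, but here it is cleaner to pair $\Hrr$ with $\Hrr$ directly by exploiting the twisted symmetry~\eqref{eq:bulkPairingTwistedSymmetry}), exactly the nondegenerate pairing supplied by Proposition~\ref{prop:SerreDefect}.

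I would actually streamline this: since~\eqref{eq:bulkPairingTwistedSymmetry} says $\langle\alpha_1,\alpha_2\rangle_{(a,A)} = \langle\gamma_A\circ\alpha_2,\alpha_1\rangle_{(a,A)}$ for $\alpha_1,\alpha_2\in\Hrr$, and $\gamma_A\circ(-)$ maps $\Hrr$ to itself (one checks $\gamma_A\circ\alpha\in\gZA$ when $\alpha\in\gZA$ using that $\gamma_A$ is an algebra morphism), the restricted pairing on $\Hrr$ is, up to the bijective twist $\gamma_A\circ(-)$, the pairing between $\Hrr$ and $\Hrr$ that matches the Serre-duality pairing between $\Hom_{AA}(A,\gA)$ and $\Hom_{AA}(A,\gA)\cong\Hom_{AA}({}_{\gamma_A^{-1}}A,A)$; the latter is precisely the self-pairing of $\operatorname{HH}_\bullet(A)$ coming from the Serre functor being $S_{AA}$, which Proposition~\ref{prop:SerreDefect} guarantees is perfect. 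The main obstacle I anticipate is not any deep argument but the bookkeeping: verifying that the abstract pairing~\eqref{eq:nondegeneratePairing} of~\cite{cr1210.6363}, written there in terms of $\Hom_{BA}$ of bimodules $X,Y$, really restricts from~\eqref{eq:DefectPairing} to~\eqref{eq:genorbbulkpairing} under the chain of identifications $\HIA\cong\End(A)\cong\Hom_{AA}(A,-)$, and that the Nakayama twists line up correctly on the nose (including signs/pivotal corrections flagged after~\eqref{eq:pivotal}). Everything else is formal once that dictionary is fixed. I would close by remarking that the same argument, with $A$ symmetric so $\gamma_A=1_A$, recovers the ordinary nondegenerate bulk pairing and hence the full open/closed TFT structure of~\cite{cr1210.6363}.
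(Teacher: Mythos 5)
Your overall strategy---reducing nondegeneracy of \eqref{eq:genorbbulkpairing} on $\Hrr$ to the Serre-duality pairing \eqref{eq:nondegeneratePairing} and matching the bulk pairing with the defect pairing \eqref{eq:DefectPairing} under the identification $\HIA\cong\End(A)$---is the same as the paper's. But the instantiation goes wrong at the decisive point. With $B=A$ and $X=Y=A$, \eqref{eq:nondegeneratePairing} pairs $\End_{AA}(A)=\Hcc$ with $\Hom_{AA}(A,{}_{\gamma_A}A_{\gamma_A^{-1}})$, and the two-sided twist ${}_{\gamma_A}A_{\gamma_A^{-1}}$ is isomorphic to ${}_{\gamma_A^2}A$ as a bimodule (an algebra automorphism $\chi$ gives ${}_{\varphi}A_{\psi}\cong{}_{\chi\varphi}A_{\chi\psi}$; take $\chi=\gamma_A$), so this second factor is the space $\mathcal H_2^A$ of Remark~\ref{rem:extendedTFT}, not $\Hrr=\mathcal H_1^A$. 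Your attempted repair---that $\gamma_A^{-1}$ is a bimodule isomorphism ${}_{\gamma_A}A_{\gamma_A^{-1}}\to\gA$---is false: the asserted identity $\gamma_A^{-1}(b\,x\,a)=\gamma_A(b)\,\gamma_A^{-1}(x)\,\gamma_A^{-1}(a)$ cannot hold because $\gamma_A^{-1}$ is multiplicative. The correct one-sided statement is $\Aginv\cong\gA$ via $\gamma_A$; twisting \emph{both} sides does not cancel but composes. So your choice of $X,Y$ yields a true but irrelevant perfect pairing $\Hcc\times\mathcal H_2^A\to\C$ rather than the self-pairing of $\Hrr$.

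The fix is the paper's choice $X=\gA$, $Y=A$: then \eqref{eq:nondegeneratePairing} pairs $\Hom_{AA}(A,\gA)$ with $\Hom_{AA}(\gA,{}_{\gamma_A}A_{\gamma_A^{-1}})\cong\Hom_{AA}(A,\Aginv)$, and \emph{both} factors are identified with $\Hrr$ via \eqref{eq:pirrHomAAAgA} and \eqref{eq:pirrHomAAAAginv}. Beyond that, the step you defer as ``bookkeeping''---checking that under these identifications the defect pairing of $\bar\alpha_1\in\Hom_{AA}(A,\gA)$ with $\bar\alpha_2\in\Hom_{AA}(A,\Aginv)$ literally equals $\big\langle\alpha_1,\alpha_2\big\rangle_{(a,A)}$---is the actual content of the paper's proof (a short string-diagram computation using the Frobenius property \eqref{eq:FrobeniusProperty}); without it the argument is incomplete, since nondegeneracy is only inherited if the two pairings agree, not merely if the two spaces are abstractly dual.
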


\begin{proof}
Let us assume that $0\neq \alpha_1 \in \Hrr$. We want to find $\alpha_2\in\Hrr$ such that $\langle\alpha_1 , \alpha_2\rangle_{(a,A)} \neq 0$. To achieve this, first consider the image of $\alpha_1$ under the map \eqref{eq:pirrHomAAAgA}, which we denote by $\bar \alpha_1 \in \Hom_{AA}(A,\gA)$. Using the nondegeneracy of~\eqref{eq:nondegeneratePairing} with $X=\gA$ and $Y=A$ we can then find $\bar \alpha_2 \in \Hom_{AA}(A,\Aginv)$ such that $\langle \bar \alpha_1, \bar \alpha_2 \rangle_{A,\gA} \neq 0$. Taking $\alpha_2$ to be the preimage of $\bar \alpha_2$ under \eqref{eq:pirrHomAAAAginv}, we conclude the proof by observing
$$
\left\langle \, 
\begin{tikzpicture}[very thick,scale=0.6,color=green!50!black, baseline]
\draw (0,0) circle (1.5);
\draw[<-, very thick] (0.100,-1.5) -- (-0.101,-1.5) node[above] {}; 
\draw[<-, very thick] (-0.100,1.5) -- (0.101,1.5) node[below] {}; 
\fill (-22.5:1.5) circle (3.3pt) node[left] {{\small$\bar \alpha_2$}};
\fill (22.5:1.5) circle (3.3pt) node[left] {{\small$\bar \alpha_1$}};
\end{tikzpicture} 
\,  \right\rangle_{\raisemath{10pt}{\!\!\!a}}
=
\left\langle \, 
\begin{tikzpicture}[very thick,scale=0.6,color=green!50!black, baseline]
\draw (0,0) circle (1.5);
\draw[<-, very thick] (0.100,-1.5) -- (-0.101,-1.5) node[above] {}; 
\draw[<-, very thick] (-0.100,1.5) -- (0.101,1.5) node[below] {}; 
\fill (-22.5:1.5) circle (3.3pt) node[left] {};
\fill (22.5:1.5) circle (3.3pt) node[left] {};
\draw (-22.5:1.5) .. controls +(0.3,0) and +(0,0.3) .. (1.9,-1.25);
\draw (22.5:1.5) .. controls +(-0.3,0) and +(0,0.3) .. (0.9,-0.1);
\fill (0.9,-0.1) circle (3.3pt) node[left] (beta) {{\small $\alpha_1$}};
\fill (1.9,-1.25) circle (3.3pt) node[right] (beta) {{\small $\alpha_2$}};
\end{tikzpicture} 
\!\!
\right\rangle_{\raisemath{10pt}{\!\!\!a}}
\eq^{\eqref{eq:FrobeniusProperty}}
\left\langle
\begin{tikzpicture}[very thick,scale=0.75,color=green!50!black, baseline=0.2cm]
\fill (0,0) circle (2.5pt) node[left] {{\small$\alpha_1$\!\!}};
\draw (0.19,0.6) .. controls +(0.9,-0.3) and +(0.9,0) .. (0,-0.8);
\draw[
	decoration={markings, mark=at position 0.83 with {\arrow{<}}}, postaction={decorate}
	]
	 (0,-0.8) .. controls +(-0.9,0) and +(-0.9,0.8) .. (0.05,0.35);
\draw[<-] (0.01,-0.8) -- (-0.01,-0.8);
\fill (0.05,0.35) circle (2.5pt) node {};
\fill (0.19,0.6) circle (2.5pt) node {};
\draw[-dot-] (0,0) .. controls +(0,1) and +(0,1) .. (1,0);
\fill (1,0) circle (2.5pt) node[right] (beta) {{\small $\alpha_2$}};
\draw (0.5,1.3) node[Odot] (unit) {};
\draw (0.5,0.75) -- (unit); 
\end{tikzpicture}
\right\rangle_{\raisemath{7pt}{\!\!\!a}}
=
\left\langle
\!
\begin{tikzpicture}[very thick,scale=0.75,color=green!50!black, baseline=0.4cm]
\draw[-dot-] (3,0) .. controls +(0,1) and +(0,1) .. (2,0);
\fill (2,0) circle (2.5pt) node[left] (alpha) {{\small $\alpha_1$}};
\fill (3,0) circle (2.5pt) node[right] (beta) {{\small $\alpha_2$}};
\draw (2.5,1.3) node[Odot] (unit) {};
\draw (2.5,0.75) -- (unit); 
\end{tikzpicture} 
\!
\right\rangle_{\raisemath{4pt}{\!\!\!a}} \, .
$$
\end{proof}

We remark that the existence of a nondegenerate pairing on $\Hrr$ is in agreement with the identification of this space with Hochschild homology, see Proposition~\ref{prop:HRRHochschildHomology}. Indeed, a related result of~\cite{s0710.1937} states that Hochschild homology of any smooth and proper dg-category has a canonical nondegenerate pairing. On the other hand, it is only Hochschild cohomology that comes with a natural algebra structure, as discussed after~\eqref{eq:alphatimesbeta}. In our setting the pairing~\eqref{eq:genorbbulkpairing} on $\Hom(I,A)$ also induces a pairing on Hochschild cohomology $\Hcc \subset \Hom(I,A)$, but this pairing has no reason to be nondegenerate in general.

\subsubsection*{Cardy condition and adjunction}

The two remaining properties of open/closed TFT that we need to inspect are the adjunction between $\beta_Q$ and $\beta^Q$ as well as the Cardy condition. 
We first turn to the former, which is slightly complicated by the fact that the bulk pairing $\langle-,-\rangle_{(a,A)}$ is not symmetric for $\gamma_A\neq 1_A$. Indeed, bulk-boundary and boundary-bulk maps are ``twisted adjoint'' in the following sense. 

\begin{proposition}
For every $\alpha \in \Hrr$ and $\Psi \in \End_A(Q)$ we have 
\begin{align}
\Big\langle \alpha, \; \beta^Q(\Psi) \Big\rangle_{(a,A)}&= \Big \langle \beta_Q(\alpha),\; \Psi \Big\rangle_Q \, ,  \label{eq:adjunctionBobu1} \\ 
\Big\langle \beta^Q(\Psi),\; \alpha \Big\rangle_{(a,A)}&= \Big \langle \Psi,\; \beta_Q(\gamma_A \circ \alpha) \Big\rangle_Q \label{eq:adjunctionBobu2} \, .
\end{align}
\end{proposition}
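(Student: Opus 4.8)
The plan is to prove both identities by the same diagrammatic strategy: expand $\beta_Q$ and $\beta^Q$ into string diagrams using the coaction presentation~\eqref{eq:comoduleAction}, insert the expressions into the bulk pairing~\eqref{eq:genorbbulkpairing} and the boundary pairing~\eqref{eq:DefectPairing} respectively, and then massage the resulting closed diagram using the Frobenius property~\eqref{eq:FrobeniusProperty}, separability~\eqref{eq:separability}, the module-map property~\eqref{eq:ModuleMap}, and the ``orientation flip'' axiom~\eqref{eq:oriflip} until the two sides coincide. Since $\alpha\in\Hrr$ we may use the characterisation $\alpha\in\gZA$ from~\eqref{eq:gZAdefinition} and the relation $\pirr(\alpha)=\alpha$, which is where the Nakayama automorphism $\gamma_A$ will enter and account for the asymmetry between~\eqref{eq:adjunctionBobu1} and~\eqref{eq:adjunctionBobu2}.

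For~\eqref{eq:adjunctionBobu1} I would first write out $\big\langle \alpha,\beta^Q(\Psi)\big\rangle_{(a,A)}$ as the trace $\langle-\rangle_a$ of the diagram obtained by multiplying $\alpha$ with $\beta^Q(\Psi)$ over $A$ and capping with the counit; using~\eqref{eq:comoduleAction} to rewrite the $\Psi$-loop in terms of the comodule action on $Q$, then sliding the $A$-strand around using the Zorro moves and the module-map property of $\Psi$, one closes the $Q$-loop into the disc computing $\langle\beta_Q(\alpha)\,\Psi\rangle_Q$. The key input making the two traces equal is the orientation-flip property~\eqref{eq:oriflip}, which lets one match the way the $Q$-loop is traced. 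For~\eqref{eq:adjunctionBobu2} the argument is the same except that the bulk pairing is evaluated with the arguments swapped; by the twisted symmetry~\eqref{eq:bulkPairingTwistedSymmetry} we have $\big\langle\beta^Q(\Psi),\alpha\big\rangle_{(a,A)}=\big\langle\gamma_A\circ\alpha,\beta^Q(\Psi)\big\rangle_{(a,A)}$, and then~\eqref{eq:adjunctionBobu1} applied with $\alpha$ replaced by $\gamma_A\circ\alpha$ (which still lies in $\Hrr$ since $\gamma_A$ is an algebra automorphism preserving the twisted relative centre) yields $\big\langle\beta_Q(\gamma_A\circ\alpha),\Psi\big\rangle_Q=\big\langle\Psi,\beta_Q(\gamma_A\circ\alpha)\big\rangle_Q$, using symmetry of the boundary pairing on $\End(Q)$, which is~\eqref{eq:DefectParingSymmetry}.

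Alternatively, and perhaps more cleanly, both identities can be deduced directly from Proposition~\ref{prop:SerreDefect} and the nondegenerate pairing~\eqref{eq:nondegeneratePairing}: the bulk-boundary and boundary-bulk maps are adjoint to each other with respect to the unorbifolded defect pairings~\eqref{eq:DefectPairing} before projection (this is a standard property holding already at the level of $\End(Q)$ and $\End(I_a)$ in any TFT encoded by $\B$), and passing to the orbifold projections via $\pi_A$ and $\pirr$ introduces exactly the Serre-functor twist ${}_{\gamma_A}(-)$, which is what distinguishes the two orderings in~\eqref{eq:adjunctionBobu1} versus~\eqref{eq:adjunctionBobu2}.

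\textbf{Main obstacle.} The delicate point is bookkeeping the position of the Nakayama insertion: when the $A$-strand from the coaction is dragged across the boundary loop, one must verify that it emerges precisely as $\gamma_A$ (and not $\gamma_A'$ or $\gamma_A^{-1}$), which requires careful use of~\eqref{eq:Nakayama}, \eqref{eq:NakayamaPrime}, and the identity $\pirr(\alpha)=\gamma_A\circ\alpha\circ(\ldots)=\ldots$ from~\eqref{eq:pirrGamma}, together with keeping the orientation conventions of the evaluation/coevaluation maps straight. I expect the second, Serre-functor-based argument to be the most economical way to avoid grinding through the sign- and twist-tracking of the explicit string diagrams.
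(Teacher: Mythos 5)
Your proposal is correct and follows essentially the same route as the paper: equation~\eqref{eq:adjunctionBobu1} is proved diagrammatically by writing $\beta^Q(\Psi)$ into the bulk pairing, collapsing the (co)multiplication--counit combination via the Frobenius property~\eqref{eq:FrobeniusProperty}, and invoking the orientation-flip axiom~\eqref{eq:oriflip} to identify the result with $\langle\beta_Q(\alpha),\Psi\rangle_Q$. Your derivation of~\eqref{eq:adjunctionBobu2} from~\eqref{eq:adjunctionBobu1} via the twisted bulk symmetry~\eqref{eq:bulkPairingTwistedSymmetry} and the boundary symmetry~\eqref{eq:DefectParingSymmetry} is precisely the paper's chain of equalities read in the opposite direction.
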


\begin{proof}
For the first identity we compute
\begin{align*}
\Big\langle \alpha, \beta^Q (\Psi) \Big\rangle_{(a,A)} 
& 
=
\left\langle \, 
\begin{tikzpicture}[very thick,scale=0.6,color=green!50!black, baseline]
\draw[color=blue!50!black] (0,0) circle (1.5);
\draw[->, color=blue!50!black, very thick] (0.100,-1.5) -- (-0.101,-1.5) node[above] {}; 
\draw[->, color=blue!50!black, very thick] (-0.100,1.5) -- (0.101,1.5) node[below] {}; 
\fill[color=blue!50!black] (202.5:1.5) circle (3.3pt) node[right] {{\small$\Psi$}};
\fill (155:1.5) circle (3.3pt) node[left] {};
\draw[color=green!50!black] (-2,0) .. controls +(0,0.4) and +(-0.4,-0.2) .. (155:1.5);
\draw[-dot-] (-2,0) .. controls +(0,-1) and +(0,-1) .. (-3,0);
\draw[-dot-] (-3,0) .. controls +(0,1) and +(0,1) .. (-4,0);
\draw (-4,0) -- (-4,-1);
\fill (-4,-1) circle (3.3pt) node[right] {{\small$\alpha$}};
\draw (-2.5,-1.2) node[Odot] (D) {}; 
\draw (-2.5,-0.8) -- (D);
\draw (-3.5,1.2) node[Odot] (D) {}; 
\draw (-3.5,0.8) -- (D);
\end{tikzpicture} 
\,  \right\rangle_{\raisemath{10pt}{\!\!\!a}}
\eq^{\eqref{eq:FrobeniusProperty}}
\left\langle 
\begin{tikzpicture}[very thick,scale=0.6,color=green!50!black, baseline]
\draw[color=blue!50!black] (0,0) circle (1.5);
\draw[->, color=blue!50!black, very thick] (0.100,-1.5) -- (-0.101,-1.5) node[above] {}; 
\draw[->, color=blue!50!black, very thick] (-0.100,1.5) -- (0.101,1.5) node[below] {}; 
\fill[color=blue!50!black] (202.5:1.5) circle (3.3pt) node[right] {{\small$\Psi$}};
\fill (155:1.5) circle (3.3pt) node[left] {};
\draw[color=green!50!black] (-2,0) .. controls +(0,0.4) and +(-0.4,-0.2) .. (155:1.5);
\fill (-2,0) circle (3.3pt) node[left] {{\small$\alpha$}};
\end{tikzpicture} 
\,  \right\rangle_{\raisemath{10pt}{\!\!\!a}}
\\
& 
\eq^{\eqref{eq:oriflip}}
\begin{tikzpicture}[very thick,scale=0.6,color=green!50!black, baseline]
\draw[color=blue!50!black] (0,0) circle (1.5);
\draw[<-, color=blue!50!black, very thick] (0.100,-1.5) -- (-0.101,-1.5) node[above] {}; 
\draw[<-, color=blue!50!black, very thick] (-0.100,1.5) -- (0.101,1.5) node[below] {}; 
\fill (22.5:1.5) circle (3.3pt) node[left] {};
\draw (22.5:1.5) .. controls +(-0.3,0) and +(0,0.3) .. (0,0);
\fill (0,0) circle (3.3pt) node[left] (beta) {{\small $\alpha$}};
\fill[color=blue!50!black] (-22.5:1.5) circle (3.3pt) node[right] {{\small$\Psi$}};
\end{tikzpicture} 
\!\!
=
\Big\langle \beta_Q (\alpha) , \Psi \Big\rangle_Q 
\, , 
\end{align*}
and with this the relation~\eqref{eq:adjunctionBobu2} follows as 
\begin{align*}
\Big \langle \Psi,\; \beta_Q(\gamma_A \circ \alpha) \Big\rangle_Q 
& \eq^\eqref{eq:DefectParingSymmetry} 
\Big \langle \beta_Q(\gamma_A \circ \alpha),\; \Psi \Big\rangle_Q 
\eq^{\eqref{eq:adjunctionBobu1}}
\Big\langle \gamma_A \circ \alpha,\; \beta^Q(\Psi) \Big\rangle_{(a,A)} \\
& \eq^\eqref{eq:bulkPairingTwistedSymmetry} 
\Big\langle \beta^Q(\Psi),\; \alpha \Big\rangle_{(a,A)} \, .
\end{align*}
\end{proof}

The next result generalises \cite[Thm.\,6.5]{cr1210.6363} where the Cardy condition was proven for $\B = \LG$ and~$A$ a symmetric separable Frobenius algebra. The case $A = A_G$ with the symmetry condition  dropped was first shown for Landau-Ginzburg models in \cite[Thm.\,4.2.1]{pv1002.2116}. 

\begin{proposition}\label{prop:Cardy}
The Cardy condition holds for any separable Frobenius algebra~$A$: 
$$
\langle \beta^Q(\Phi) \, \beta^{P}(\Psi) \rangle_{(a,A)}
=
\tr({}_\Psi m_\Phi)
$$
where $\Phi \in \End_A(Q)$, $\Psi \in \End_A(P)$, and we take the $\C$-linear trace over $\Hom_A(Q,P)$. 
\end{proposition}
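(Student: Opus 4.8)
The plan is to prove the Cardy condition entirely within the graphical calculus of the pivotal bicategory~$\B$, reducing everything to manipulations of string diagrams together with the nondegeneracy/naturality results already established. First I would unfold both sides in terms of the explicit presentations of $\beta^Q$, $\beta^P$ and the bulk pairing $\langle-,-\rangle_{(a,A)}$ of~\eqref{eq:genorbbulkpairing}. The left-hand side $\langle \beta^Q(\Phi)\,\beta^P(\Psi)\rangle_{(a,A)}$ becomes a closed string diagram in which two ``bubbles'' around~$Q$ and~$P$ carrying $\Phi$ and $\Psi$ are joined by the multiplication of~$A$, capped with the counit, and fed into the ambient bulk trace $\langle-\rangle_a$; crucially the comultiplication/counit structure on the two boundary loops can be merged using the Frobenius property~\eqref{eq:FrobeniusProperty} and separability~\eqref{eq:separability}, exactly as in the chain~\eqref{eq:loopinvariant}. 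I would use these moves to collapse the two separate $A$-loops into a single configuration in which the $A$-line is threaded once between the $Q$-line and the $P$-line.

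Second, I would identify the right-hand side. The trace $\tr({}_\Psi m_\Phi)$ over $\Hom_A(Q,P)$ must first be rewritten using the projector formula~\eqref{eq:piA}: since $\Hom_A(Q,P) = \operatorname{im}(\pi_A)$ and $\pi_A$ is an idempotent on $\Hom(Q,P)$, the trace over the subspace equals the trace over the full space $\Hom(Q,P)$ of the composite ${}_\Psi m_\Phi \circ \pi_A$ (equivalently $\pi_A\circ{}_\Psi m_\Phi$). The trace of an operator of the form $\chi \mapsto \Psi\circ\chi\circ\Phi$ composed with the $A$-averaging projector, taken over $\Hom(Q,P)$, is computed in the unorbifolded theory by a standard ``Cardy'' string-diagram identity: closing up $\chi$ against its dual via the defect pairing $\langle-,-\rangle_{Q,P}$ produces precisely the cylinder diagram with $\Phi$ and $\Psi$ inserted on the two boundary circles and the $A$-line wrapped around, i.e.\ the same closed diagram obtained on the left-hand side. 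Here I would lean on~\eqref{eq:oriflip}, \eqref{eq:DefectParingSymmetry} and pivotality to match orientations, and on the nondegeneracy of the defect pairings to justify the trace-as-diagram step (this is the content of the ordinary, unorbifolded Cardy condition, which holds because $\B$ is assumed to describe a TFT).

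Third, having reduced both sides to closed string diagrams in~$\B$, I would check that they are planar-isotopic (possibly after applying the separability relation~\eqref{eq:separability} once more to remove a spurious $A$-loop introduced by $\pi_A$, and the Frobenius relations to slide the $A$-line into the standard position). Since the value of a string diagram depends only on its planar isotopy class, equality of the two sides follows. I would also note explicitly where the Nakayama automorphism would appear if one used the $\Hrr$-valued, rather than $\HIA$-valued, form of $\beta^Q$, and check that it is consistent with the identification $\beta^Q(\End_A(Q))\subset\Hrr$ from the previous proposition, so that the statement is really about the projected sectors; but the cleanest argument keeps $\gamma_A$ implicit by working with the unprojected maps and invoking~\eqref{eq:loopinvariant} only at the end.

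The main obstacle I anticipate is the bookkeeping in the second step: carefully justifying that the categorical trace over the \emph{module} Hom-space $\Hom_A(Q,P)$ can be replaced by an ambient trace over $\Hom(Q,P)$ weighted by $\pi_A$, and then that this ambient trace reproduces exactly the cylinder diagram with the $A$-wrapping --- i.e.\ importing the unorbifolded Cardy identity in the right form and matching it, sign- and orientation-wise, with the unfolded left-hand side. All the individual moves (Frobenius, separability, Zorro, pivotality, $\langle-\rangle_a$ cyclicity via~\eqref{eq:oriflip}) are routine, but assembling them so that the two closed diagrams are manifestly the same --- and making sure no stray factor of $\dim$ or $\gamma_A$ survives --- is the delicate part; this is precisely where the argument of \cite[Thm.\,6.5]{cr1210.6363} has to be adapted to the non-symmetric case, and where the proof of \cite[Thm.\,4.2.1]{pv1002.2116} serves as a cross-check in the Landau-Ginzburg instance.
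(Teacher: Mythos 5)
Your proposal is correct and follows essentially the same route as the paper: unfold the left-hand side as a closed string diagram, merge the $A$-wrapping via the Frobenius and separability relations, flip one boundary circle with \eqref{eq:oriflip} and the Zorro moves, and identify the resulting nested diagram with the $\pi_A$-weighted ambient trace of ${}_\Psi m_\Phi$, i.e.\ the $\C$-linear trace over $\Hom_A(Q,P)$ (the paper performs this last identification by adapting the end of the proof of \cite[Thm.\,6.5]{cr1210.6363}, which is exactly your ``trace over $\operatorname{im}(\pi_A)$ equals trace of ${}_\Psi m_\Phi\circ\pi_A$'' step). The only difference is cosmetic --- you process the right-hand side forward into a diagram while the paper massages the left-hand side until it is recognisably that trace --- and your identification of the delicate points (orientation bookkeeping, absence of stray $\gamma_A$ factors) matches where the paper's care is actually spent.
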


\begin{proof}
We calculate
\begin{align*}
&\Big\langle \beta^Q(\Phi) \, \beta^{P}(\Psi) \Big\rangle_{(a,A)} = 
\left\langle \,
\begin{tikzpicture}[very thick,scale=0.6,color=blue!50!black, baseline]
\draw (0,0) circle (1.0);
\draw[->] (0.100,-1.0) -- (-0.101,-1.0) node[above] {}; 
\draw[->] (-0.100,1.0) -- (0.101,1.0) node[below] {}; 
\fill (202.5:1.0) circle (3.3pt) node[right] {{\small $\Phi$}};
\fill (90:1.0) circle (0pt) node[below] {{\small $Q$}};
\fill[color=green!50!black] (155:1.0) circle (3.3pt) node[right] {};
\draw[-dot-, color=green!50!black] (-2.5,-0.2) .. controls +(0,-1) and +(0,-1) .. (-1.5,-0.2);
\draw[color=green!50!black] (-1.5,-0.2) .. controls +(0,0.4) and +(-0.4,-0.2) .. (155:1.0);
\draw[color=green!50!black] (-2,-1.4) node[Odot] (unit) {};
\draw[color=green!50!black] (-2,-1) -- (unit);
\begin{scope}[shift={(4.5,0)}]
\draw (0,0) circle (1.0);
\draw[->] (0.100,-1.0) -- (-0.101,-1.0) node[above] {}; 
\draw[->] (-0.100,1.0) -- (0.101,1.0) node[below] {}; 
\fill (202.5:1.0) circle (3.3pt) node[right] {{\small $\Psi$}};
\fill (90:1.0) circle (0pt) node[below] {{\small $P$}};
\fill[color=green!50!black] (155:1.0) circle (3.3pt) node[right] {};
\draw[-dot-, color=green!50!black] (-2.5,-0.2) .. controls +(0,-1) and +(0,-1) .. (-1.5,-0.2);
\draw[color=green!50!black] (-1.5,-0.2) .. controls +(0,0.4) and +(-0.4,-0.2) .. (155:1.0);
\draw[color=green!50!black] (-2,-1.4) node[Odot] (unit) {};
\draw[color=green!50!black] (-2,-1) -- (unit);
\end{scope}
\draw[-dot-,color=green!50!black] (-2.5,-0.2) .. controls +(0,2.2) and +(0,2.2) .. (2,-0.2);
\draw[color=green!50!black] (-0.25,1.9) node[Odot] (unit2) {};
\draw[color=green!50!black] (-0.25,1.5) -- (unit2);
\end{tikzpicture} 
\right\rangle_{\raisemath{10pt}{\!\!\!a}}\\
&\eq^{\eqref{eq:FrobeniusProperty}}
\left\langle \,
\begin{tikzpicture}[very thick,scale=0.6,color=blue!50!black, baseline]
\draw (0,0) circle (1.0);
\draw[->] (0.100,-1.0) -- (-0.101,-1.0) node[above] {}; 
\draw[->] (-0.100,1.0) -- (0.101,1.0) node[below] {}; 
\fill (202.5:1.0) circle (3.3pt) node[right] {{\small $\Phi$}};
\fill (90:1.0) circle (0pt) node[below] {{\small $Q$}};
\fill[color=green!50!black] (155:1.0) circle (3.3pt) node[right] {};
\draw[-dot-, color=green!50!black] (-2.5,-0.2) .. controls +(0,-1) and +(0,-1) .. (-1.5,-0.2);
\draw[color=green!50!black] (-1.5,-0.2) .. controls +(0,0.4) and +(-0.4,-0.2) .. (155:1.0);
\draw[color=green!50!black] (-2,-1.4) node[Odot] (unit) {};
\draw[color=green!50!black] (-2,-1) -- (unit);
\begin{scope}[shift={(4,0)}]
\draw (0,0) circle (1.5);
\draw[->] (0.100,-1.5) -- (-0.101,-1.5) node[above] {}; 
\draw[->] (-0.100,1.5) -- (0.101,1.5) node[below] {}; 
\fill (202.5:1.5) circle (3.3pt) node[right] {{\small $\Psi$}};
\fill (270:1.5) circle (0pt) node[above] {{\small $P$}};
\fill[color=green!50!black] (110:1.5) circle (3.3pt) node[right] {};
\draw[color=green!50!black] (110:1.5) .. controls +(-6,-0.1) and +(0,1.2) .. (-6.5,-0.2);
\end{scope}
\end{tikzpicture} 
\right\rangle_{\raisemath{10pt}{\!\!\!a}} \\
&\eq^{\eqref{eq:oriflip}}
\begin{tikzpicture}[very thick,scale=0.6,color=blue!50!black, baseline]
\draw (0,0) circle (2.5);
\draw[<-] (0.100,-2.5) -- (-0.101,-2.5) node[above] {{\small $P$}}; 
\draw[<-] (-0.100,2.5) -- (0.101,2.5) node[below] {};
\draw (0,0) circle (1);
\draw[->] (0.100,-1) -- (-0.101,-1) node[below] {{}}; 
\draw[->] (-0.100,1) -- (0.101,1) node[below] {{\small $Q$}};
\fill (202.5:1) circle (3.3pt) node[right] {{\small $\Phi$}};
\fill (337.5:2.5) circle (3.3pt) node[right] {{\small $\Psi$}};
\fill[color=green!50!black] (155:1.0) circle (3.3pt) node[right] {};
\draw[-dot-, color=green!50!black] (-1.75,-0.2) .. controls +(0,-0.6) and +(0,-0.6) .. (-1.25,-0.2);
\draw[color=green!50!black] (-1.25,-0.2) .. controls +(0,0.3) and +(-0.3,-0.15) .. (155:1.0);
\draw[color=green!50!black] (-1.5,-1.1) node[Odot] (unit) {};
\draw[color=green!50!black] (-1.5,-0.7) -- (unit);
\draw[color=green!50!black] (-1.75,-0.2) .. controls +(0,1.5) and +(-1,-0.2) .. (70:2.5);
\fill[color=green!50!black] (70:2.5) circle (3.3pt) node {{}};
\end{tikzpicture}
\eq^{\text{Zorro}}_{\eqref{eq:rightPhi}}
\begin{tikzpicture}[very thick,scale=0.6,color=blue!50!black, baseline]
\draw (180:2.5) arc (180:360:2.5);
\draw (-2.5,0) .. controls +(0,3) and +(0,0.75) .. (-0.83,2);
\draw (-0.83,2) .. controls +(0,-0.85) and +(0,-0.85) .. (0.83,2);
\draw (2.5,0) .. controls +(0,3) and +(0,0.75) .. (0.83,2);
\draw[<-] (0.100,-2.5) -- (-0.101,-2.5) node[above] {{\small $P$}}; 
\draw[<-] (-1.401,2.5) -- (-1.400,2.5) node[below] {};
\draw[<-] (1.280,2.5) -- (1.281,2.5) node[below] {};
\draw[<-] (-0.1,1.37) -- (0.1,1.37) node[below] {};
\draw (0,0) circle (1);
\draw[->] (0.100,-1) -- (-0.101,-1) node[below] {{}}; 
\draw[->] (-0.100,1) -- (0.101,1) node[below] {{\small $Q$}};
\fill (202.5:1) circle (3.3pt) node[right] {{\small $\Phi$}};
\fill (189:2.5) circle (3.3pt) node[left] {{\small $\Psi^\dagger$}};
\fill[color=green!50!black] (-1,0) circle (3.3pt) node (meet) {};
\draw[-dot-, color=green!50!black] (-1.5,-0.6) .. controls +(0,-0.5) and +(0,-0.5) .. (-2,-0.6);
\draw[color=green!50!black] (-1.5,-0.6) .. controls +(0,0.25) and +(-0.25,-0.25) .. (-1,0);
\fill[color=green!50!black] (-0.83,2) circle (3.3pt) node {};
\draw[color=green!50!black] (-2,-0.6) .. controls +(0,0.8) and +(-1.4,-0.4) .. (-0.83,2);
\draw[color=green!50!black] (-1.75,-1.4) node[Odot] (down) {}; 
\draw[color=green!50!black] (down) -- (-1.75,-1.05); 
\end{tikzpicture}
=
\begin{tikzpicture}[very thick,scale=0.6,color=blue!50!black, baseline]
\draw (0,0) circle (2.5);
\draw[<-] (0.100,-2.5) -- (-0.101,-2.5) node[above] {{\small $P$}}; 
\draw[<-] (-0.100,2.5) -- (0.101,2.5) node[below] {};
\draw (0,0) circle (1);
\draw[->] (0.100,-1) -- (-0.101,-1) node[below] {{}}; 
\draw[->] (-0.100,1) -- (0.101,1) node[below] {{\small $Q$}};
\fill (202.5:1) circle (3.3pt) node[right] {{\small $\Phi$}};
\fill (189:2.5) circle (3.3pt) node[left] {{\small $\Psi^\dagger$}};
\fill[color=green!50!black] (-1,0) circle (3.3pt) node (meet) {};
\fill[color=green!50!black] (-2.5,0) circle (3.3pt) node (meet) {};
\draw[-dot-, color=green!50!black] (-1.5,-0.6) .. controls +(0,-0.5) and +(0,-0.5) .. (-2,-0.6);
\draw[color=green!50!black] (-1.5,-0.6) .. controls +(0,0.25) and +(-0.25,-0.25) .. (-1,0);
\draw[color=green!50!black] (-2,-0.6) .. controls +(0,0.25) and +(0.25,-0.25) .. (-2.5,0);
\draw[color=green!50!black] (-1.75,-1.4) node[Odot] (down) {}; 
\draw[color=green!50!black] (down) -- (-1.75,-1.05); 
\end{tikzpicture}
\; ,
\end{align*}
where in the last step we use the definition (analogous to~\eqref{eq:pivvotal}) of the right $A$-module structure on $P^\dagger$. Adapting the argument at the end of the proof of \cite[Thm.\,6.5]{cr1210.6363} we continue to find
$$
\begin{tikzpicture}[very thick,scale=0.6,color=blue!50!black, baseline]
\draw (0,0) circle (2.5);
\draw[<-] (0.100,-2.5) -- (-0.101,-2.5) node[above] {{\small $P$}}; 
\draw[<-] (-0.100,2.5) -- (0.101,2.5) node[below] {};
\draw (0,0) circle (1);
\draw[->] (0.100,-1) -- (-0.101,-1) node[below] {{}}; 
\draw[->] (-0.100,1) -- (0.101,1) node[below] {{\small $Q$}};
\fill (202.5:1) circle (3.3pt) node[right] {{\small $\Phi$}};
\fill (189:2.5) circle (3.3pt) node[left] {{\small $\Psi^\dagger$}};
\fill[color=green!50!black] (-1,0) circle (3.3pt) node (meet) {};
\fill[color=green!50!black] (-2.5,0) circle (3.3pt) node (meet) {};
\draw[-dot-, color=green!50!black] (-1.5,-0.6) .. controls +(0,-0.5) and +(0,-0.5) .. (-2,-0.6);
\draw[color=green!50!black] (-1.5,-0.6) .. controls +(0,0.25) and +(-0.25,-0.25) .. (-1,0);
\draw[color=green!50!black] (-2,-0.6) .. controls +(0,0.25) and +(0.25,-0.25) .. (-2.5,0);
\draw[color=green!50!black] (-1.75,-1.4) node[Odot] (down) {}; 
\draw[color=green!50!black] (down) -- (-1.75,-1.05); 
\end{tikzpicture}
=
\begin{tikzpicture}[very thick,scale=0.6,color=blue!50!black, baseline]
\draw (0,0) circle (1.5);
\draw[->] (0.100,-1.5) -- (-0.101,-1.5) node[below] {{\small $P^\dagger \otimes_A Q$}}; 
\draw[->] (-0.100,1.5) -- (0.101,1.5) node[below] {};
\fill (180:1.5) circle (3.3pt) node[right] {{\small $\Psi^\dagger \otimes_A \Phi$}};
\end{tikzpicture}
=
\tr({}_\Psi m_\Phi)
\, . 
$$
\end{proof}

\subsubsection*{Summary}

Let us now summarise the foregoing discussion and compare the structure we obtained for generalised orbifolds $(a,A)$ with the structure of an open/closed TFT. For non-symmetric~$A$, neither of the bulk spaces $\Hcc$ and $\Hrr$ are guaranteed to form a commutative Frobenius algebra. This structure is rather ``shared'' between the two in the sense that $\Hcc$ is a commutative algebra, while $\Hrr$ has a nondegenerate pairing. This is in line with the general interpretation of the chiral sector as ``operators'', and the Ramond sector as ``states''. 

The situation is similar on the boundary, described by $A$-modules and the open string spaces $\Hom_A(Q,P)$ and $\Hom_A(Q,{}_{\gamma_A} P)$ in the chiral and Ramond sector, respectively. The category $\modu(A)$ does have a Serre functor which provides a perfect pairing of $\Hom_A(P,Q)$ with $\Hom_A(Q,{}_{\gamma_A} P)$, but in general there are no nondegenerate pairings for the individual sectors. 

Such mixing is also present for the boundary-bulk maps $\beta^Q: \End_A(Q) \rightarrow \Hrr$. On the other hand, we saw that the bulk-boundary maps~$\beta_Q$ respect the segregation of chiral and Ramond sectors, and that $\beta_Q, \beta^Q$ are ``twisted adjoint''. Most importantly, we proved a version of the Cardy condition -- involving the pairing on the Ramond bulk space and a trace in the chiral boundary sector. 

We thus conclude that as expected in general $(a,A)$ fails to provide an open/closed TFT, though the failure is relatively mild. In particular, if~$A$ is not only separable Frobenius but also symmetric, then $(a,A)$ does give an honest TFT.

\subsection{Defect actions}\label{subsec:DefectsFunctoriality}

The natural conclusion of this section is a discussion of defects. Given two orbifold theories $(a,A)$ and $(b,B)$, a defect between them is a 1-morphism $X\in \B(a,b)$ that has the structure of a $B$-$A$-bimodule: 
$$
\begin{tikzpicture}[very thick,scale=0.7,color=blue!50!black, baseline=0cm]
\nicepalecolourscheme (0,-1.25) rectangle (2.5,1.25);
\nicehalfpalecolourscheme (-2.5,-1.25) rectangle (0,1.25);
\draw[line width=0] 
(0,1.25) node[line width=0pt] (A) {}
(0,-1.25) node[line width=0pt] (A2) {}; 
\draw[
	decoration={markings, mark=at position 0.55 with {\arrow{>}}}, postaction={decorate}
	]
 (0,-1.25) -- (0,1.25); 
 \draw[line width=0] 
(0,-1) node[line width=0pt, right] (Xbottom) {{\small $X$}}
(-2.1,0) node[line width=0pt, right] (Xbottom) {{\small $(b,B)$}}
(0.25,0) node[line width=0pt, right] (Xbottom) {{\small $(a,A)$}};
\end{tikzpicture}
\, .
$$
If~$X'$ is another such defect we define the spaces of defect changing fields in the chiral primary and Ramond sectors by
$$
\Hom_{BA}(X,X') 
\, , \quad 
\Hom_{BA}(X, {}_{\gamma_B} X') 
\, , 
$$
respectively. This choice of defect condition changing fields in the Ramond sector reduces to the space~\eqref{eq:HARR} of RR ground states $\Hrr \cong \Hom_{AA}(A,{}_{\gamma_A} A)$ in the special case $X=A=B$. 
The unit $I_{(a,A)}$ of the theory $(a,A)$ is given by~$A$. 

Defects induce actions on bulk fields also in the orbifold theory. Indeed, the natural generalisation of~\eqref{eq:MotherDefectActionOnBulkFields} is
\begin{align}
\mathcal D_{\text{l}}(X) : \Hom(I_b,B) \ni \beta & \lmt 
\begin{tikzpicture}[very thick,scale=0.75,color=blue!50!black, baseline]
\nicepalecolourscheme (0,0) circle (2.0);
\fill (-1.175,-1.175) circle (0pt) node[white] {{\small$a$}};
\nicecolourscheme (0,0) circle (1.25);
\fill (-0.65,-0.65) circle (0pt) node[white] {{\small$b$}};
\draw (0,0) circle (1.25);
\fill (-45:1.3) circle (0pt) node[right] {{\small$X$}};
\draw[<-, very thick] (0.100,-1.25) -- (-0.101,-1.25) node[above] {}; 
\draw[<-, very thick] (-0.100,1.25) -- (0.101,1.25) node[below] {}; 
\fill[color=green!50!black] (135:0) circle (2.5pt) node[left] {{\small$\beta$}};
\draw[color=green!50!black] (0,0) .. controls +(0,0.6) and +(-0.4,-0.4) .. (45:1.25);
\fill[color=green!50!black] (45:1.25) circle (2.5pt) node[right] {};
\fill[color=green!50!black] (25:1.25) circle (2.5pt) node[right] {};
\draw[color=green!50!black] (25:1.25) .. controls +(0.3,0.4) and +(0,-0.5) .. (40:2.3);
\fill[color=green!50!black] (40:2.3) circle (0pt) node[above] {{\small$A$}};
\end{tikzpicture} 
\in \Hom(I_a,A) 
\, , \label{eq:leftdefectorbiaction}
\\
\mathcal D_{\text{r}}(X) : \Hom(I_a, A) \ni \alpha & \lmt
\begin{tikzpicture}[very thick,scale=0.75,color=blue!50!black, baseline]
\nicecolourscheme (0,0) circle (2.0);
\fill (1.175,-1.175) circle (0pt) node[white] {{\small$b$}};
\nicepalecolourscheme (0,0) circle (1.25);
\fill (0.65,-0.65) circle (0pt) node[white] {{\small$a$}};
\draw (0,0) circle (1.25);
\fill (-135:1.3) circle (0pt) node[left] {{\small$X$}};
\draw[->, very thick] (0.100,-1.25) -- (-0.101,-1.25) node[above] {}; 
\draw[->, very thick] (-0.100,1.25) -- (0.101,1.25) node[below] {}; 
\fill[color=green!50!black] (135:0) circle (2.5pt) node[right] {{\small$\alpha$}};
\draw[color=green!50!black] (0,0) .. controls +(0,0.6) and +(0.4,-0.4) .. (135:1.25);
\fill[color=green!50!black] (135:1.25) circle (2.5pt) node[right] {};
\fill[color=green!50!black] (155:1.25) circle (2.5pt) node[right] {};
\draw[color=green!50!black] (155:1.25) .. controls +(-0.3,0.4) and +(0,-0.5) .. (140:2.3);
\fill[color=green!50!black] (140:2.3) circle (0pt) node[above] {{\small$B$}};
\end{tikzpicture} 
\in \Hom(I_b,B) \, . \label{eq:rightdefectorbiaction}
\end{align}
As in~\eqref{eq:loopinvariant} one shows that these induce well-defined operators 
$$
\mathcal D_{\text{l}}(X): \mathcal H_{\text{RR}}^B \lra \mathcal H_{\text{RR}}^A
\, , \quad
\mathcal D_{\text{r}}(X): \mathcal H_{\text{RR}}^A \lra \mathcal H_{\text{RR}}^B
$$
by restriction to orbifold projected RR ground states. On the other hand, the defect actions do not in general map between orbifold projected (c,c) fields. 
More precisely, every defect acts as zero on (c,c) fields that do not happen to be simultaneous RR ground states, i.\,e.~elements of the unprojected orbifold space~\eqref{eq:HomIA} that are invariant under both $\picc$ and $\pirr$. This is a direct consequence of
$$
\begin{tikzpicture}[very thick,scale=0.6,color=green!50!black, baseline]
\draw[color=blue!50!black] (0,0) circle (1.5);
\draw[<-, color=blue!50!black, very thick] (0.100,-1.5) -- (-0.101,-1.5) node[above] {}; 
\draw[<-, color=blue!50!black, very thick] (-0.100,1.5) -- (0.101,1.5) node[below] {}; 
\fill (45:1.5) circle (3.3pt) node[left] {};
\draw (45:1.5) .. controls +(-0.3,0) and +(0,0.3) .. (0,0);
\fill (0,0) circle (3.3pt) node[left] (beta) {};
\fill (63:0.45) circle (3.3pt) node[left] {};
\fill (53:1.05) circle (3.3pt) node[left] {};
\draw[color=green!50!black](63:0.45) .. controls +(0.7,0.3) and +(0,0.2) .. (0:0.85);
\draw[->,color=green!50!black] (0.7,0.48) -- (0.71,0.48);
\draw[directedgreen](0:0.85) .. controls +(0,-1) and +(0,-1) .. (180:0.65);
\draw[color=green!50!black] (180:0.65) .. controls +(0,0.4) and +(-0.9,-0.1) .. (53:1.05);
\end{tikzpicture} 
= 
\begin{tikzpicture}[very thick,scale=0.6,color=green!50!black, baseline]
\draw[color=blue!50!black] (0,0) circle (1.5);
\draw[<-, color=blue!50!black, very thick] (0.100,-1.5) -- (-0.101,-1.5) node[above] {}; 
\draw[<-, color=blue!50!black, very thick] (-0.100,1.5) -- (0.101,1.5) node[below] {}; 
\fill (45:1.5) circle (3.3pt) node[left] {};
\draw (45:1.5) .. controls +(-0.3,0) and +(0,0.3) .. (0,0);
\fill (0,0) circle (3.3pt) node[left] (beta) {};
\end{tikzpicture} 
$$
which in turn follows from the perfect pairing~\cite[Prop.\,4.6]{cr1210.6363} between (c,c) fields and RR ground states discussed in the previous section, to wit the special case where the (c,c) field is the identity. We thus conclude that in general defect actions are naturally defined only on Hochschild homology, i.\,e.~on RR ground states. 

A related aspect is that in the special case $X=A=B$ defect actions reduce to the RR projector~$\pirr$ of~\eqref{eq:piRR}: 
\be\label{eq:piARRdefectaction}
\pirr = \mathcal D_{\text{l}}(A) = \mathcal D_{\text{r}}(A) \, . 
\ee

\medskip

Defect actions on bulk fields continue to have all the expected properties. In particular, the invisible defect~$A$ acts as the identity on RR ground states, fusion is compatible with defect action, and wrapping a defect around a bulk field one way amounts to wrapping its adjoint the opposite way: 

\begin{proposition}
For $X: (a,A) \rightarrow (b,B)$ and $Y: (b,B) \rightarrow (c,C)$ we have 
\begin{enumerate}
\item $\mathcal D_{\text{l}}(A) = \mathcal D_{\text{r}}(A) = 1$ on $\Hrr$, 
\item 
$
\mathcal D_{\text{l}}(X) \circ \mathcal D_{\text{l}}(Y) = \mathcal D_{\text{l}}(Y\otimes_B X)
$, 
$
\mathcal D_{\text{r}}(Y) \circ \mathcal D_{\text{r}}(X) = \mathcal D_{\text{r}}(Y\otimes_B X)
$, 
\item 
$
\mathcal D_{\text{l}}(X) = \mathcal D_{\text{r}}(X^\dagger)
$.
\end{enumerate}
\end{proposition}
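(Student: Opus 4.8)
The plan is to prove the three statements in sequence, each by a direct string-diagram manipulation using the separable Frobenius structure of the algebras together with the pivotality and adjunction machinery already set up. All three follow the same template as the proof of \eqref{eq:loopinvariant}, so the bulk of the work is diagrammatic bookkeeping rather than genuinely new ideas.

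For part (i), I would observe that when $X=A=B$ the bimodule action of $A$ on $X$ is just the multiplication, so the loop appearing in $\mathcal D_{\text{l}}(A)(\beta)$ (resp.\ $\mathcal D_{\text{r}}(A)(\alpha)$) is precisely the diagram defining $\pirr$ in \eqref{eq:piRR}: the ``fuse with $A$ then close up with the module/comodule structure'' prescription collapses, via \eqref{eq:comoduleAction}, to wrapping a single $A$-loop. Hence $\mathcal D_{\text{l}}(A)=\mathcal D_{\text{r}}(A)=\pirr$, which is exactly \eqref{eq:piARRdefectaction}; restricted to $\Hrr=\im(\pirr)$ the projector acts as the identity. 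This really is just unwinding definitions.

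For part (ii), the strategy is to compute $\mathcal D_{\text{l}}(X)\circ\mathcal D_{\text{l}}(Y)$ by drawing the two nested loops (an $X$-wrapping inside a $Y$-wrapping) around the bulk field, then use the $B$-bimodule structure on $X$ and the $C$-bimodule structure on $Y$ to slide the two $B$-actions together. The key point is that the composite loop, after applying associativity of $B$ and the separability idempotent $\pi_B^{X,Y}$, is exactly the single loop built from the $C$-$A$-bimodule $Y\otimes_B X$ — this is where one invokes the construction of $\otimes_B$ as the image of $\pi_B^{X,Y}$ and the splitting maps $\xi,\vartheta$. Separability of $B$ is what makes the extra $B$-loop collapse without leaving a trace. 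The statement for $\mathcal D_{\text{r}}$ is the mirror image. Well-definedness on $\mathcal H_{\text{RR}}$ is already granted by the ``as in \eqref{eq:loopinvariant}'' remark, so I only need the identity of the unprojected operators.

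For part (iii), the claim $\mathcal D_{\text{l}}(X)=\mathcal D_{\text{r}}(X^\dagger)$ is the ``wrap one way = wrap the adjoint the other way'' statement. The plan is to start from the diagram for $\mathcal D_{\text{l}}(X)(\beta)$ — a $\beta$ inside an $X$-loop read counterclockwise with the $B$- and $A$-module legs attached — and apply the Zorro moves together with pivotality \eqref{eq:pivotal} to turn the $X$-loop inside out into an $X^\dagger$-loop read clockwise; under this move the right $A$-action on $X$ becomes the left $A$-action on $X^\dagger$ (as in \eqref{eq:pivvotal}) and the left $B$-action on $X$ becomes the right $B$-action on $X^\dagger$, which is precisely the bimodule structure making $X^\dagger$ a defect $(b,B)\to(a,A)$ in the opposite direction. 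Comparing with the definition \eqref{eq:rightdefectorbiaction} of $\mathcal D_{\text{r}}(X^\dagger)$ then closes the argument. The main obstacle I anticipate is purely cosmetic: keeping the orientations, the placement of the three trivalent vertices (multiplication, unit/counit of $A$ and $B$), and the $X$ versus $X^\dagger$ labels straight through the inside-out move so that the $B$- and $A$-legs end up on the correct sides — the same kind of careful diagram chase as in \eqref{eq:loopinvariant} and the proof of Proposition~\ref{prop:Cardy}, with no conceptual surprises beyond pivotality.
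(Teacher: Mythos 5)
Your proposal is correct and follows essentially the same route as the paper, which likewise disposes of (i) by identifying $\mathcal D_{\text{l}}(A) = \mathcal D_{\text{r}}(A) = \pirr$ via \eqref{eq:piARRdefectaction}, of (ii) by the diagrammatic fusion argument in which separability of $B$ collapses the intermediate loop into the idempotent defining $\otimes_B$, and of (iii) by pivotality together with the Zorro moves. The only slip is cosmetic: in (ii) the $Y$-loop is the \emph{inner} one and the $X$-loop the outer one (the bulk field starts in theory $c$ and is first wrapped by $Y$), but this relabelling does not affect the argument.
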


\begin{proof}
Part~(i) immediately follows from~\eqref{eq:piARRdefectaction}. The proof of part~(ii) proceeds completely analogously to the unorbifolded case (see~\cite[Prop.\,8.5]{cm1208.1481}) together with the argument for the appearance of the tensor product over~$B$ as in~\cite[Thm.\,6.5]{cr1210.6363}. Similarly, part~(iii) follows from pivotality and an application of the Zorro moves. 
\end{proof}

Finally we show that left and right defect actions are adjoint with respect to the bulk pairings~\eqref{eq:genorbbulkpairing} under the assumption~\eqref{eq:oriflip}. The interpretation is as in the unorbifolded case: wrapping a defect around one field insertion in a two-point correlator on the sphere is isotopic to wrapping the defect around the other insertion point with opposite orientation; see e.\,g.~\cite[Rem.\,8.4]{cm1208.1481} for an illustration. 
The precise statement is: 

\begin{proposition}
For $\alpha \in \Hrr$ and $\beta \in \mathcal{H}_{\text{RR}}^B$ we have 
\begin{align}
\Big\langle \beta, \mathcal D_{\text{r}}(X) (\alpha) \Big\rangle_{(b,B)}
& = 
\Big\langle \mathcal D_{\text{l}}(X) (\beta) , \alpha \Big\rangle_{(a,A)} 
\, , \label{eq:lassolemma1} \\
\Big\langle \alpha, \mathcal D_{\text{l}}(X) (\beta) \Big\rangle_{(a,A)}
& = 
\Big\langle \mathcal D_{\text{r}}(X) (\gamma^{-1}_A \circ \alpha) , \gamma^{-1}_B \circ \beta \Big\rangle_{(b,B)} \, . 
\label{eq:lassolemma2}
\end{align}
\end{proposition}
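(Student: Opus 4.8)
The plan is to establish \eqref{eq:lassolemma1} by a direct string-diagram isotopy on the sphere, and then to obtain \eqref{eq:lassolemma2} from it together with the twisted symmetry \eqref{eq:bulkPairingTwistedSymmetry} of the bulk pairing. First I would unfold the left-hand side of \eqref{eq:lassolemma1}: by the definitions \eqref{eq:genorbbulkpairing} and \eqref{eq:rightdefectorbiaction}, $\langle \beta, \mathcal D_{\text{r}}(X)(\alpha)\rangle_{(b,B)}$ is the value of a closed diagram on the two-punctured sphere, in which $\alpha$ sits inside a small $X$-loop that, via the $A$-module structure of $X$, is absorbed into the loop, the loop emits a $B$-strand through the $B$-module structure, this $B$-strand is multiplied onto the $\beta$-insertion, and everything is closed off by the $B$-counit coming from $\langle -\rangle_b$. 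The $X$-loop enclosing $\alpha$ bounds a contractible disc; the key geometric fact, exactly as in the unorbifolded ``lasso'' picture \cite[Rem.\,8.4]{cm1208.1481}, is that on the sphere this loop may be isotoped past the $\beta$-insertion until it encircles $\beta$ instead, at the cost of reversing its orientation.

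I would then use pivotality of $\B$ together with the compatibility assumption \eqref{eq:oriflip} to re-express the orientation-reversed lasso: orientation reversal of a defect loop on the sphere is absorbed by \eqref{eq:oriflip}, and pivotality lets one straighten the resulting cups and caps so that the loop now acts on $\beta$ in precisely the way \eqref{eq:leftdefectorbiaction} prescribes. The outcome is the closed diagram defining $\langle \mathcal D_{\text{l}}(X)(\beta),\alpha\rangle_{(a,A)}$, which proves \eqref{eq:lassolemma1}. The step that requires genuine care is tracking the $B$-$A$-bimodule structure maps of $X$ through this isotopy: pulling the loop to the other hemisphere reorganises the two module-action vertices, the (co)multiplications and (co)units of $A$ and $B$, and the pivotal cups/caps, and one must check that what remains is the defining diagram of $\mathcal D_{\text{l}}(X)$. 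This is done using the bimodule axioms, the Frobenius relations \eqref{eq:FrobeniusProperty}, separability \eqref{eq:separability} to collapse the bubbles produced when the $A$- and $B$-strands are slid, and \eqref{eq:rightPhi} for the pieces transported around adjunctions; it is the analogue at the level of generalised orbifolds of the final manipulations in the proof of Proposition~\ref{prop:Cardy}. I expect this bookkeeping to be the main obstacle, but it is entirely mechanical. (One already knows from the argument around \eqref{eq:loopinvariant} that both sides are invariant under $\pirr$, so it suffices to work with $\alpha\in\Hrr$, $\beta\in\mathcal{H}_{\text{RR}}^B$ as given.)

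For \eqref{eq:lassolemma2} I would first note that by \eqref{eq:bulkPairingTwistedSymmetry}, $\langle\alpha,\mathcal D_{\text{l}}(X)(\beta)\rangle_{(a,A)} = \langle\gamma_A\circ\mathcal D_{\text{l}}(X)(\beta),\,\alpha\rangle_{(a,A)}$, and that a short diagrammatic check — using that $\gamma_A$ and $\gamma_B$ are built from the very Frobenius data entering the connection maps of $\mathcal D_{\text{l}}(X)$ — gives $\gamma_A\circ\mathcal D_{\text{l}}(X)(\beta)=\mathcal D_{\text{l}}(X)(\gamma_B\circ\beta)$. Applying \eqref{eq:lassolemma1} with $\beta$ replaced by $\gamma_B\circ\beta$ turns this into $\langle\gamma_B\circ\beta,\,\mathcal D_{\text{r}}(X)(\alpha)\rangle_{(b,B)}$, and a final use of \eqref{eq:bulkPairingTwistedSymmetry} in $(b,B)$, rewriting the $\gamma$'s via their automorphism property and substituting $\alpha\mapsto\gamma_A^{-1}\circ\alpha$, $\beta\mapsto\gamma_B^{-1}\circ\beta$, yields $\langle\mathcal D_{\text{r}}(X)(\gamma_A^{-1}\circ\alpha),\,\gamma_B^{-1}\circ\beta\rangle_{(b,B)}$, as claimed. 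Alternatively, \eqref{eq:lassolemma2} can be proved by repeating the isotopy argument above directly on $\langle\alpha,\mathcal D_{\text{l}}(X)(\beta)\rangle_{(a,A)}$, where the asymmetric placement of the $B$-counit relative to the $\beta$-insertion is exactly what produces the explicit Nakayama twists; in the symmetric case $\gamma_A=1_A$, $\gamma_B=1_B$ both identities of course reduce to the familiar statement that left and right defect actions are adjoint.
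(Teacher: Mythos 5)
Your strategy for \eqref{eq:lassolemma1} is the paper's: write out the pairing, and move the $X$-loop from one insertion to the other across the sphere using \eqref{eq:oriflip}. The paper's actual computation is leaner than you anticipate, though: after rewriting the $B$-line attached to $\beta$ via the comodule action \eqref{eq:comoduleAction}, a single application of the Frobenius property \eqref{eq:FrobeniusProperty} collapses the (co)multiplication/(co)unit cluster, and then \eqref{eq:oriflip} exchanges interior and exterior in one stroke. No separability, no \eqref{eq:rightPhi}, and no tracking of the bimodule vertices through an isotopy is needed -- the ``bookkeeping'' you flag as the main obstacle essentially does not arise.

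For \eqref{eq:lassolemma2} you deviate from the paper, and here there is a gap: your route hinges on the intertwining identity $\gamma_A \circ \mathcal D_{\text{l}}(X)(\beta) = \mathcal D_{\text{l}}(X)(\gamma_B \circ \beta)$, which you dismiss as ``a short diagrammatic check''. That identity is essentially equivalent to \eqref{eq:lassolemma2} itself (given \eqref{eq:lassolemma1} and nondegeneracy of the Ramond pairing), so it cannot be waved through -- it is the entire content of the second statement and would need its own diagrammatic proof. The paper avoids it altogether: since $\gamma_A$ is an algebra morphism and the counit is $\gamma_A$-invariant, one has $\langle \gamma_A \circ x, y\rangle_{(a,A)} = \langle x, \gamma_A^{-1}\circ y\rangle_{(a,A)}$ directly from \eqref{eq:AlgebraMorphism}; combining this with \eqref{eq:bulkPairingTwistedSymmetry} shifts the twist onto $\alpha$, after which \eqref{eq:lassolemma1} applies to $\langle \mathcal D_{\text{l}}(X)(\beta), \gamma_A^{-1}\circ\alpha\rangle_{(a,A)}$ and the same two moves in $(b,B)$ finish the argument. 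Either supply a genuine proof of your intertwining lemma (your ``alternative'' of redoing the isotopy directly would in effect do this), or adopt the paper's shorter chain, which uses only facts already established.
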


\begin{proof}
To prove~\eqref{eq:lassolemma1} we compute
\begin{align*}
\Big\langle \beta, \mathcal D_{\text{r}}(X) (\alpha) \Big\rangle_{(b,B)} 
& 
\eq^{\eqref{eq:genorbbulkpairing}}_{\eqref{eq:comoduleAction}}
\left\langle \, 
\begin{tikzpicture}[very thick,scale=0.6,color=green!50!black, baseline]
\draw[color=blue!50!black] (0,0) circle (1.5);
\draw[->, color=blue!50!black, very thick] (0.100,-1.5) -- (-0.101,-1.5) node[above] {}; 
\draw[->, color=blue!50!black, very thick] (-0.100,1.5) -- (0.101,1.5) node[below] {}; 
\fill (0:0) circle (3.3pt) node[right] {{\small$\alpha$}};
\fill (135:1.5) circle (3.3pt) node[left] {};
\draw[color=green!50!black] (0,0) .. controls +(0,0.6) and +(0.4,-0.4) .. (135:1.5);
\fill (155:1.5) circle (3.3pt) node[left] {};
\draw[color=green!50!black] (-2,0) .. controls +(0,0.4) and +(-0.4,-0.2) .. (155:1.5);
\draw[-dot-] (-2,0) .. controls +(0,-1) and +(0,-1) .. (-3,0);
\draw[-dot-] (-3,0) .. controls +(0,1) and +(0,1) .. (-4,0);
\draw (-4,0) -- (-4,-1);
\fill (-4,-1) circle (3.3pt) node[right] {{\small$\beta$}};
\draw (-2.5,-1.2) node[Odot] (D) {}; 
\draw (-2.5,-0.8) -- (D);
\draw (-3.5,1.2) node[Odot] (D) {}; 
\draw (-3.5,0.8) -- (D);
\end{tikzpicture} 
\,  \right\rangle_{\raisemath{10pt}{\!\!\!b}}
\eq^{\eqref{eq:FrobeniusProperty}}
\left\langle 
\begin{tikzpicture}[very thick,scale=0.6,color=green!50!black, baseline]
\draw[color=blue!50!black] (0,0) circle (1.5);
\draw[->, color=blue!50!black, very thick] (0.100,-1.5) -- (-0.101,-1.5) node[above] {}; 
\draw[->, color=blue!50!black, very thick] (-0.100,1.5) -- (0.101,1.5) node[below] {}; 
\fill (0:0) circle (3.3pt) node[right] {{\small$\alpha$}};
\fill (135:1.5) circle (3.3pt) node[left] {};
\draw[color=green!50!black] (0,0) .. controls +(0,0.6) and +(0.4,-0.4) .. (135:1.5);
\fill (155:1.5) circle (3.3pt) node[left] {};
\draw[color=green!50!black] (-2,0) .. controls +(0,0.4) and +(-0.4,-0.2) .. (155:1.5);
\fill (-2,0) circle (3.3pt) node[left] {{\small$\beta$}};
\end{tikzpicture} 
\,  \right\rangle_{\raisemath{10pt}{\!\!\!b}}
\\
& 
\eq^{\eqref{eq:oriflip}}
\left\langle \, 
\begin{tikzpicture}[very thick,scale=0.6,color=green!50!black, baseline]
\draw[color=blue!50!black] (0,0) circle (1.5);
\draw[<-, color=blue!50!black, very thick] (0.100,-1.5) -- (-0.101,-1.5) node[above] {}; 
\draw[<-, color=blue!50!black, very thick] (-0.100,1.5) -- (0.101,1.5) node[below] {}; 
\fill (45:1.5) circle (3.3pt) node[left] {};
\fill (22.5:1.5) circle (3.3pt) node[left] {};
\draw (45:1.5) .. controls +(0.3,0) and +(0,0.3) .. (1.9,0.45);
\draw (22.5:1.5) .. controls +(-0.3,0) and +(0,0.3) .. (0,0);
\fill (0,0) circle (3.3pt) node[left] (beta) {{\small $\beta$}};
\fill (1.9,0.45) circle (3.3pt) node[right] (beta) {{\small $\alpha$}};
\end{tikzpicture} 
\!\!
\right\rangle_{\raisemath{10pt}{\!\!\!a}}
=
\Big\langle \mathcal D_{\text{l}}(X) (\beta) , \alpha \Big\rangle_{(a,A)} 
\, . 
\end{align*}
With this~\eqref{eq:lassolemma2} follows from 
\begin{align*}
&
\Big\langle \alpha, \mathcal D_{\text{l}}(X) (\beta) \Big\rangle_{(a,A)} 
\eq^{\eqref{eq:bulkPairingTwistedSymmetry}}
\Big\langle \gamma_A \circ \mathcal D_{\text{l}}(X) (\beta) , \alpha \Big\rangle_{(a,A)}
\eq^{\eqref{eq:AlgebraMorphism}}
\Big\langle \mathcal D_{\text{l}}(X) (\beta) , \gamma_A^{-1} \circ \alpha \Big\rangle_{(a,A)}
\\ 
&\qquad\qquad
\eq^{\eqref{eq:lassolemma1}}
\Big\langle  \beta , \mathcal D_{\text{r}}(X) (\gamma_A^{-1} \circ \alpha) \Big\rangle_{(b,B)}
\eq^{\eqref{eq:bulkPairingTwistedSymmetry}}
\Big\langle \gamma_B \circ \mathcal D_{\text{r}}(X) (\gamma^{-1}_A \circ \alpha) , \beta \Big\rangle_{(b,B)} 
\\ 
&\qquad\qquad
\eq^{\eqref{eq:AlgebraMorphism}}
\Big\langle \mathcal D_{\text{r}}(X) (\gamma^{-1}_A \circ \alpha) , \gamma^{-1}_B \circ \beta \Big\rangle_{(b,B)} 
\, . 
\end{align*}
\end{proof}

\begin{remark}\label{rem:twistedpivotal}
The basic assumption we made in this section is that the bicategory~$\B$ be pivotal. On the other hand, our motivating example of Landau-Ginzburg models gives rise to a bicategory~$\LG$ that is not quite pivotal, but only ``graded pivotal'' as discussed in~\cite[Sect.\,7]{cm1208.1481}. Similarly, the bicategory of B-twisted sigma models studied in~\cite{cw1007.2679} as well as the equivariant completion of~\cite{cr1210.6363} are ``pivotal up to a twist by the Serre functor''. We could have developed the theory in this more general setting; indeed, every result in this section can straightforwardly be strengthened to the twisted pivotal case by enriching string diagrams with the ``wiggly line calculus'' for Serre functors, explained in detail in the Landau-Ginzburg setting in~\cite[Sect.\,7--9]{cm1208.1481}. We chose not to present these merely technical, non-conceptual details in an attempt not to distract the reader from our construction's naturality and simplicity. 
\end{remark}

\subsubsection*{Acknowledgements} 

We thank Michael Kay, Dan Murfet, and Ingo Runkel for helpful discussions.

\appendix

\section{Some technical details}\label{TheAppendix}

\subsection{Equivalent descriptions of orbifold spaces}\label{app:HHPrime}

In this appendix we show that the bulk orbifold spaces $\Hil$ and $\Hil'$, defined in~\eqref{eq:orbiproj} and~\eqref{eq:orbiprojPrime}, respectively, are isomorphic to each other. We construct this isomorphism explicitly, which will allow us to map twisted states from one picture to another. Since obviously $\Hil=\Hil'$ for an abelian orbifold group $G$, the case of interest is when $G$ is nonabelian.

Let us fix a set of representatives $\{g\}$ for the conjugacy classes $\Cong$ in $G$. We will proceed by constructing an embedding $\alpha: H' \rightarrow H$ of the unprojected bulk space 
$$
H'=\bigoplus_{\{g\}} H_g
$$
into the space $H=\bigoplus_{g\in G}H_g$, such that $\alpha$ intertwines the orbifold projectors on~$H$ and~$H'$, i.\,e.~$\alpha P'=P \alpha$, and such that $\alpha(\Hil')=\Hil$. There exist several different embeddings satisfying these properties, but as we will show, they all agree on the invariant subspace $\Hil'$, giving rise to a canonical isomorphism $\Hil' \cong \Hil$.

To construct the embedding, for every element $g\in G$ we define the map
$$
\alphagKg: H_g \lra \bigoplus_{g' \in \Cong} H_{g'} 
\, , \quad 
|\phi\rangle^g \lmt \sum_{k\in K(g)} k |\phi\rangle^g \, ,
$$
where $K(g)$ denotes a set of representatives of $G/\Centg \cong \Cong$. Note that $G/\Centg$ can be equivalently described as
$$
G/\Centg=G/\!\!\sim \quad \text{with} \quad k_1 \sim k_2 \quad \text{iff} \quad k_1 g k_1^{-1}=k_2 g k_2^{-1}\, ,
$$
since $k_1 \sim k_2$ if and only if $k_1=k_2 l$ for some $l \in \Centg$. 
We now define the map
$$
\alphaKK: H' \lra H 
\, , \quad 
\alphaKK=\sum_{\{g\}}\alphagKg \, ,
$$
where for every element in $\{g\}$ we choose a set of representatives $K(g)$ of  $G/\Centg$, and we denote this choice collectively by $\{K\}$. Since $\alphaKK$ is injective we indeed obtain an embedding of $H'$ into $H$.

\begin{lemma}
For any choice of representatives $\{K\}$ the map $\alphaKK$ intertwines the projectors on $H$ and $H'$: 
$$
\alphaKK P'=P\alphaKK \, .
$$
\end{lemma}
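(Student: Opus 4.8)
The claim is the intertwining identity $\alphaKK P' = P \alphaKK$ on $H'$, where $P' = \sum_{\{g\}} P'_g$ with $P'_g = \frac{1}{|N_g|}\sum_{h\in N_g} h$, and $P = \sum_{g\in G} P_g$ with $P_g = \frac{1}{|G|}\sum_{h\in G} h|_{H_g}$. Since both sides are direct sums over conjugacy classes, it suffices to fix a representative $g$ and check the identity on the summand $H_g$, i.e. to show that $\alphagKg \circ P'_g = P \circ \alphagKg$ as maps $H_g \to H$. The plan is to evaluate both sides on an arbitrary state $|\phi\rangle^g \in H_g$ and compare, using the bookkeeping $G = \bigsqcup_{k\in K(g)} k N_g$ together with the fact that for $h\in G$ the action $h$ sends a $g$-twisted state to an $hgh^{-1}$-twisted state (as recalled around~\eqref{eq:orbiprojPrime}).

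\textbf{Key steps.} First I would compute the right-hand side: $P\,\alphagKg(|\phi\rangle^g) = \frac{1}{|G|}\sum_{h\in G}\sum_{k\in K(g)} h k |\phi\rangle^g$, where $h$ acts on the $kgk^{-1}$-twisted sector. Reindex the sum over $h\in G$ by writing each $h$ uniquely as $h = k' n k^{-1}$ is not quite right; instead, for fixed $k$, substitute $h \mapsto h k$ so that $\sum_{h\in G} h k = \sum_{h\in G} h$, giving $P\,\alphagKg(|\phi\rangle^g) = \frac{|K(g)|}{|G|}\sum_{h\in G} h|\phi\rangle^g = \frac{1}{|N_g|}\sum_{h\in G} h|\phi\rangle^g$, using $|K(g)| = |G|/|N_g|$. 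Second, compute the left-hand side: $\alphagKg(P'_g |\phi\rangle^g) = \frac{1}{|N_g|}\sum_{n\in N_g}\sum_{k\in K(g)} k n |\phi\rangle^g$. Third, observe that the double sum $\sum_{k\in K(g)}\sum_{n\in N_g} kn$ ranges exactly once over every element of $G$, by the coset decomposition $G = \bigsqcup_{k} k N_g$; hence $\alphagKg(P'_g|\phi\rangle^g) = \frac{1}{|N_g|}\sum_{h\in G} h|\phi\rangle^g$. The two expressions agree, which proves the identity on $H_g$, and summing over the chosen representatives $\{g\}$ yields $\alphaKK P' = P \alphaKK$ on all of $H'$.

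\textbf{Main obstacle.} The only subtlety is keeping the twisted-sector labels straight: in the expression $P\,\alphagKg$, the operator $P$ must be applied componentwise, so the piece of $P$ that acts nontrivially on $k|\phi\rangle^g$ is $P_{kgk^{-1}}$, and one must check that the reindexing $h \mapsto hk$ is compatible with this — i.e. that $h k |\phi\rangle^g$ lands in the $(hk)g(hk)^{-1}$-twisted sector, which it does. Equivalently one can note that $P$ commutes with the $G$-action by left translation in the sense needed, so $P\,\alphagKg$ picks up the full sum $\sum_{h\in G} h$ regardless of which coset representative $k$ one started from. This is routine once the conventions are pinned down; I would carry it out in a single short display. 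No deeper idea is required — the statement is essentially the observation that averaging over $N_g$ and then spreading over the conjugacy class equals averaging over all of $G$.
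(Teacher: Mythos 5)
Your proposal is correct and follows essentially the same route as the paper: reduce to each summand $H_g$, reindex the sum $\sum_{h\in G}\sum_{k\in K(g)} hk$ via $h\mapsto hk$ on one side, and use the coset decomposition $G=\bigsqcup_{k\in K(g)} k N_g$ on the other, so that both sides equal $\frac{1}{|N_g|}\sum_{h\in G} h$ on $H_g$. The counting identities you invoke ($|K(g)|=|C_g|=|G|/|N_g|$) are exactly those used in the paper's computation.
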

\begin{proof}
It suffices to prove that $\alphagKg P'_g = P \alphagKg$ for every $g\in G$. We compute
\begin{align*}
P \alphagKg=\frac{1}{|G|}\sum_{\substack{h \in G,\\ k \in K(g)}} \!\! h k = \frac{1}{|G|} \sum_{\substack{h' \in G,\\ k \in K(g)}} \!\! h' = \frac{|\Cong|}{|G|}\sum_{h'\in G} h'= \frac{1}{|\Centg|} \sum_{\substack{k \in K(g),\\ l \in \Centg}} \!\! kl= \alphagKg P'_g \, ,
\end{align*}
where we used the substitution $h'= hk$, the fact that $|K(g)|=|\Cong|$, and that any $h\in G$ can be written as $h=kl$ for some $k\in K(g)$ and $l \in \Centg$.
\end{proof}
In particular, this implies $\alphaKK(\Hil')\subset \Hil$. While the embedding $\alphaKK$ depends on the choice of representatives $\{K\}$, its restriction to $\Hil'$ is independent of this choice, as the next lemma shows.

\begin{lemma}
For any two choices of representatives $\{K\}$ and $\{K'\}$, one has
$$
\alphaKK\big|_{\Hil'} =\alpha^{\{K'\}}\big|_{\Hil'} \, .
$$
\end{lemma}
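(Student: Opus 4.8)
The claim is that the restriction of $\alphaKK$ to the invariant subspace $\Hil'$ does not depend on the choice of representatives $\{K\}$ for the various quotients $G/\Centg$. The plan is to reduce, as in the previous lemma, to a single conjugacy class: it suffices to show that for each representative $g$ of a conjugacy class, the maps $\alphagKg$ and $\alpha_g^{K'(g)}$ agree on the image of the projector $P'_g$, i.e.\ on the $g$-invariant states in $H_g$. So fix $g$ and let $|\phi\rangle^g \in H_g$ be $\Centg$-invariant, meaning $l|\phi\rangle^g = |\phi\rangle^g$ for all $l\in\Centg$ (this is what lies in $\im P'_g$, since $P'_g = \frac{1}{|\Centg|}\sum_{l\in\Centg} l$).

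The key step is the following bookkeeping. Both $K(g)$ and $K'(g)$ are sets of representatives for $G/\Centg$, so they have the same cardinality $|\Cong|$, and there is a bijection $K(g)\to K'(g)$, $k\mapsto k'$, characterised by $k' \in k\Centg$, i.e.\ $k' = k\, l_k$ for a unique $l_k\in\Centg$. Then
\be
\alpha_g^{K'(g)}\big(|\phi\rangle^g\big)
= \sum_{k'\in K'(g)} k'\,|\phi\rangle^g
= \sum_{k\in K(g)} k\, l_k\, |\phi\rangle^g
= \sum_{k\in K(g)} k\, |\phi\rangle^g
= \alphagKg\big(|\phi\rangle^g\big)\,,
\ee
where the third equality uses precisely the $\Centg$-invariance $l_k|\phi\rangle^g = |\phi\rangle^g$. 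Summing over the chosen representatives $\{g\}$ of conjugacy classes then gives $\alphaKK\big|_{\Hil'} = \alpha^{\{K'\}}\big|_{\Hil'}$, as desired.

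I do not expect any genuine obstacle here; the only point requiring a little care is to be clear about \emph{which} invariance is being used — namely invariance under the centraliser $\Centg$ (equivalently, under the projector $P'_g$), and that this is exactly what compensates for the ambiguity $k'\in k\Centg$ in the choice of coset representative. Once this is in place, the previous two lemmas combine to yield a canonical isomorphism $\Hil'\cong\Hil$: the embedding $\alphaKK$ restricts to a map $\Hil'\to\Hil$ independent of $\{K\}$, it is injective, and a dimension count (or an explicit inverse, averaging over $G$ and picking out one representative per conjugacy class) shows it is onto $\Hil$.
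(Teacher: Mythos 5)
Your proof is correct and follows essentially the same route as the paper: both arguments observe that any two representatives of the same coset in $G/\Centg$ differ by an element of $\Centg$, which acts trivially on states in the image of $P'_g$, so the two sums agree term by term. The extra bookkeeping you spell out (the bijection $k \mapsto k' = k\,l_k$) is implicit in the paper's one-line computation.
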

\begin{proof}
Given an element $g\in G$, a vector $|\phi\rangle^g \in P'_g H'_g$, and two representatives $k\in K(g)$ and $k'\in K'(g)$ of a class in $G/\Centg$, we find
$$
k|\phi\rangle^g= k' |\phi\rangle^g \, ,
$$
since $k'=k l$ for some $l\in \Centg$, and $l |\phi\rangle^g= |\phi\rangle^g$. It then directly follows that
$$
\alphaKK |\phi\rangle = \sum_{\{g\}}\sum_{k\in K(g)} k |\phi\rangle= \sum_{\{g\}}\sum_{k'\in K'(g)} k' |\phi\rangle = \alpha^{\{K'\}}  |\phi\rangle \,
$$
for any $|\phi\rangle \in \Hil'$.
\end{proof}

This result allows us to define the map
$$
\alpha_0: \Hil' \lra \Hil 
\, , \quad \alpha_0=\alphaKK\big|_{\Hil'} \, ,
$$
which provides the desired isomorphism: 
\begin{proposition}
The map $\alpha_0$ is an isomorphism between $\Hil'$ and $\Hil$.
\end{proposition}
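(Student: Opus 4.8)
The plan is to show that $\alpha_0$ is injective and surjective, with most of the work already done by the preceding two lemmas. Injectivity is immediate: $\alpha_0$ is the restriction of $\alpha^{\{K\}}$ to $\Hil'$, and $\alpha^{\{K\}}$ was constructed as an injection $H' \hookrightarrow H$ (being a direct sum of the maps $\alpha_g^{K(g)}$, each of which is injective since the $k|\phi\rangle^g$ for distinct classes $k \in K(g)$ land in distinct summands $H_{kgk^{-1}}$). Restriction of an injection is an injection, so $\alpha_0$ is injective. Moreover, by the first lemma $\alpha^{\{K\}}$ intertwines the projectors, $\alpha^{\{K\}}P' = P\alpha^{\{K\}}$, which gives $\alpha^{\{K\}}(\Hil') \subseteq \Hil$; hence $\alpha_0$ is well-defined as a map $\Hil' \to \Hil$.

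The remaining point is surjectivity of $\alpha_0$ onto $\Hil = PH$. First I would show that $\im(\alpha_0) \subseteq \Hil$ refines to an equality by a dimension/averaging argument. The cleanest route: construct an explicit left inverse on $\Hil$. Define $\beta: H \to H'$ by projecting onto the summands indexed by the chosen representatives $\{g\}$, i.e.\ $\beta\big(\sum_{g\in G}|\phi\rangle^g\big) = \sum_{\{g\}} |\phi\rangle^g$ (keep only the component in $H_g$ for $g$ in the representative set). Then for $|\phi\rangle \in \Hil'$ one has $\beta\,\alpha^{\{K\}}|\phi\rangle = |\phi\rangle$, since among the terms $k|\phi\rangle^g$, $k \in K(g)$, exactly the one with $k = e$ (after a choice of $K(g)$ containing the identity — or, using the previous lemma, for any choice, because the $g$-component equals $|\phi\rangle^g$) lands in $H_g$. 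Thus $\beta \circ \alpha_0 = \id_{\Hil'}$, so $\alpha_0$ is injective (re-derived) and $\beta$ restricted to $\Hil$ is surjective onto $\Hil'$.

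To finish surjectivity of $\alpha_0$ itself, I would show $\alpha_0 \circ (\beta|_\Hil) = \id_\Hil$. Take $|\psi\rangle \in \Hil = PH$; write $|\psi\rangle = P|\chi\rangle$ for some $|\chi\rangle \in H$. Using $P = \sum_g P_g$ and the conjugation-equivariance of the twisted sectors (a field in $H_g$ is mapped by $h$ to $H_{hgh^{-1}}$, and $h|_{H_g}$ is an isomorphism $H_g \xrightarrow{\sim} H_{hgh^{-1}}$), a $P$-invariant vector is completely determined by its components in the summands $H_g$ for $g$ ranging over the representative set $\{g\}$: indeed the component in $H_{h g h^{-1}}$ equals $h$ applied to the component in $H_g$ divided by the appropriate orbit count, precisely the content of the computation $P\,\alpha_g^{K(g)} = \alpha_g^{K(g)} P'_g$. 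Spelling this out, $P|\chi\rangle = \alpha^{\{K\}}\big(P'(\beta|\chi\rangle)\big)$ by the intertwining lemma, and since $P'(\beta|\chi\rangle) \in \Hil'$ this exhibits $|\psi\rangle$ as $\alpha_0$ of an element of $\Hil'$. Hence $\alpha_0$ is onto.

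The main obstacle is bookkeeping rather than conceptual depth: one must be careful that the ``keep the representative component'' map $\beta$ is genuinely well-defined independently of incidental choices, and that the identity $P|\chi\rangle = \alpha^{\{K\}}(P'\beta|\chi\rangle)$ really follows from $\alpha^{\{K\}}P' = P\alpha^{\{K\}}$ together with $\alpha^{\{K\}}\beta|_{\im P} = \id$ — i.e.\ that $\beta$ composed with $\alpha^{\{K\}}$ recovers a $P$-invariant vector from its representative components. This is exactly where the conjugation-equivariance of the $H_g$ and the unitarity (so that $h|_{H_g}$ are isomorphisms) enter, mirroring the role these assumptions play in the statement that $\Hil \cong \Hil'$; no new input beyond the two lemmas and these structural facts is needed. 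I would present the argument in the two-sided-inverse form above, as it is the shortest fully rigorous route.
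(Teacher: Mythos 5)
Your injectivity argument, the well-definedness of $\alpha_0$ as a map into $\Hil$, and the left-inverse identity $\beta\circ\alpha_0=\id_{\Hil'}$ are all correct. The surjectivity step, however, contains a genuine error: the identity $P|\chi\rangle=\alphaKK\big(P'\beta|\chi\rangle\big)$ is false in general and does not follow from the intertwining lemma. That lemma gives $\alphaKK P'\beta|\chi\rangle = P\,\alphaKK\beta|\chi\rangle$, so your claim amounts to $P\big(\alphaKK\beta-1\big)|\chi\rangle=0$. But $\beta$ simply discards every component of $|\chi\rangle$ lying in a non-representative sector $H_{g'}$ with $g'\in\Cong\setminus\{g\}$, and those components do contribute to $P|\chi\rangle$: if $|\chi\rangle$ is supported entirely in such an $H_{g'}$, then $\beta|\chi\rangle=0$ while $P|\chi\rangle$ is generically nonzero, so the proposed preimage $P'\beta|\chi\rangle$ does not map to $|\psi\rangle$.

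The repair is small but necessary: apply $\beta$ to the already projected vector $|\psi\rangle=P|\chi\rangle$ rather than to $|\chi\rangle$. Since $hP=P$, every $|\psi\rangle\in\Hil$ satisfies $h|\psi\rangle=|\psi\rangle$ for all $h\in G$; comparing homogeneous components gives $|\psi\rangle^{kgk^{-1}}=k\,|\psi\rangle^{g}$ for $k\in K(g)$ and the $\Centg$-invariance of $|\psi\rangle^{g}$, whence $\beta|\psi\rangle\in\Hil'$ and $\alphaKK\beta|\psi\rangle=\sum_{\{g\}}\sum_{k\in K(g)}|\psi\rangle^{kgk^{-1}}=|\psi\rangle$. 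This gives the two-sided inverse you were after. The paper's proof avoids $\beta$ altogether: it writes each component $|\phi\rangle^{kgk^{-1}}$ of an arbitrary $|\phi\rangle\in H$ as $k\,|\phi'_k\rangle^{g}$ and aggregates \emph{all} of them into $|\psi\rangle^g=\tfrac{|\Centg|}{|G|}\sum_{k\in K(g)}|\phi'_k\rangle^g$ before applying $P'$ --- precisely the contributions your $\beta$, applied to the unprojected vector, throws away.
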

\begin{proof}
We already know that $\alpha_0$ is an embedding of $\Hil'$ into $\Hil$, so we only need to show that its image is the whole of $\Hil$. Let us take an arbitrary vector $|\phi\rangle \in H$. We wish to show that $P|\phi\rangle \in \im(\alpha_0)$. For this, we choose a set of representatives $K(g)$ of $G/\Centg$ for every element in $\{g\}$, and decompose $|\phi\rangle$ into homogeneous components as
$$
|\phi\rangle=\sum_{\{g\}}\sum_{k\in K(g)} |\phi \rangle^{kgk^{-1}} \, .
$$
Further, each $|\phi \rangle^{kgk^{-1}}\in H_{kgk^{-1}}$ can be expressed as $|\phi \rangle^{kgk^{-1}}=k |\phi_k' \rangle^g$ for some $|\phi_k' \rangle^g\in H_g$, so we can write
$$
|\phi\rangle=\sum_{\{g\}}\sum_{k\in K(g)} k |\phi_k' \rangle^g \, .
$$
Next we compute
\begin{align*}
P |\phi\rangle &= \frac{1}{|G|} \sum_{\{g\}}\sum_{\substack{h\in G, \\ k\in K(g)}} h k |\phi_k' \rangle^g = \frac{1}{|G|} \sum_{\{g\}}\sum_{\substack{h\in G, \\ k\in K(g)}} h |\phi_k' \rangle^g=  \frac{1}{|G|} \sum_{\{g\}} \sum_{\substack{k, k' \in K(g), \\ l \in \Centg}} k' l |\phi_k' \rangle^g\\
&=  \sum_{\{g\}} \sum_{k\in K(g)} k \, \frac{1}{|\Centg|} \sum_{l \in \Centg} l |\psi\rangle^g = \sum_{\{g\}} \sum_{k\in K(g)} k P'_g |\psi\rangle^g= \alpha_0 P' \Big(\sum_{\{g\}}|\psi\rangle^g\Big)
\end{align*}
where 
$$
|\psi\rangle^g=\frac{|\Centg|}{|G|}\sum_{k\in K(g)} |\phi'_k\rangle^g \, .
$$
Hence, $P|\phi\rangle\in \im(\alpha_0)$ as we wanted to show.
\end{proof}

\subsection{Orbifold projection via defects}\label{app:RRprojDetails}

Our orbifold projectors on (c,c) fields and RR ground states are given by~\eqref{eq:piccAG} and~\eqref{eq:piRRAG}, respectively. In this appendix we determine their action explicitly in the case of A-type minimal models (and hence also their tensor products). This reproduces the orbifold projectors in the conventional description reviewed in Section~\ref{subsec:conventionalorbi}.

Consider the minimal model with potential $W=x^d$ and orbifold group $G=\Z_d$, which acts as $x \mapsto \eta^g x$ with $\eta=\E^{2\pi \I/d}$ for $g\in \Z_d$. First, we focus on the projector on (c,c) fields. In the defect approach, the action of a group element $h\in G$ on a bulk field in the $g$-twisted sector $\phi_g \in \Hom(I, {}_g I)$ is implemented by
$$
\phi_g \lmt {}_{\raisemath{-1pt}{h}} (\lambda_{{}_{g h^{-1}} I}) \circ \big(1_{{}_h I} \otimes \big( \; {}_{\raisemath{-1pt}{g}} (\lambda_{{}_{h^{-1}} I}) \circ (\phi_g \otimes 1_{{}_{h^{-1}} I}) \circ \lambda^{-1}_{{}_{h^{-1}} I} \big) \big) \circ {}_{\raisemath{-2pt}{h}} \big(\lambda^{-1}_{{}_{h^{-1}} I}\big) \, .
$$
This action can be determined explicitly using the expression for the inverse left unit action $\lambda^{-1}$ given in~\eqref{eq:lambdainverse}. The spaces $\Hom(I, {}_g I)$ for $g \neq 0$ are in the present case one-dimensional~\cite{bg0503207}, spanned by 
$$
\phi_g=\theta^* - \frac{x^d-x'^d}{(x-x')(\eta^g x-x')} \cdot \theta \, ,
$$
while $\End(I)$ is spanned by monomials $x^l$ with $l < d-1$. We then compute
\begin{align*}
{}_{\raisemath{-1pt}{g}} (\lambda_{{}_{h^{-1}} I}) \circ (\phi_g \otimes 1_{{}_{h^{-1}} I}) \circ \lambda^{-1}_{{}_{h^{-1}} I} \; (e_i) &= \sum_j \left\{ \partial^{x,x'}_{[1]} d_{{}_{h^{-1}} I} (x,z') \right\}_{ji} \Big|_{x' \mapsto \eta^g x} \otimes e_j \,, \\
(\lambda_{{}_{h^{-1}} I}) \circ (x^l \otimes 1_{{}_{h^{-1}} I}) \circ \lambda^{-1}_{{}_{h^{-1}} I} \; (e_i) &= x^l e_i \,,
\end{align*}
where $x,x'$ are the variables inside the region enclosed by $A_G$ in \eqref{eq:piccAG}, while $z,z'$ denote variables outside of this region. The action of $h$ on the bulk fields is thus given by
\begin{align*}
\phi_g &\lmt \Big( \big(\partial^{x,x'}_{[1]} d_{{}_{h^{-1}} I} (x,z') \big) \big|_{x' \mapsto \eta^g x} \Big) \Big|_{x \mapsto \eta^h z}= \eta^{-h} \phi_g = \det(h)^{-1} \, \phi_g \,,\\
x^l &\lmt \eta^{hl} z^l \,.
\end{align*}

In Section~\ref{subsec:gentwist} we show that the orbifold projector on RR ground states is related to the projector on (c,c) fields by the Nakayama automorphism $\gamma_{A_G}= \sum_{h\in G} \det(h) \cdot 1_{{}_h I}$. It immediately follows that the action of an element $h\in G$ on an RR ground state is now given by
\begin{align*}
\phi_g &\lmt \det(h) \eta^{-h} \phi_g = \phi_g \,,\\
x^l &\lmt \det(h) \eta^{hl} z^l \,.
\end{align*}

\subsection{Disc one-point correlators}\label{app:RRchargeComputations}

In Section~\ref{subsec:defectorbi} we explained how disc correlators in Landau-Ginzburg orbifolds can be computed using the defect approach. Here we want to verify that for a single bulk insertion and no boundary insertions, \eqref{eq:orbiDisc} reproduces the RR charge formula~\eqref{eq:WalcherProposal} of \cite{w0412274}. For this, we will study the examples of minimal model branes, generalised permutation branes, and tensor products thereof, where we find that the two approaches indeed yield the same result (up to a suitable normalisation owed to our conventions). Other cases where we checked (using computer algebra) the agreement with~\eqref{eq:WalcherProposal} include all branes in D- and E-type minimal models and several examples of linear matrix factorisations~\cite{err0508}.

In terms of its constituent morphisms the diagram~\eqref{eq:orbiDisc} with $\alpha=\phi_g$ and $\Psi=1$ is given by
\be\label{eq:orbiDiscMorph}
\langle \phi_g \rangle = \text{ev}_Q \circ \big( 1_{Q^\dagger} \otimes \left(\gamma_g \circ {}_{g}(\lambda_Q) \circ (\phi_g \otimes 1_Q) \circ \lambda_Q^{-1}\right) \big)\circ  \tcoev_Q \,.
\ee
This can be evaluated explicitly using the formulas for $\lambda^{-1}$, $\tcoev$, and $\ev$, given in \eqref{eq:lambdainverse}, \eqref{eq:coevtilde}, and \eqref{eq:eval}, respectively. Since $Q$ is a boundary condition, the expressions for adjunction maps simplify to
\begin{align*}
\widetilde{\coev}_Q(1)&=\sum_i (-1)^{|e_i|} e_i^* \otimes e_i \, ,\\
\ev_X( e_i^* \otimes p e_j )&=  \Res \left[ \frac{ p \, \big\{ \partial_{x_1}d_Q\ldots \partial_{x_n}d_Q \big\}_{ij}  \, \operatorname{d}\!x }{\partial_{x_1}W, \ldots, \partial_{x_n} W} \right] .
\end{align*}
We now restrict ourselves to the case when the potential $W$ is of Fermat type and the orbifold group $G$ acts diagonally on the variables $x_i$. The twisted fields $\phi_g \in \Hom(I, {}_g I)$ then take the simple form (see \cite{br0707.0922})
$$
\phi_g=(-1)^{\binom{t}{2}} \, \varphi_g \, \prod_{i=r+1}^n \left(\theta^*_i-\frac{\partial_{[i]}^{x,x'} W}{g(x_i)-x'_i} \cdot \theta_i\right) ,
$$
where for convenience we introduced the factor $(-1)^{\binom{t}{2}}$ with $t=n-r$, and we reorder the variables~$x_i$ so that the first~$r$ are $g$-invariant. The term $\varphi_g$ is a polynomial in the ring $\C[\overline x_1, \ldots, \overline x_r]/(\partial_{\overline x_i} \overline W)$ with notation as in Section~\ref{subsec:defectorbi}. Using this form of $\phi_g$ we compute
$$
(\phi_g \otimes 1_Q) \circ \lambda_Q^{-1} (e_i)\big|_{\theta=0} = \sum_j \varphi_g \left\{ \partial^{x,x'}_{[t]}d_Q \ldots \partial^{x,x'}_{[1]}d_Q \right\}_{ji}  \otimes e_j \,,
$$
so we have
$$
{}_{g}(\lambda_Q) \circ (\phi_g \otimes 1_Q) \circ \lambda_Q^{-1} (e_i) = \sum_j \varphi_g \left\{ \partial^{x,x'}_{[t]}d_Q \ldots \partial^{x,x'}_{[1]}d_Q \right\}_{ji}\Big|_{x' \mapsto g(x)} \, e_j\,.
$$
We thus obtain the following expression for \eqref{eq:orbiDiscMorph} in the Fermat case:
\be\label{eq:orbiDiscFermat}
\langle \phi_g \rangle = \Res \left[ \frac{ \str \Big(\partial_{x_1}d_Q \ldots \partial_{x_n}d_Q \, \gamma_g \, \varphi_g \left( \partial^{x,x'}_{[t]}d_Q \ldots \partial^{x,x'}_{[1]}d_Q \right)\Big|_{x' \mapsto g(x)} \Big) \operatorname{d}\!x}{\partial_{x_1}W, \ldots, \partial_{x_n} W} \right] .
\ee
We will now show that this formula reproduces~\eqref{eq:WalcherProposal} in the case of minimal model branes and generalised permutation branes. Note that because of the form of \eqref{eq:WalcherProposal} and \eqref{eq:orbiDiscFermat} the results can be directly extended to tensor products and cones of these branes.

\subsubsection*{Minimal model branes}

Consider the example of an A-type minimal model given by a potential $W=x^d$ and orbifold group $G=\Z_d$, and a matrix factorisation $Q$ with differential
$$
d_Q=\begin{pmatrix}0 & x^m \\ x^{d-m} & 0\end{pmatrix}, \quad m \in \{1,\ldots, d-1\} \, .
$$
The generator of $\Z_d$ acts on the variable $x$ by $x \mapsto \eta x$ with $\eta=\E^{2\pi \I/d}$, and its action on $Q$ is represented by the matrix
$$
\gamma=\eta^p \begin{pmatrix}1 & 0\\ 0 & \eta^{m} \end{pmatrix} 
$$
with $p \in \{0,\ldots,d-1\}$ denoting the representation label of~$Q$. Let us now evaluate~\eqref{eq:orbiDiscFermat} in a twisted sector (i.\,e.~for $g\neq 0$ and $\varphi_g=1$). Using that
$$
\partial_x d_Q \, \gamma^g \, (\partial^{x,x'}_{[1]} d_Q) \big|_{x' \mapsto \eta x} =\eta^{g p}  \begin{pmatrix}m x^{d-2} \,\frac{1-\eta^{g(d-m)}}{1-\eta^g} \,\eta^{g m} & 0\\ 0 & (d-m) x^{d-2} \, \frac{1-\eta^{g m}}{1-\eta^g}\, \end{pmatrix},
$$
we obtain
\begin{align*}
\langle \phi_g \rangle &= \mathrm{Res}\left[\frac{\str\big( \partial_x d_Q \, \gamma^g \,  (\partial^{x,x'}_{[1]} d_Q) \big|_{x' \mapsto \eta^g x} \big) \operatorname{d}\!x }{\partial_x W}\right] \\
&=\Res \left[\frac{\eta^{g p} x^{d-2}\frac{m (\eta^{g m}-1)-(d-m)(1-\eta^{g m})}{1-\eta^g} \operatorname{d}\!x }{d x^{d-1}}\right] 
=-\eta^{g p}\frac{1-\eta^{g m}}{1-\eta^{g}}=-\frac{\str\gamma^g}{1-\eta^g} \, .
\end{align*}
This reproduces the result of~\cite{w0412274} up to normalisation of the bulk field, which in our case is
$$
|\phi_g|^2=\big\langle \phi_g , \phi_{g^{-1}} \big\rangle_{(W, A_G)}= \left\langle \frac{d x^{d-2}}{1-\eta^g}\right\rangle_W= \frac{1}{1-\eta^g}\Res \left[ \frac{d x^{d-2} \operatorname{d}\!x }{\partial_x W} \right]= \frac{1}{1-\eta^g} \, ,
$$
where $\langle -,- \rangle_{(W, A_G)}$ is the bulk pairing~\eqref{eq:genorbbulkpairing}. A normalisation factor of this type also appears in all other examples that we have studied.

\subsubsection*{Generalised permutation branes}

The other example we consider is $W=x_1^{d k_1}+x_2^{d k_2}$, with $k_1, k_2$ coprime, and the diagonal action of the orbifold group $G=\Z_H$ with $H=k_1 k_2 d$. We consider linear permutation branes~$Q$ with differential
$$
d_Q=\begin{pmatrix}0 & x_1^{k_1}-\mu x_2^{k_2} \\ \displaystyle \prod_{\mu'\neq \mu, \mu'^d=-1}(x_1^{k_1}-\mu' x_2^{k_2}) & 0\end{pmatrix}
$$
where $\mu^d=-1$. The generator of~$\Z_H$ is represented on~$Q$ by the matrix
$$
\gamma=\omega^p \begin{pmatrix}1 & 0\\ 0 & \omega^{k_1 k_2} \end{pmatrix} 
\quad\text{with}\quad 
\omega=\E^{2 \pi \I/H} \, , \quad
p \in \{ 0, \ldots, H-1 \} \, . 
$$
In a sector with one twisted variable ($t=1$), the correlator $\langle \phi_g \rangle$ vanishes identically, because the expression in the supertrace is odd. For two twisted variables ($t=2$) after a straightforward computation one obtains 
$$
\langle \phi_g \rangle=-\frac{\str \gamma^g}{(1-\eta_1^g)(1-\eta_2^g)}
$$
where $\eta_l=\E^{2\pi \I/(k_l d)}$ for $l\in\{1,2\}$.

\end{document}